\documentclass{lmcs}

\usepackage{amsmath,amssymb,mathtools}
\usepackage{booktabs,array,tabularx}
\usepackage[hidelinks]{hyperref}

\makeatletter
\def\endfront@text{}
\def\enddoc@text{}
\makeatother

\hypersetup{pdftitle={Finite-Context Semantics in Finitely Supported Structures},pdfauthor={Gabriel Ciobanu},pdfkeywords={finitely supported structures, finite-context semantics, fixed points, data symmetries}}

\theoremstyle{plain}

\newtheorem{proposition}[thm]{Proposition}
\newtheorem{lemma}[thm]{Lemma}
\newtheorem{corollary}[thm]{Corollary}
\theoremstyle{definition}
\newtheorem{definition}[thm]{Definition}
\newtheorem{example}[thm]{Example}
\newtheorem{remark}[thm]{Remark}

\newcommand{\llbracket}{\mathopen{[\![}}
\newcommand{\rrbracket}{\mathclose{]\!]}}
\newcommand{\A}{\mathbb A}

\newcommand{\Pfs}{\mathcal P_{\mathrm{fs}}}
\newcommand{\PS}{\mathcal P_{S}}
\newcommand{\Pred}{\operatorname{Pred}}
\newcommand{\Post}{\operatorname{Post}}
\newcommand{\Pre}{\operatorname{Pre}}
\newcommand{\Ner}{\mathrel{\equiv}}

\newcommand{\Aut}{\operatorname{Aut}}
\newcommand{\Sym}{\operatorname{Sym}}
\newcommand{\im}{\operatorname{im}}

\newcommand{\lfp}{\operatorname{lfp}}
\newcommand{\gfp}{\operatorname{gfp}}

\begin{document}

\title[Finite-Context Semantics in Finitely Supported Structures]{Finite-Context Semantics in Finitely Supported Structures}

\author[Gabriel Ciobanu]{Gabriel Ciobanu \lmcsorcid{0000-0002-8166-9456}}
\address{Romanian Academy, IIT, Ia\c{s}i, Romania}
\email{gabriel.ciobanu@iit.academiaromana-is.ro}

\ACMCCS{Theory of computation~Semantics and reasoning;
Theory of computation~Automata over infinite objects;
Theory of computation~Logic and verification}
\amsclass{03E25, 06B35, 20B27, 68Q45, 68Q55}

\keywords{finitely supported structures, data symmetries, nominal sets, finite-context semantics, internal completeness, fixed-point semantics, uniform finiteness, automata over infinite alphabets, program analysis}

\begin{abstract}
Semantics for systems with names and data often depends on finitely many distinguished values. The theory of finitely supported structures treats this dependence through invariance under permutations fixing a finite context. We develop this approach to semantics over arbitrary permutation groups and arbitrary infinite sets of atoms. The central difficulty is that finitely supported predicate spaces need not be complete lattices. We prove that, for predicates valued in a complete lattice with trivial atom action, every monotone finitely supported transformer nevertheless has least and greatest fixed points. These lie in the complete lattice determined by the transformer's context and coincide with the fixed points of every compatible monotone ambient extension equivariant under its stabilizer. Support-transfer bounds track dependencies through semantic constructions, while uniform finiteness yields finite convergence. For Boolean predicates, $m$ context-stabilizer orbits on the carrier suffice for convergence after at most $m$ iterations, even when the supported predicate lattice is orbit-infinite. We apply these results to automata, operational and modal semantics, abstract interpretation, and resource rewriting. Ultrahomogeneous atom structures in finite relational signatures yield finite cell representations, illustrated by an authorization monitor. The resulting account separates semantic existence, finite convergence, and effective computation.
\end{abstract}

\maketitle

\section{Introduction}

Programs that manipulate names, locations, channels, registers, database values, or symbolic resources operate over domains that are naturally infinite, although each concrete computation usually depends on only finitely many distinguished values. Nominal techniques express this through symmetry: the relevant constructions commute with renamings of atoms, and every object is required to have a finite support \cite{Pitts2013}. The approach is particularly effective for binding, freshness, and $\alpha$-equivalence. Automata-theoretic approaches to infinite alphabets include finite-memory/register automata and data-word logics \cite{KaminskiFrancez1994,Segoufin2006}, symbolic automata~\cite{VeanesBjorner2012}, and automata over data symmetries, in which the group of admissible renamings is itself a parameter \cite{BojanczykKlinLasota2014}.

Many computational semantics are only locally equivariant: they distinguish a finite environment while respecting the data symmetries that fix that environment. A transition rule may depend on the registers currently in scope; a specification may mention selected channels or allocated locations; a security analysis may distinguish finitely many principals. Over a data symmetry $(\A,G)$ with $G\leq\Sym(\A)$, such an object is supported by a finite set $S\subseteq\A$, its \emph{context}: it is invariant under the pointwise stabilizer $G_{(S)}$. The theory of finitely supported structures (FSS) \cite{AlexandruCiobanu2020} treats context-dependent subsets, maps, relations, and semantic transformers of this kind as mathematical objects in their own right, and it studies which classical constructions survive when every object involved, including every intermediate object of a proof, must be finitely supported. This paper brings that viewpoint to program semantics and verification.

\paragraph{Scope and contribution}
The starting point is the FSS programme of Alexandru and Ciobanu: internal completeness, fixed-point theorems, and distinctions between forms of finiteness are developed in \cite{AlexandruCiobanu2020}, with applications to resourse algebra, abstract interpretation, and fuzzy predicates in \cite{AlexandruCiobanu2015Multisets,AlexandruCiobanu2016Abstract,AlexandruCiobanu2018Fuzzy}. We organize these constructions by an explicit finite context and prove the required support, fixed-point, and termination results for arbitrary data symmetries. The main connections are the coincidence of internal, context, and compatible ambient fixed points, and the passage from uniform finiteness to explicit iteration bounds and finite semantic representations.

At a fixed context $S$, supported maps are equivariant for $G_{(S)}$ (Remark~\ref{rem:context-symmetry}), so this formulation is compatible with nominal methods over data symmetries \cite{BojanczykKlinLasota2014}. Standard support closure, inductive definitions, and modal support arguments retain their established attributions \cite{Pitts2006,KlinLelyk2019}. The contribution is a self-contained development connecting these results through FSS, rather than a claim that each constituent closure theorem is new. The paper concerns constructions that preserve a declared context. The separate question of how many parameters must be added to select an implementation is outside its scope.

\paragraph{Three structural properties}
Three questions arise whenever a semantics depends on a finite context, and they are easily conflated. Each is answered by one structural property, stated and proved in Section~\ref{sec:principles}.
\begin{enumerate}[label=(\Roman*)]
\item \emph{Transfer: how is the context of a composite object computed, and how does it grow?}  Supports of composite objects are bounded by the supports of their parts and operations (Propositions~\ref{prop:criterion}, \ref{thm:term-support}, and~\ref{prop:finite-store}), and deterministic evaluation introduces no support dependencies beyond those of its inputs and operations (Proposition~\ref{prop:no-fresh-generation}).
\item \emph{Internal completeness: where do fixed points live?}  The finitely supported powerset need not be a complete lattice, but it is complete for finitely supported families, which are the only families a finitely supported construction produces (Proposition~\ref{prop:internal-completeness}). Every finitely supported monotone transformer on these predicate spaces therefore has least and greatest fixed points; they lie in the complete lattice of predicates supported by the context of the transformer and coincide with those of every compatible monotone ambient extension equivariant under its stabilizer (Proposition~\ref{prop:fs-fixed-points}).
\item \emph{Uniform finiteness: when is a supported semantics finite?}  For Boolean predicates, if the stabilizer of the context has $m$ orbits on the carrier, fixed-point iteration stabilizes after at most $m$ steps, even when the supported predicate lattice has infinitely many orbits (Proposition~\ref{thm:terminating-iteration} and Corollary~\ref{cor:iteration-bound}); for oligomorphic symmetries the orbit hypothesis holds at every context on finite powers of atoms and their invariant subsets (Lemma~\ref{lem:local-oligomorphic}).
\end{enumerate}
Semantic existence needs no orbit-finiteness hypothesis; finite context orbits ensure termination for Boolean predicates (and finite-height quantitative lattices); effectiveness additionally needs computable operations on an effective orbit presentation (Section~\ref{sec:structured}). For the equality symmetry, Properties~II and~III are results of the monograph \cite[Chapters~5--7]{AlexandruCiobanu2020}; here they are proved for every data symmetry, the internal fixed points are identified with the context and ambient ones, and the iteration bound is made explicit. The features of FSS that these properties rest on (arbitrary atom sets, the distinction between internal and ambient choice, internal completeness, and graded notions of finiteness) are summarized in Section~\ref{sec:fss-viewpoint}.

\paragraph{Applications}
Each application section records the property it uses.
\begin{enumerate}[label=(\roman*)]
\item Determinization on the finitely supported powerset and context-relative Nerode quotients, without orbit-finiteness; finite algorithms under a separate orbit hypothesis (Sections~\ref{sec:automata} and~\ref{sec:quotients}).
\item Support of rule-generated transition relations and strong bisimilarity as a greatest fixed point in the lattice of $S$-supported relations, with an explicit iteration bound (Section~\ref{sec:concurrency}).
\item Modal fixed-point semantics, quantitative predicate lattices, Galois connections, rough approximations, and widening/narrowing at a fixed context (Sections~\ref{sec:logic} and~\ref{sec:ai}).
\item Free resource monoids and multiset rewriting over possibly orbit-infinite alphabets (Section~\ref{sec:resources}).
\item Finite cell presentations for homogeneous structures, with explicit counts for equality atoms, dense order, and the random graph, and a worked authorization monitor (Section~\ref{sec:structured}).
\end{enumerate}
Section~\ref{sec:frontier} studies the hypotheses and property used by each construction, and Section~\ref{sec:directions} outlines research directions organized by property.

\paragraph{Reading guide}
Section~\ref{sec:principles} contains the common mathematical core. Readers interested in finite representations can continue directly to Section~\ref{sec:structured}; the intervening application sections can be read independently. Their purpose is to expose the support and finiteness hypotheses of familiar constructions, not to present separate new algorithms.

\paragraph{A modelling example}
In the standard nominal category, transition maps are globally equivariant. A policy-dependent transition on a fixed finite control set, such as a monitor that accepts a login by a trusted principal $p$ followed by access to a protected resource $r$, is not such a morphism on the unchanged carriers (Proposition~\ref{prop:authorization-no-global}). It is represented nominally by threading $(p,r)$ through the state, by naming $p$ and $r$ as constants, or by working in the slice over the orbit of $(p,r)$. In the fixed-context presentation the control set is unchanged and the pair $(p,r)$ appears once, as the support of the transition map. The presentations are equivalent; the difference is one of bookkeeping, not of expressive power. In the terminology of the invariance spectrum of \cite{Ciobanu2026PhilMath}, such a transition map is not globally equivariant but context-relatively structural: it is invariant under the stabilizer of its finite~context.

\section{Data symmetries, contexts, and the finitely supported viewpoint}\label{sec:prelim}

We work over a \emph{data symmetry} in the sense of \cite{BojanczykKlinLasota2014}: a pair $(\A,G)$ consisting of an infinite set $\A$ of atoms and a permutation group $G\leq\Sym(\A)$. The group records which structure of the data is observable. Standard examples are the equality symmetry, where $G$ is the group of all (or of all finitary) permutations of $\A$; the ordered symmetry, where $\A=\mathbb Q$ and $G=\Aut(\mathbb Q,<)$; and the random-graph symmetry, where $G$ is the automorphism group of the countable random graph on $\A$. For $S\subseteq\A$ put $G_{(S)}=\{g\in G:g(a)=a\text{ for every }a\in S\}$. A \emph{context} is a finite set $S\subseteq\A$; it does not change the data symmetry, but selects the atoms on which a particular object may depend. We use the term \emph{finite-context dependency analysis} for the task of computing a context that supports a given expression, transition system, specification, or semantic transformer.

\paragraph{Notation}
For a group $H$ acting on a set $X$, we write $H\backslash X$ for the set of $H$-orbits, and $X^H=\{x\in X:hx=x\text{ for all }h\in H\}$ for the set of $H$-fixed elements. We write $[\A]^{<\omega}$ for the set of finite subsets of $\A$. A group acts on subsets by $gU=\{gx:x\in U\}$, on maps $f:X\to Y$ by conjugation, $(g\cdot f)(x)=g f(g^{-1}x)$, and componentwise on tuples and words; relations are regarded as subsets of products.

\begin{definition}[Support]
Let $X$ be a $G$-set. A set $S\subseteq\A$ \emph{supports} $x\in X$ if $gx=x$ for all $g\in G_{(S)}$. An element is \emph{finitely supported} if some finite set supports it, and $X$ is a \emph{nominal $G$-set} if all its elements are finitely supported. For a context $S$, an element supported by $S$ is called \emph{$S$-supported}.
\end{definition}

Nominal $G$-sets are the objects studied in \cite{BojanczykKlinLasota2014}; for the equality symmetry they are the nominal sets of \cite{Pitts2013}, and in the FSS literature they are called invariant sets \cite{AlexandruCiobanu2020}. For a nominal $G$-set $X$ and a context $S$ define
\begin{gather*}
\Pfs(X)=\{U\subseteq X:U\text{ is finitely supported}\},\\
\PS(X)=\{U\subseteq X:gU=U\text{ for all }g\in G_{(S)}\}.
\end{gather*}

\begin{proposition}[Restriction of the data symmetry]\label{prop:symmetry-refinement}
Let $H\leq G\leq\Sym(\A)$ and let $X$ be a $G$-set. If $S$ supports $x\in X$ with respect to $G$, then $S$ supports $x$ with respect to $H$. In particular, the restriction of a nominal $G$-set to $H$ is a nominal $H$-set.
\end{proposition}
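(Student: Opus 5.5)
The argument is a direct unfolding of the definitions. The one observation needed is that pointwise stabilizers are monotone in the ambient group: for any $S\subseteq\A$,
\[
H_{(S)}=\{h\in H: h(a)=a\text{ for all }a\in S\}=H\cap G_{(S)}\subseteq G_{(S)},
\]
because $H\leq G$. We regard $X$ as an $H$-set by restricting the $G$-action along the inclusion $H\hookrightarrow G$. Then each $h\in H$ acts on $X$ exactly as it does when viewed as an element of $G$. This fact is used implicitly below and has to be stated explicitly.

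\textbf{Step 1.} Let $S$ support $x$ with respect to $G$, so $gx=x$ for every $g\in G_{(S)}$. Take any $h\in H_{(S)}$. By the displayed inclusion, $h\in G_{(S)}$, and hence $hx=x$. Since the $H$-action is the restricted one, this is precisely the statement that $S$ supports $x$ with respect to $H$.

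\textbf{Step 2.} For the ``in particular'' clause, let $X$ be a nominal $G$-set. Then every $x\in X$ has some finite $G$-support $S$. By Step 1, the same finite $S$ supports $x$ with respect to $H$, so every element of the restricted $H$-set is finitely supported. Moreover, $H\leq\Sym(\A)$ is still a permutation group on the same infinite set $\A$, so $(\A,H)$ is a data symmetry. Hence the restriction of $X$ is a nominal $H$-set.

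There is no real obstacle here. The only point to be careful about is the direction of the inclusion: shrinking the group shrinks the stabilizer, and this makes the support condition weaker, not stronger. A side remark worth recording is that least supports, when they exist, can only shrink under restriction, although the statement does not require this.
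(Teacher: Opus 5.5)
Your proposal is correct and is the same argument as the paper's one-line proof, $H_{(S)}=H\cap G_{(S)}\subseteq G_{(S)}$, written out with the restricted action and the ``in particular'' clause made explicit. Your side remark is also right: when least supports exist for both groups, the one for $H$ is contained in the one for $G$.
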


\begin{proof}
$H_{(S)}=H\cap G_{(S)}\subseteq G_{(S)}$.
\end{proof}

Passing to a subgroup therefore preserves all support bounds while admitting more supported predicates and maps; for the hereditarily finitely supported universes $\mathrm{FS}_G(\A)$ this is the monotonicity $\mathrm{FS}_{G}(\A)\subseteq\mathrm{FS}_{H}(\A)$ of \cite[Proposition~26]{Ciobanu2026Synthese}. The case $H=G_{(S)}$ explains the role of contexts.

\begin{remark}[Contexts as data symmetries]\label{rem:context-symmetry}
Fix a context $S$. Since $(G_{(S)})_{(T)}=G_{(S\cup T)}$, a finite set $T$ supports $x$ with respect to $G_{(S)}$ if and only if $S\cup T$ supports $x$ with respect to~$G$. Hence the $S$-supported elements of a nominal $G$-set are exactly its $G_{(S)}$-fixed elements, and, by Proposition~\ref{prop:criterion} below, the $S$-supported maps between nominal $G$-sets are exactly the equivariant maps between the corresponding nominal $(\A,G_{(S)})$-sets. A fixed context is therefore handled by the nominal theory of the data symmetry $(\A,G_{(S)})$. For the equality symmetry, $G_{(S)}$ fixes $S$ pointwise and acts on $\A\setminus S$ as the full (respectively, finitary) symmetric group, so $S$-supported structure is ordinary nominal structure over the atoms $\A\setminus S$, with the atoms of $S$ acting as constants. For the ordered symmetry, $G_{(S)}$ is the automorphism group of $(\mathbb Q,<)$ expanded by constants for the elements of $S$.

For the full category of nominal $G_{(S)}$-sets, let $\bar s$ enumerate $S$, so that the stabilizer of $\bar s$ in $G$ is $G_{(S)}$. Taking the fibre over $\bar s$ and, conversely, the induced set $G\times_{G_{(S)}}Y$ give the classical equivalence between $G$-sets over the orbit $G\bar s$ and $G_{(S)}$-sets. Both constructions preserve finite support, because $[g,y]$ is supported by $g(S\cup T)$ whenever $T$ supports $y$ with respect to $G_{(S)}$. Thus nominal $(\A,G_{(S)})$-sets are equivalent to nominal $G$-sets over $G\bar s$; this is the slice presentation of a context. This equivalence allows induced carriers; it does not assert that restriction on a prescribed collection of unchanged $G$-carriers is itself an equivalence with the full category of nominal $G_{(S)}$-sets.

The explicit parameter tracks dependencies across different contexts: the context of a composite object is computed from those of its parts (Propositions~\ref{prop:criterion} and~\ref{thm:term-support}), it grows along runs that read fresh data (Corollary~\ref{cor:reachable-support}), and it is the level at which classical completeness becomes available (Propositions~\ref{prop:orbit-algebra} and~\ref{prop:internal-completeness}). The slice presentation carries the same information in an enlarged carrier; see the discussion after Proposition~\ref{prop:authorization-no-global}.
\end{remark}

\subsection{The finitely supported viewpoint}\label{sec:fss-viewpoint}

The monograph \cite{AlexandruCiobanu2020} develops finitely supported mathematics as ordinary set theory in which every construction involving atoms must be finitely supported; a classical theorem is available there only once it has been reproved with finitely supported objects, and some classical theorems fail \cite[Sections~1.3--1.4 and Chapter~3]{AlexandruCiobanu2020}. The monograph is formulated for the group of all (finitary) permutations of $\A$. In \cite{Ciobanu2026Synthese} the group is a \emph{calibration parameter} of the finitely supported universe: it determines which choice, orderability, and cardinality principles hold internally, and, when $\A$ is well-orderable in the metatheory, the universe collapses to the whole cumulative hierarchy exactly when $G$ has a finite base \cite[Proposition~33]{Ciobanu2026Synthese}. Four features of the viewpoint are used below.

\paragraph{Arbitrary infinite atom sets}
No countability assumption is made on $\A$ \cite[Section~1.3]{AlexandruCiobanu2020}, and none of the general results of this paper uses one; countability enters only in examples such as the random graph. Proposition~\ref{thm:homogeneous-cells} applies equally to uncountable homogeneous structures such as $(\mathbb R,<)$, natural carriers for real-valued or timed data. Uncountable atom sets also separate finite support from syntax: no fixed countable supported first-order language with countably many parameters defines all finitely supported singletons \cite[Proposition~24 and Corollary~25]{Ciobanu2026Synthese}. Thus finitely supported semantic objects over an uncountable data domain need not all be expressible in one fixed countable specification language, although in the canonical settings each is definable over its own support (Section~\ref{sec:structured}).

\paragraph{Transfer with explicit supports}
The $S$-finite support principle \cite[Section~1.3]{AlexandruCiobanu2020}, corresponding to \cite[Theorem~3.5]{Pitts2006}, states that anything definable in higher-order logic from $S$-supported data by $S$-supported constructions is $S$-supported. We use its constructive form, the \emph{hierarchical construction of supports}: exhibit a finite context for each component and verify the resulting support bound directly. This method makes the dependencies of each semantic construction explicit.

\paragraph{Choice and the metatheory}
We distinguish ordinary reasoning about $G$-sets from choice inside a finitely supported universe. In the equality-atom setting, there is no finitely supported choice function on the family of all two-element subsets of $\A$: a transposition of a pair outside a proposed support contradicts the choice. This concerns a family of finite sets, not a finite family; finite choice remains valid. The monograph \cite[Chapter~3]{AlexandruCiobanu2020} studies failures of choice and related principles in FSS. Their pattern depends on the symmetry group \cite{Ciobanu2026Synthese}. For example, ordered atoms admit an invariant linear order, whereas equality atoms do not.

The semantic proofs below use no choice principle beyond finite choice, which is a theorem of ZF. They concern nominal $G$-sets and supported maps directly; they do not assume that every ambient subset is supported. The general representative presentation in Proposition~\ref{thm:map-representation} is conditional on a supplied transversal. Its finite-orbit instances need only finite choice. No claim is made that an arbitrary orbit transversal is itself finitely supported.

\paragraph{Internal completeness and graded finiteness}
The finitely supported powerset and the finitely supported lattice-valued functions are guaranteed to have joins and meets for finitely supported families \cite[Definition~6.2, Theorems~7.1 and~7.3]{AlexandruCiobanu2020}, and Tarski's theorem holds for them in that internal form \cite[Theorem~6.2]{AlexandruCiobanu2020}. Finiteness splits into inequivalent notions for finitely supported sets \cite[Chapter~9]{AlexandruCiobanu2020}, \cite{AlexandruCiobanu2022Infinity,AlexandruCiobanu2024Finite}; the one that governs termination of fixed-point iteration is the absence of infinite uniformly supported subsets \cite{AlexandruCiobanu2019Uniform,AlexandruCiobanu2020Carpathian}. These are Properties~II and~III.

\section{Three structural properties of finite-context semantics}\label{sec:principles}

This section states the three structural properties that the rest of the paper applies. Property~I concerns supports, Property~II concerns completeness and fixed points, and Property~III concerns termination. The statements specialize to the equality symmetry as indicated, where they are due to \cite{Pitts2006,Pitts2013,AlexandruCiobanu2020}; all proofs are given for an arbitrary data symmetry and use only the existence of some finite support, never least supports or finitary permutations. This matters because least supports are a special feature of particular symmetries and do not follow from oligomorphicity alone \cite[Lemma~8 and the discussion following it]{Ciobanu2026Synthese}.

\subsection{Property I: transfer of supports}\label{sec:principle-transfer}

The following facts are standard; see \cite{Pitts2013} for the equality symmetry. The proofs for an arbitrary data symmetry are the same.

\begin{proposition}[Finite-context criterion]\label{prop:criterion}
A map $f:X\to Y$ between $G$-sets is supported by~$S$ if and only if
$f(gx)=g f(x)$ for all $g\in G_{(S)}$ and $x\in X$.
If $x$ is supported by $T$, then $f(x)$ is supported by $S\cup T$. If $h:Y\to Z$ is supported by~$T$, then $h\circ f$ is supported by $S\cup T$.
\end{proposition}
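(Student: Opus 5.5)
The plan is to unfold the definitions: the conjugation action $(g\cdot f)(x)=g f(g^{-1}x)$ and the definition of support. All three claims then reduce to one-line computations with elements of $G_{(S)}$ and $G_{(S\cup T)}$.

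\textbf{The equivalence.} By definition, $S$ supports $f$ exactly when $g\cdot f=f$ for every $g\in G_{(S)}$, that is, when $g f(g^{-1}x)=f(x)$ for all such $g$ and all $x\in X$. Since $G_{(S)}$ is a subgroup, it is closed under inverses, so $g^{-1}$ ranges over $G_{(S)}$ as $g$ does. For the forward direction, apply the identity with $x$ replaced by $gx$: this gives $g f(x)=f(gx)$. For the converse, assume $f(gx)=gf(x)$ for all $g\in G_{(S)}$. Apply this to $g^{-1}$ and the element $x$, giving $f(g^{-1}x)=g^{-1}f(x)$. Acting with $g$ on both sides yields $gf(g^{-1}x)=f(x)$, which is exactly $g\cdot f=f$. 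The only point needing care is this reindexing, which uses closure of $G_{(S)}$ under inverses.

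\textbf{Application.} Take $g\in G_{(S\cup T)}$. Then $G_{(S\cup T)}=G_{(S)}\cap G_{(T)}$, so $g$ lies in both stabilizers. By the criterion just proved, $g f(x)=f(gx)$. Since $T$ supports $x$, we have $gx=x$, so $g f(x)=f(gx)=f(x)$. Hence $S\cup T$ supports $f(x)$.

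\textbf{Composition.} Take $g\in G_{(S\cup T)}$, which again fixes both $S$ and $T$ pointwise. Then
\[
h(f(gx))=h(gf(x))=g\,h(f(x)),
\]
where the first equality uses the criterion for $f$ (as $g\in G_{(S)}$) and the second uses the criterion for $h$ (as $g\in G_{(T)}$). By the criterion applied to $h\circ f$ and the context $S\cup T$, the composite is supported by $S\cup T$.

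No step is a real obstacle. Nothing here uses least supports, finitary permutations, or finiteness of $S$ and $T$, so the argument is valid for every data symmetry, as Section~\ref{sec:principles} requires.
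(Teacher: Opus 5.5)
Your proof is correct and follows the paper's argument: expand $g\cdot f=f$ to obtain the criterion, using closure of $G_{(S)}$ under inverses for the reindexing. Then use $G_{(S\cup T)}\subseteq G_{(S)}\cap G_{(T)}$ to get the support of $f(x)$, and apply the criterion twice for the composite.
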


\begin{proof}
The first assertion is obtained by expanding $g\cdot f=f$. If $k\in G_{(S\cup T)}$, then $kx=x$ and $k\cdot f=f$, hence $kf(x)=f(kx)=f(x)$. The composition statement follows by applying the first assertion twice.
\end{proof}

The empty context recovers global equivariance. A finite support identifies parameters that admissible renamings must fix for evaluation to commute with renaming. For the full symmetric group, Proposition~\ref{prop:criterion} is the Local Invariance Theorem of \cite[Theorem~8]{Ciobanu2026PhilMath}, where the passage from global equivariance to invariance under the stabilizer of a finite context is organized as an invariance spectrum governed by a Galois connection between symmetry groups and the operations they fix; the operations fixed by $\Sym(\A)_{(F)}$ are exactly those supported by $F$ \cite[Proposition~12]{Ciobanu2026PhilMath}. The context lattices $\PS(X)$ of Section~\ref{sec:principle-completeness} are the predicate-level counterpart of these local vocabularies. When the data symmetry admits least supports \cite{BojanczykKlinLasota2014}, this set has a canonical minimum.

\begin{proposition}[Support bound for algebraic expressions]\label{thm:term-support}
Let $\tau$ be a finitary signature and let $\mathcal A$ be a $G$-set equipped, for every $k$-ary operation symbol $\omega\in\tau$, with a finitely supported map $\omega^{\mathcal A}:\mathcal A^k\to\mathcal A$. Fix a finite term $t(x_1,\ldots,x_n)$. Choose a finite support $S_\omega$ for every primitive operation occurring in $t$ and a finite support $T_i$ for each value $a_i\in\mathcal A$. Then
\[
\bigcup_{\omega\text{ occurs in }t}S_\omega
\ \cup\ 
\bigcup_{i=1}^{n}T_i
\]
supports $t^{\mathcal A}(a_1,\ldots,a_n)$.
\end{proposition}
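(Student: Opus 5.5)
The plan is structural induction on the term $t$, proving a slightly stronger statement that carries the needed sets along. For every subterm $s$ of $t$, write $\Omega_s$ for the set of operation symbols occurring in $s$ and $V_s\subseteq\{1,\ldots,n\}$ for the indices of variables occurring in $s$. Put
\[
R_s=\bigcup_{\omega\in\Omega_s}S_\omega\ \cup\ \bigcup_{i\in V_s}T_i .
\]
We show that $R_s$ supports $s^{\mathcal A}(a_1,\ldots,a_n)$. Since $\Omega_s\subseteq\Omega_t$ and $V_s\subseteq\{1,\ldots,n\}$, we have $R_t\subseteq\bigcup_{\omega\text{ occurs in }t}S_\omega\cup\bigcup_i T_i$. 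Supports are upward closed, because $G_{(R')}\subseteq G_{(R)}$ whenever $R\subseteq R'$. So the claim follows from the case $s=t$.

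In the base case $s=x_i$, the value is $a_i$, and $R_s=T_i$ supports it by hypothesis. For the inductive step, let $s=\omega(s_1,\ldots,s_k)$. By induction each value $b_j=s_j^{\mathcal A}(\bar a)$ is supported by $R_{s_j}$. First observe that the tuple $(b_1,\ldots,b_k)\in\mathcal A^k$, with the componentwise action, is supported by $\bigcup_j R_{s_j}$: any $g$ in $G_{(\bigcup_j R_{s_j})}$ lies in every $G_{(R_{s_j})}$ and so fixes every component. Then apply Proposition~\ref{prop:criterion} to the map $\omega^{\mathcal A}$, which is supported by $S_\omega$, evaluated at a point supported by $\bigcup_j R_{s_j}$. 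This shows that $s^{\mathcal A}(\bar a)=\omega^{\mathcal A}(b_1,\ldots,b_k)$ is supported by $S_\omega\cup\bigcup_j R_{s_j}=R_s$. Nullary symbols ($k=0$) need no separate treatment: the empty tuple is supported by $\emptyset$, so the constant is supported by $S_\omega$.

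No step is a real obstacle. The only points needing care are the two bookkeeping facts used above: supports are monotone under enlargement, and the support of a tuple is the union of the supports of its components. Both follow directly from the definition of support via pointwise stabilizers, $G_{(R\cup R')}=G_{(R)}\cap G_{(R')}$. The argument never uses least supports or finitary permutations, so it holds for an arbitrary data symmetry, as the section requires.
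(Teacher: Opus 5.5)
Your proposal is correct and follows the paper's proof. It uses structural induction on the term, handling variables by the chosen supports $T_i$ and composite terms $\omega(s_1,\ldots,s_k)$ by applying Proposition~\ref{prop:criterion} to $\omega^{\mathcal A}$ together with the inductive support bounds of the subterms. The extra bookkeeping you supply is what the paper's two-line argument leaves implicit: upward closure of supports, the support of a tuple as the union of component supports via $G_{(R\cup R')}=G_{(R)}\cap G_{(R')}$, and the treatment of nullary symbols.
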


\begin{proof}
Induct on the construction of the term. Variables use the chosen supports $T_i$. At a composite term $\omega(t_1,\ldots,t_k)$, apply Proposition~\ref{prop:criterion} to $\omega^{\mathcal A}$ and the inductive support bounds of the subterms.
\end{proof}

Proposition~\ref{thm:term-support} is the basic soundness statement for finite-context dependency analysis, and an instance of the finite support principle: evaluating a finite expression cannot introduce dependence on atoms absent from its inputs and primitive operations. Supports of subexpressions are combined by union.

\begin{proposition}[Finite stores and environments]\label{prop:finite-store}
Let $V$ be a nominal $G$-set. The set $\operatorname{Env}_{\mathrm{fin}}(\A,V)$ of finite partial maps $\rho:\A\rightharpoonup V$, with action defined by
\[
\operatorname{dom}(g\rho)=g\operatorname{dom}(\rho),\qquad
(g\rho)(a)=g\bigl(\rho(g^{-1}a)\bigr)
\quad(a\in g\operatorname{dom}(\rho)),
\]
is a nominal $G$-set. If $D=\operatorname{dom}(\rho)$ and $T_a$ supports $\rho(a)$ for $a\in D$, then $D\cup\bigcup_{a\in D}T_a$
supports $\rho$. In particular, if $V$ has trivial action, the finite domain $D$ supports the environment.
\end{proposition}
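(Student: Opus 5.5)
The plan is to check three things in turn: the displayed formula defines a $G$-action on $\operatorname{Env}_{\mathrm{fin}}(\A,V)$; the stated set supports $\rho$; and, since that set is finite, every environment is finitely supported.

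\emph{Well-definedness and action axioms.} For $g\in G$ and a finite partial map $\rho$ with domain $D$, the map $g\rho$ has domain $gD$, which is finite. For $a\in gD$ we have $g^{-1}a\in D$, so $g(\rho(g^{-1}a))$ is defined and lies in $V$. The identity clearly acts trivially. For compatibility, $\operatorname{dom}(h(g\rho))=hgD$. For $a\in hgD$,
\[
(h(g\rho))(a)=h\bigl((g\rho)(h^{-1}a)\bigr)=h\,g\,\rho(g^{-1}h^{-1}a)=((hg)\rho)(a).
\]
This uses only that $V$ is a $G$-set. In fact this is exactly the conjugation action on partial maps from the Notation paragraph, restricted to finite graphs. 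Equivalently, one can view $\rho$ as its finite graph $\{(a,\rho(a)):a\in D\}\subseteq\A\times V$, acted on componentwise.

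\emph{Support bound.} Put $S=D\cup\bigcup_{a\in D}T_a$ and let $k\in G_{(S)}$. Since $k$ fixes $D$ pointwise, $kD=D$, so $k\rho$ and $\rho$ have the same domain. For $a\in D$ we have $k^{-1}a=a$, hence $(k\rho)(a)=k(\rho(a))$. Because $k$ fixes $T_a$ pointwise, $k\in G_{(T_a)}$, and $T_a$ supports $\rho(a)$, so $k(\rho(a))=\rho(a)$. Therefore $k\rho=\rho$.

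\emph{Nominality and the trivial-action case.} Since $V$ is nominal, each $\rho(a)$ has some finite support $T_a$. As $D$ is finite, $S$ is a finite union of finite sets, so every $\rho$ is finitely supported and $\operatorname{Env}_{\mathrm{fin}}(\A,V)$ is a nominal $G$-set. If $V$ has trivial action, then $T_a=\emptyset$ supports every value, and $S$ reduces to $D$.

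The only point needing care is the bookkeeping with $g^{-1}$ in the composition law; the rest is routine. Only some finite support is used, never least supports, as the section requires.
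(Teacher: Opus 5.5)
Your proof is correct and follows essentially the same route as the paper: an element of $G_{(S)}$ fixes every address in $D$ and every stored value, hence fixes $\rho$. The paper phrases this as fixing the graph of $\rho$ and obtains the action by transport of graphs, whereas you check the action law directly; you also spell out the finiteness of $D\cup\bigcup_{a\in D}T_a$, which the paper leaves implicit.
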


\begin{proof}
The displayed set is finite. A permutation fixing it pointwise fixes every address in the domain and every stored value, hence fixes the graph of~$\rho$. Transport of graphs gives the stated group action.
\end{proof}

Finite stores, register files, typing environments, channel environments, and finite security policies are therefore finitely supported objects with explicit support bounds. The results below do not require least supports. When the chosen symmetry admits least supports, they provide canonical minimal contexts and may improve symbolic implementations; all semantic closure arguments in this paper use only the existence of some finite support.

The following consequence of Proposition~\ref{prop:criterion} bounds the outputs of a deterministic enumeration with a fixed context.

\begin{proposition}[A context does not generate fresh data]\label{prop:no-fresh-generation}
Let $X$ be a $G$-set and let $f:\mathbb N\to X$ be supported by $S$, where $\mathbb N$ carries the trivial action. Then every value $f(n)$ is $S$-supported. Consequently, if the set $X^{G_{(S)}}$ of $S$-supported elements of $X$ is finite, then $f$ has finite image and is not injective.
\end{proposition}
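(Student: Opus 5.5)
The plan is to apply Proposition~\ref{prop:criterion} to $f$ viewed as a map between $G$-sets, with $\mathbb N$ carrying the trivial action. Every $n\in\mathbb N$ is fixed by every element of $G$, so $n$ is supported by the empty set. The second clause of Proposition~\ref{prop:criterion} then gives that $f(n)$ is supported by $S\cup\emptyset=S$. A direct check avoids even that citation. By the first clause of Proposition~\ref{prop:criterion}, for every $g\in G_{(S)}$ we have
\[
gf(n)=f(gn)=f(n),
\]
so $f(n)\in X^{G_{(S)}}$. This is exactly the statement that $f(n)$ is $S$-supported, using Remark~\ref{rem:context-symmetry}, or simply the definition of support.

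For the consequence, the first part shows $\im f\subseteq X^{G_{(S)}}$. If $X^{G_{(S)}}$ is finite, the image of $f$ is therefore finite. A map from the infinite set $\mathbb N$ to a finite set cannot be injective, by the pigeonhole principle, which holds in ZF without any choice. So there exist $n\neq n'$ with $f(n)=f(n')$. Read operationally, this means a deterministic enumeration with a fixed context eventually repeats.

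I expect no real obstacle here. The only point needing care is the convention for the action on maps. Support of $f$ means $g\cdot f=f$ for the conjugation action, and this must be unfolded to the equivariance identity $f(gn)=gf(n)$, which is precisely the first clause of Proposition~\ref{prop:criterion}. It is also worth noting that the argument does not assume $X$ is nominal, since only the support of $f$ is used.
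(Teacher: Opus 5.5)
Your proof is correct and is the paper's argument. Both use Proposition~\ref{prop:criterion} to get $gf(n)=f(gn)=f(n)$ for $g\in G_{(S)}$, hence $f(\mathbb N)\subseteq X^{G_{(S)}}$. Finiteness of the image and non-injectivity then follow by the choice-free pigeonhole principle, which you spell out and the paper leaves implicit.
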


\begin{proof}
For $g\in G_{(S)}$, Proposition~\ref{prop:criterion} gives $gf(n)=f(gn)=f(n)$. Hence $f(\mathbb N)\subseteq X^{G_{(S)}}$.
\end{proof}

For $X=\A$ and the equality, ordered, and random-graph symmetries of Section~\ref{sec:structured}, $\A^{G_{(S)}}=S$: every atom outside $S$ is moved by some element of $G_{(S)}$. Thus $\A$ is infinite, yet no finitely supported map $\mathbb N\to\A$ is injective; for the equality symmetry this is \cite[Theorem~10.1(11)]{AlexandruCiobanu2020}, and it says that $\A$ is not Dedekind infinite in the finitely supported sense \cite[Chapter~9]{AlexandruCiobanu2020}. Operationally, a deterministic procedure fixed by a finite context and driven by an ordinary counter emits only atoms of its context. Fresh atoms enter a computation through its inputs, as in Corollary~\ref{cor:reachable-support} below, or through a nondeterministic choice from an orbit, as in the name-generation rules of process calculi.

\subsection{Property II: internal completeness and fixed points}\label{sec:principle-completeness}

Once a context is fixed, the subsets it supports form a complete Boolean algebra.

\begin{proposition}[Context lattices]\label{prop:orbit-algebra}
Let $X$ be a nominal $G$-set and $S$ a context. Then $\PS(X)=\mathcal P(X)^{G_{(S)}}$ is closed under arbitrary unions, arbitrary intersections, and complements, and
\[
\PS(X)\cong\mathcal P(G_{(S)}\backslash X),
\qquad
U\longmapsto\{O\in G_{(S)}\backslash X:O\subseteq U\},
\]
is an isomorphism of complete Boolean algebras. We call $\PS(X)$ the \emph{context lattice} of $S$ on~$X$. Moreover,
\[
\Pfs(X)=\bigcup_{S\in[\A]^{<\omega}}\PS(X),
\]
and this union is directed, since $\PS(X)\cup\mathcal P_T(X)\subseteq\mathcal P_{S\cup T}(X)$.
\end{proposition}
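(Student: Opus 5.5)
The proof is elementary: everything reduces to the fact that $G_{(S)}$ acts on subsets by bijections that commute with Boolean operations. I would proceed in three steps.

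\textbf{Step 1: identification and closure.} By definition, $U\in\PS(X)$ means $gU=U$ for all $g\in G_{(S)}$, i.e.\ $U\in\mathcal P(X)^{G_{(S)}}$; equivalently, $S$ supports $U$ as an element of the $G$-set $\mathcal P(X)$. For any family $(U_i)_{i\in I}$ in $\PS(X)$ and $g\in G_{(S)}$, the map $V\mapsto gV$ is a bijection of $X$ applied pointwise. It therefore preserves unions, intersections and complements:
\[
g\bigcup_i U_i=\bigcup_i gU_i=\bigcup_i U_i,\qquad g\bigcap_i U_i=\bigcap_i U_i,\qquad g(X\setminus U)=X\setminus gU=X\setminus U .
\]
No support condition on the indexing family is needed, since the family lies inside a set of $G_{(S)}$-fixed points. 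This is exactly what makes $\PS(X)$ a complete Boolean subalgebra of $\mathcal P(X)$, in contrast to $\Pfs(X)$.

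\textbf{Step 2: the isomorphism with $\mathcal P(G_{(S)}\backslash X)$.} The key observation is that an invariant set is a union of orbits. If $x\in U$ and $U$ is $G_{(S)}$-invariant, then $G_{(S)}x\subseteq U$. Hence $U=\bigcup\{O:O\subseteq U\}$, and the displayed map $\Phi$ is injective with inverse $\Psi(\mathcal O)=\bigcup\mathcal O$. Conversely, any union of orbits is invariant, so $\Psi$ lands in $\PS(X)$. Moreover $\Phi(\Psi(\mathcal O))=\mathcal O$, because distinct orbits are disjoint: an orbit contained in $\bigcup\mathcal O$ meets, and therefore equals, some member of $\mathcal O$.

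Both $\Phi$ and $\Psi$ are monotone, so $\Phi$ is an order isomorphism. An order isomorphism between complete Boolean algebras automatically preserves arbitrary joins, meets and complements, so no separate computation is needed. For complements one can also check directly that $\Phi(X\setminus U)$ is the set of orbits disjoint from $U$, which is the complement of $\Phi(U)$ because each orbit lies entirely inside or entirely outside $U$.

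\textbf{Step 3: the union over contexts and directedness.} By definition, $U\in\Pfs(X)$ iff some finite $S$ supports $U$, i.e.\ $U\in\PS(X)$ for some $S\in[\A]^{<\omega}$; this gives the displayed equality. For directedness, $S\subseteq S\cup T$ gives $G_{(S\cup T)}\subseteq G_{(S)}$, so $\PS(X)\subseteq\mathcal P_{S\cup T}(X)$, and symmetrically for $T$.

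There is no genuine obstacle. The only points requiring care are the bijectivity of $\Phi$, which rests on orbits partitioning $X$, and the remark that closure under \emph{arbitrary} families is legitimate here because invariance, not finite support of the family, is what is being preserved. Finally, the hypothesis that $X$ is nominal is used only for the union formula; the other statements hold for any $G$-set.
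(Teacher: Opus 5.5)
Your proof is correct and follows the paper's argument: $G_{(S)}$-invariant subsets are exactly unions of $G_{(S)}$-orbits, which gives the isomorphism with $\mathcal P(G_{(S)}\backslash X)$, and direct image along each $g\in G_{(S)}$ commutes with arbitrary unions, intersections and complements; you also write out the inverse map, the disjointness of orbits, and the directedness, which the paper leaves implicit. One small correction to your closing remark: the union formula is also purely definitional (a subset is finitely supported exactly when it lies in some $\PS(X)$), so the nominality of $X$ is not used anywhere in the proof.
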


\begin{proof}
An $S$-supported subset is exactly a union of $G_{(S)}$-orbits. Sending such a union to the corresponding subset of the orbit set gives the Boolean isomorphism. Arbitrary unions and intersections of $G_{(S)}$-invariant subsets remain $G_{(S)}$-invariant.
\end{proof}

For a complete lattice $L$ with the trivial $G$-action, define
\[
\Pred_S(X,L)=\{p:X\to L:\; p(gx)=p(x) \text{for all }g\in G_{(S)},\ x\in X\},
\]
and let
\[
\Pred_{\mathrm{fs}}(X,L)=\bigcup_{T\in[\A]^{<\omega}}\Pred_T(X,L)
\]
be the $G$-set of finitely supported $L$-valued predicates, with the conjugation action $(gp)(x)=p(g^{-1}x)$ and the pointwise order. For $L=\{0,1\}$ these are $\PS(X)$ and $\Pfs(X)$. A family $\mathcal F\subseteq\Pfs(X)$ or $\mathcal F\subseteq\Pred_{\mathrm{fs}}(X,L)$ is \emph{supported by $S$} if $g\mathcal F=\mathcal F$ for all $g\in G_{(S)}$.

\begin{proposition}[Internal completeness]\label{prop:internal-completeness}
Let $X$ be a nominal $G$-set and $L$ a complete lattice with trivial action. If a family $\mathcal F\subseteq\Pred_{\mathrm{fs}}(X,L)$ is supported by $S$, then its pointwise join and meet belong to $\Pred_S(X,L)$, with the convention that the empty meet is the top predicate. In particular, the union and the intersection of an $S$-supported family of finitely supported subsets of $X$ belong to $\PS(X)$. Thus $\Pfs(X)$ and $\Pred_{\mathrm{fs}}(X,L)$ are complete with respect to finitely supported families, and at every fixed context $S$ the lattices $\PS(X)$ and $\Pred_S(X,L)$ are complete in the classical sense.
\end{proposition}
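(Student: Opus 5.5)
The plan is to compute the pointwise join $J(x)=\bigvee_{p\in\mathcal F}p(x)$ and meet $M(x)=\bigwedge_{p\in\mathcal F}p(x)$ in the complete lattice $L$, and then check directly that both are $G_{(S)}$-invariant. These are ordinary ambient functions $X\to L$; they exist by completeness of $L$ in the metatheory. The only thing to prove is that they are $S$-supported, i.e.\ that $J(gx)=J(x)$ and $M(gx)=M(x)$ for all $g\in G_{(S)}$ and $x\in X$. For the empty family, $J$ is the constant bottom predicate and $M$ is the constant top predicate. Constant predicates are supported by the empty set, hence by $S$, since $L$ carries the trivial action. This accounts for the stated convention.

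The key step is a reindexing. Fix $g\in G_{(S)}$. Since $g\mathcal F=\mathcal F$ and $g^{-1}\in G_{(S)}$, the map $p\mapsto g^{-1}p$ is a bijection of $\mathcal F$ onto itself, with $(g^{-1}p)(x)=p(gx)$. Therefore
\[
J(gx)=\bigvee_{p\in\mathcal F}p(gx)=\bigvee_{p\in\mathcal F}(g^{-1}p)(x)=\bigvee_{q\in g^{-1}\mathcal F}q(x)=\bigvee_{q\in\mathcal F}q(x)=J(x),
\]
and the same computation works for $M$. Hence $J,M\in\Pred_S(X,L)\subseteq\Pred_{\mathrm{fs}}(X,L)$. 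Because they are pointwise joins and meets, they are the least upper and greatest lower bounds of $\mathcal F$ in $\Pred_{\mathrm{fs}}(X,L)$, and also in $\Pred_S(X,L)$. Note that the members of $\mathcal F$ may have supports larger than $S$, or no common support at all. The argument uses only invariance of the family, never the individual supports of its members.

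The remaining claims follow as corollaries.
\begin{itemize}
\item \emph{Subsets.} Take $L=\{0,1\}$. Joins and meets become unions and intersections, which therefore lie in $\PS(X)$, in agreement with Proposition~\ref{prop:orbit-algebra}.
\item \emph{Completeness for finitely supported families.} Every finitely supported family is supported by some finite $S$, so its bounds exist in $\Pfs(X)$ and $\Pred_{\mathrm{fs}}(X,L)$.
\item \emph{Classical completeness at a fixed context.} Any subfamily $\mathcal F\subseteq\Pred_S(X,L)$ is itself supported by $S$, because $gp=p$ for each member and each $g\in G_{(S)}$. So the main step applies to every family in $\Pred_S(X,L)$, and $\Pred_S(X,L)$ is a complete lattice in the classical sense.
\end{itemize}

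I do not expect a genuine obstacle. The one point needing care is the direction of the action in the reindexing step: with $(gp)(x)=p(g^{-1}x)$, one must use $g^{-1}$ together with the fact that $G_{(S)}$ is a subgroup, so that invariance $g\mathcal F=\mathcal F$ also gives $g^{-1}\mathcal F=\mathcal F$.
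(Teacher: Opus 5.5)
Your proof is correct and is essentially the paper's argument: the invariance $g\mathcal F=\mathcal F$ lets you reindex the pointwise join and meet. The paper computes $(g\bigvee\mathcal F)(x)$ directly, while you compute $J(gx)$ via $g^{-1}$; these are equivalent because $G_{(S)}$ is a group. The only small difference is the final claim: the paper cites the orbit isomorphisms of Propositions~\ref{prop:orbit-algebra} and~\ref{thm:context-function-lattice}, whereas you obtain classical completeness of $\Pred_S(X,L)$ by noting that every subfamily of $\Pred_S(X,L)$ is itself $S$-supported, which is self-contained and avoids the forward reference.
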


\begin{proof}
Let $g\in G_{(S)}$ and $x\in X$. Since $L$ has trivial action,
\[
\Bigl(g\bigvee\mathcal F\Bigr)(x)=\bigvee_{q\in\mathcal F}q(g^{-1}x)=\bigvee_{q\in\mathcal F}(gq)(x)=\bigvee_{q'\in g\mathcal F}q'(x)=\Bigl(\bigvee\mathcal F\Bigr)(x),
\]
because $g\mathcal F=\mathcal F$. The same computation applies to meets, and the top predicate is equivariant. The statement about subsets is the case $L=\{0,1\}$. Completeness at a fixed context is Proposition~\ref{prop:orbit-algebra} and its $L$-valued analogue, Proposition~\ref{thm:context-function-lattice} below.
\end{proof}

For the equality symmetry, Proposition~\ref{prop:internal-completeness} is \cite[Theorems~7.1 and~7.3]{AlexandruCiobanu2020}: $\Pfs(X)$ and $\Pred_{\mathrm{fs}}(X,L)$ are \emph{invariant complete lattices}, that is, every finitely supported subset has a least upper bound \cite[Definition~6.2]{AlexandruCiobanu2020}. They need not be complete lattices in the classical sense. For the equality symmetry, the finitely supported subsets of $\A$ are the finite and the cofinite ones \cite[Theorem~2.2]{AlexandruCiobanu2020}, so, for example when $\A$ is countably infinite in the metatheory, the family of singletons $\{a\}$ indexed by an infinite and coinfinite subset has no least upper bound in $\Pfs(\A)$ \cite[Proposition~7.1]{AlexandruCiobanu2020}; see also \cite[Remark~4.8]{KlinLelyk2019}. That family is not finitely supported, so supported operations applied to finitely supported data cannot produce it. Internal completeness is therefore the form of completeness that finitely supported semantics actually uses. The next observation is standard; it is the lattice-theoretic core of support arguments for fixed points such as \cite[Lemma~4.9]{KlinLelyk2019}.

\begin{lemma}[Fixed-point localization]\label{thm:fixed-point-localization}
Let a group $H$ act on a complete lattice $L$ by complete-lattice automorphisms, and let $L^H$ be the sublattice of $H$-fixed elements. Then $L^H$ is complete. If $F:L\to L$ is monotone and $H$-equivariant, then $F$ restricts to a monotone endomap of $L^H$, and \ $\lfp_L(F)=\lfp_{L^H}(F|_{L^H})$ and 
$\gfp_L(F)=\gfp_{L^H}(F|_{L^H})$.
In particular, the ambient least and greatest fixed points are $H$-fixed.
\end{lemma}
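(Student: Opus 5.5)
The plan is to treat the three assertions in order: completeness of $L^H$, restriction of $F$, and then the coincidence of fixed points. The coincidence follows from the Knaster--Tarski characterizations together with the observation that each ambient extremal fixed point is already $H$-fixed.

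\emph{Completeness of $L^H$.} Let $A\subseteq L^H$ and $h\in H$. Since $h$ acts as a complete-lattice automorphism, $h\bigvee A=\bigvee hA$. Every element of $A$ is fixed by $h$, so $hA=A$ and hence $h\bigvee A=\bigvee A$. The meet is handled in the same way. The empty join and empty meet are $\bot$ and $\top$, and these are fixed by any lattice automorphism. Thus $L^H$ is closed under arbitrary joins and meets computed in $L$, so it is a complete sublattice. For restriction, take $x\in L^H$; equivariance gives $hF(x)=F(hx)=F(x)$, so $F(x)\in L^H$, and $F|_{L^H}$ is monotone because the order is inherited from $L$.

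\emph{Ambient fixed points are $H$-fixed.} Put $\mu=\lfp_L(F)$ and fix $h\in H$. By equivariance, $F(h\mu)=hF(\mu)=h\mu$, so $h\mu$ is a fixed point and therefore $\mu\le h\mu$. The same argument with $h^{-1}$ gives $\mu\le h^{-1}\mu$. Applying the order automorphism $h$ to this inequality yields $h\mu\le\mu$, hence $h\mu=\mu$. The dual argument shows that $\nu=\gfp_L(F)$ is $H$-fixed.

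\emph{Coincidence.} Since $\mu\in L^H$ is a fixed point of $F|_{L^H}$, we have $\lfp_{L^H}(F|_{L^H})\le\mu$. Conversely, every fixed point of $F|_{L^H}$ is in particular a fixed point of $F$ in $L$, so it lies above $\mu$. Therefore $\lfp_{L^H}(F|_{L^H})=\mu$. The dual argument gives the statement for greatest fixed points.

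There is no serious obstacle in this argument. The only point needing care is the step $h\mu\le\mu$: it requires that each $h$ act as an order automorphism, so that the inverse action is also monotone. This is exactly what the hypothesis that $H$ acts by complete-lattice automorphisms provides.
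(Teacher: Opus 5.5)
Your proof is correct and follows essentially the same route as the paper: $L^H$ is closed under joins and meets, $F$ preserves $L^H$ by equivariance, and $h\,\lfp_L(F)$ is again a fixed point and so must equal $\lfp_L(F)$. Your explicit use of $h^{-1}$ to obtain $h\mu\le\mu$ simply spells out the paper's remark that an order automorphism permutes the fixed points and preserves their least element.
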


\begin{proof}
Arbitrary joins and meets of $H$-fixed elements are again $H$-fixed because every $h\in H$ preserves all joins and meets. Thus $L^H$ is complete. Equivariance of $F$ gives $F(L^H)\subseteq L^H$. If $p=\lfp_L(F)$ and $h\in H$, then $F(hp)=hF(p)=hp$,
so $hp$ is a fixed point of $F$. Since $h$ is an order automorphism, it permutes the fixed points and preserves their least element; hence $hp=p$. Therefore $p\in L^H$, and it is also the least fixed point of the restriction. The greatest-fixed-point argument is dual.
\end{proof}

Lemma~\ref{thm:fixed-point-localization} is used repeatedly below: whenever the transformer defining reachability, bisimilarity, a modal fixed point, or an abstract invariant is $G_{(S)}$-equivariant on the full powerset, its ambient least and greatest fixed points already lie in the context lattice. The same holds for lattice-valued predicates.

\begin{proposition}[Context function lattice]\label{thm:context-function-lattice}
For every complete lattice $L$ with trivial atom action,
\[
\Pred_S(X,L)\cong L^{G_{(S)}\backslash X}
\]
as complete lattices, with pointwise operations. Hence every monotone $S$-supported transformer $F:\Pred_{\mathrm{fs}}(X,L)\to\Pred_{\mathrm{fs}}(X,L)$ restricts to a monotone endomap of $\Pred_S(X,L)$ and has least and greatest fixed points there. If the transformer extends to a monotone $G_{(S)}$-equivariant endomap of the full function lattice $L^X$, the context fixed points coincide with the ambient ones.
\end{proposition}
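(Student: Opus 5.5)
The plan has three steps: the lattice isomorphism, the restriction of $F$ to $\Pred_S(X,L)$ together with Tarski's theorem, and the comparison with ambient fixed points via Lemma~\ref{thm:fixed-point-localization}.

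\emph{The isomorphism.} A map $p:X\to L$ lies in $\Pred_S(X,L)$ exactly when it is constant on every $G_{(S)}$-orbit. Hence the assignment
\[
p\longmapsto\bigl(O\mapsto p(x_O)\bigr),\qquad x_O\in O,
\]
is well defined. It does not depend on the choice of representative, so no choice principle is used. Its inverse sends $\phi:G_{(S)}\backslash X\to L$ to $x\mapsto\phi(G_{(S)}x)$. Both maps are monotone for the pointwise orders, so we obtain an order isomorphism onto $L^{G_{(S)}\backslash X}$, which is a complete lattice. Concretely, pointwise joins and meets of orbit-constant maps are again orbit-constant. This is the $L$-valued version of the argument for Proposition~\ref{prop:orbit-algebra}, and it also completes the forward reference in Proposition~\ref{prop:internal-completeness}. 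As a by-product, $\Pred_S(X,L)\subseteq\Pred_{\mathrm{fs}}(X,L)$, since $S$ is finite.

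\emph{Restriction and existence.} Let $F$ be supported by $S$. By Proposition~\ref{prop:criterion} with $T=S$, for $p\in\Pred_S(X,L)$ the value $F(p)$ is $S$-supported. The action on predicates is $(gp)(x)=p(g^{-1}x)$, and being $S$-supported means $gq=q$ for all $g\in G_{(S)}$, which is exactly the defining condition of $\Pred_S$. Thus $F$ maps $\Pred_S(X,L)$ into itself, and the restriction is monotone because the order is the restricted pointwise order. Since $\Pred_S(X,L)$ is a complete lattice, Knaster--Tarski gives its least and greatest fixed points.

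\emph{Comparison with ambient fixed points.} Suppose $\hat F:L^X\to L^X$ is monotone, $G_{(S)}$-equivariant and extends $F$. Let $H=G_{(S)}$ act on $L^X$ by $(hp)(x)=p(h^{-1}x)$. Each $h$ acts as a complete-lattice automorphism, since pointwise joins and meets commute with precomposition by a bijection. Moreover $(L^X)^H=\Pred_S(X,L)$. Lemma~\ref{thm:fixed-point-localization} then gives $\lfp_{L^X}(\hat F)=\lfp_{\Pred_S}(\hat F|_{\Pred_S})$, and dually for greatest fixed points. Since $\hat F|_{\Pred_S}=F|_{\Pred_S}$, these are the context fixed points found in the previous step.

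The main point of care is the second step. One has to confirm that support of $F$, taken as an element of the $G$-set of maps on $\Pred_{\mathrm{fs}}(X,L)$ with the conjugation action, really yields $F(gp)=gF(p)$; this is where Proposition~\ref{prop:criterion} is used. One also has to keep in mind that $\Pred_{\mathrm{fs}}$ itself is not complete, so the fixed points must be obtained inside $\Pred_S$ and not in $\Pred_{\mathrm{fs}}$. No further obstacle is expected.
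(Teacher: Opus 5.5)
Your proposal is correct and follows the paper's proof step for step. Orbit-constancy gives the factorization through $G_{(S)}\backslash X$ and hence the product lattice $L^{G_{(S)}\backslash X}$. Proposition~\ref{prop:criterion} shows that $F$ preserves $\Pred_S(X,L)$, where Knaster--Tarski applies, and Lemma~\ref{thm:fixed-point-localization} with $H=G_{(S)}$ and $(L^X)^H=\Pred_S(X,L)$ gives coincidence with the ambient fixed points. Your extra checks, that $G_{(S)}$ acts on $L^X$ by complete-lattice automorphisms and that $\hat F$ agrees with $F$ on $\Pred_S(X,L)$, are details the paper leaves implicit.
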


\begin{proof}
An $S$-supported map $p:X\to L$ is constant on each $G_{(S)}$-orbit because~$L$ has trivial action. It therefore factors uniquely through the orbit projection $X\to G_{(S)}\backslash X$. Conversely, every map on the orbit set lifts to an $S$-supported map. Products of complete lattices are complete. Proposition~\ref{prop:criterion} shows that an $S$-supported transformer preserves $\Pred_S(X,L)$, and Knaster--Tarski gives fixed points in that complete lattice. Under the extension hypothesis, ambient coincidence follows from Lemma~\ref{thm:fixed-point-localization}.
\end{proof}

For $L=\{0,1\}$ this recovers Proposition~\ref{prop:orbit-algebra}. For $L=[0,1]$ it gives finitely supported fuzzy predicates; for cost, security, or information lattices it gives quantitative analyses. This extends the finitely supported fuzzy-set and lattice constructions of \cite{AlexandruCiobanu2018Fuzzy}, \cite[Section~7.2]{AlexandruCiobanu2020}.

The extension hypothesis in Proposition~\ref{thm:context-function-lattice} is needed only to compare with a given ambient transformer. For a monotone finitely supported transformer, least and greatest fixed points in the possibly incomplete lattice $\Pred_{\mathrm{fs}}(X,L)$ exist without an additional extension hypothesis.

The issue is not only to find a fixed point at the declared context, but to show that it remains least or greatest when compared with fixed points supported by other contexts. The following result establishes both assertions.

\begin{proposition}[Fixed points in the finitely supported function lattice]\label{prop:fs-fixed-points}
Let $L$ be a complete lattice with trivial atom action and let $F:\Pred_{\mathrm{fs}}(X,L)\to\Pred_{\mathrm{fs}}(X,L)$ be monotone and $S$-supported. Then $F$ has a least and a greatest fixed point in $\Pred_{\mathrm{fs}}(X,L)$, and they coincide with the least and greatest fixed points of the restriction of $F$ to $\Pred_S(X,L)$. Moreover, every monotone $G_{(S)}$-equivariant endomap of $L^X$ that agrees with $F$ on $\Pred_{\mathrm{fs}}(X,L)$ has the same least and greatest fixed points.
\end{proposition}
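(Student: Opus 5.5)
The plan is to build the fixed point at the declared context and then compare it with fixed points supported by other contexts, using an orbit-averaging argument.

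\emph{Step 1: fixed points at the context.} By Proposition~\ref{thm:context-function-lattice}, $F$ restricts to a monotone endomap of the complete lattice $\Pred_S(X,L)$. Knaster--Tarski then gives $p=\lfp(F|_{\Pred_S})$ and $r=\gfp(F|_{\Pred_S})$. Concretely,
\[
p=\bigwedge\{u\in\Pred_S(X,L):F(u)\le u\},
\]
and dually for $r$. Both $p$ and $r$ are fixed points of $F$ lying in $\Pred_{\mathrm{fs}}(X,L)$.

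\emph{Step 2: minimality against arbitrary finitely supported fixed points.} This is the real content of the statement and the step I expect to be the obstacle. A fixed point $q=F(q)$ in $\Pred_{\mathrm{fs}}(X,L)$ may be supported only by some other finite $T$, so it need not lie in $\Pred_S$. I would replace $q$ by the meet of its $G_{(S)}$-orbit. Put $\mathcal F=\{gq:g\in G_{(S)}\}$.

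\begin{enumerate}
\item Each $gq$ is finitely supported, by $gT$, so $\mathcal F\subseteq\Pred_{\mathrm{fs}}(X,L)$.
\item $\mathcal F$ is a $G_{(S)}$-invariant family, that is, it is supported by $S$.
\item By Proposition~\ref{prop:internal-completeness}, $m=\bigwedge\mathcal F$ lies in $\Pred_S(X,L)$, even though $\mathcal F$ may be infinite.
\item Since $F$ is $S$-supported, Proposition~\ref{prop:criterion} gives $F(gq)=gF(q)=gq$, so every member of $\mathcal F$ is a fixed point.
\item By monotonicity, $F(m)\le F(gq)=gq$ for all $g$, hence $F(m)\le m$.
\end{enumerate}

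Thus $m$ is a prefixed point in $\Pred_S(X,L)$. The displayed formula for $p$ then gives $p\le m\le q$, taking $g$ to be the identity for the second inequality. So $p$ is the least fixed point of $F$ in all of $\Pred_{\mathrm{fs}}(X,L)$.

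The greatest fixed point is handled dually. Take the join $\bigvee\mathcal F$, which is again in $\Pred_S$ by Proposition~\ref{prop:internal-completeness}. It is a postfixed point, so $q\le\bigvee\mathcal F\le r$. The points to check carefully are that the action on $\Pred_{\mathrm{fs}}$ is by order automorphisms and that $gq$ is a fixed point; both follow since $L$ carries the trivial action.

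\emph{Step 3: ambient extensions.} Let $F'$ be a monotone $G_{(S)}$-equivariant endomap of $L^X$ agreeing with $F$ on $\Pred_{\mathrm{fs}}(X,L)$. The group $G_{(S)}$ acts on $L^X$ by complete-lattice automorphisms, and $(L^X)^{G_{(S)}}=\Pred_S(X,L)$. Lemma~\ref{thm:fixed-point-localization} then gives that $\lfp(F')$ and $\gfp(F')$ equal the extremal fixed points of $F'|_{\Pred_S}$. Since $\Pred_S\subseteq\Pred_{\mathrm{fs}}$, this restriction coincides with $F|_{\Pred_S}$. By Step~1 these are $p$ and $r$, which completes the proof.
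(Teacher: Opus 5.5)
Your proof is correct, but it takes a different route from the paper's.

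\emph{The paper's route.} The paper never symmetrizes a fixed point. It builds an explicit ambient extension $\overline F(p)=\bigvee\{F(q):q\in\Pred_{\mathrm{fs}}(X,L),\ q\le p\}$ on $L^X$ and checks that it is monotone, agrees with $F$, and is $G_{(S)}$-equivariant. Lemma~\ref{thm:fixed-point-localization} then does all the work in one stroke. The least fixed point of any compatible $F'$ is $G_{(S)}$-fixed, so it lies in $\Pred_S(X,L)$ and is a fixed point of $F$. Every finitely supported fixed point of $F$ is also a fixed point of $F'$, so it dominates that least fixed point. The greatest fixed point is handled dually with $\widetilde F(p)=\bigwedge\{F(q):q\ge p\}$.

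\emph{Your route.} You stay internal. Knaster--Tarski in $\Pred_S(X,L)$ gives the candidates $p$ and $r$. To compare with a fixed point $q$ supported only by some other context, you take the meet (or join) of its $G_{(S)}$-orbit. That orbit is an $S$-supported family of fixed points, so its meet lies in $\Pred_S(X,L)$ by Proposition~\ref{prop:internal-completeness} and is a pre-fixed point. The ambient clause is then a separate, short application of the localization lemma, which is all the statement requires.

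\emph{What each buys.}
\begin{enumerate}
\item The paper's argument is a single uniform argument. As a by-product it shows that a compatible monotone $G_{(S)}$-equivariant ambient extension always exists, so the final clause of the statement is never vacuous. The cost is that it uses the unsupported lattice $L^X$ as an auxiliary tool.
\item Your argument uses exactly internal completeness for the key comparison, so it is the internal Tarski theorem in the FSS sense. Like the paper's, it needs no finite-order trick. It would go through essentially verbatim for any lattice that is complete for finitely supported families and on which $G$ acts by order automorphisms, with no embedding into an ambient complete lattice. It does not, however, show that a compatible ambient extension exists; the statement does not require this.
\end{enumerate}
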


\begin{proof}
Define $\overline F:L^X\to L^X$ by $\overline F(p)=\bigvee\{F(q):\ q\in\Pred_{\mathrm{fs}}(X,L),\ q\leq p\}$, with pointwise joins. It is monotone, and it agrees with $F$ on $\Pred_{\mathrm{fs}}(X,L)$ because $F$ is monotone. For $g\in G_{(S)}$, the set $\Pred_{\mathrm{fs}}(X,L)$ is $g$-invariant, $g$ preserves pointwise joins, and $gF(q)=F(gq)$; hence $g\overline F(p)=\overline F(gp)$. Let $F'$ be any monotone $G_{(S)}$-equivariant endomap of $L^X$ agreeing with $F$ on $\Pred_{\mathrm{fs}}(X,L)$, for instance $F'=\overline F$. By Lemma~\ref{thm:fixed-point-localization}, $p_0=\lfp(F')$ is $G_{(S)}$-fixed, so $p_0\in\Pred_S(X,L)$ and $F(p_0)=F'(p_0)=p_0$. Every fixed point $q$ of $F$ in $\Pred_{\mathrm{fs}}(X,L)$ is a fixed point of $F'$, so $p_0\leq q$. Thus $p_0$ is the least fixed point of $F$ in $\Pred_{\mathrm{fs}}(X,L)$, and in particular also in $\Pred_S(X,L)$; it does not depend on the choice of $F'$. The greatest fixed point is treated dually, using $\widetilde F(p)=\bigwedge\{F(q):\ q\in\Pred_{\mathrm{fs}}(X,L),\ q\geq p\}$.
\end{proof}

Proposition~\ref{prop:fs-fixed-points} is the internal Tarski theorem for the lattices of Proposition~\ref{prop:internal-completeness}, together with the identification of internal, context, and ambient fixed points. For the equality symmetry, existence and support of these fixed points follow from the Strong Tarski theorem for invariant complete lattices \cite[Theorem~6.2 and Corollary~6.2]{AlexandruCiobanu2020} applied to \cite[Theorems~7.1 and~7.3]{AlexandruCiobanu2020}; that a Tarski theorem holds for lattices with joins of all finitely supported families is also noted in \cite[Remark~4.8]{KlinLelyk2019}. The argument of \cite{AlexandruCiobanu2020} uses that finitary permutations have finite order; for an arbitrary data symmetry that step is replaced by an application of $g^{-1}$, and the proof above avoids it altogether. In particular, the context fixed points in Propositions~\ref{thm:context-function-lattice} and~\ref{prop:ai-soundness} are the least fixed points among finitely supported predicates, without any extension hypothesis.

\subsection{Property III: uniform finiteness and terminating iteration}\label{sec:principle-finiteness}

Property~II gives fixed points without any finiteness hypothesis. The following notion from \cite{AlexandruCiobanu2020} governs when they are reached by finitely many iterations.

\begin{definition}[Uniform finiteness]\label{def:uniformly-finite}
A $G$-set $X$ is \emph{uniformly finite} if, for every finite $S\subseteq\A$, the set $X^{G_{(S)}}$ of $S$-supported elements of $X$ is finite.
\end{definition}

Equivalently, $X$ contains no infinite \emph{uniformly supported} subset, that is, no infinite subset whose elements share a common finite support; this is the form in which the condition is used in \cite[Theorems~5.7 and~5.9]{AlexandruCiobanu2020} and \cite{AlexandruCiobanu2019Uniform,AlexandruCiobanu2020Carpathian}. In the terminology of the finite-dependence profile $S\mapsto X^{G_{(S)}}$ of \cite{Ciobanu2026Profile}, $X$ is uniformly finite exactly when all its \emph{context levels} are finite. For the equality symmetry, orbit-finiteness implies finite context levels and a uniform bound on least-support sizes; under these hypotheses, the converse also holds \cite{Ciobanu2026Profile}. We use only finite context levels here. More generally, orbit-finiteness and uniform finiteness must be compared under explicit symmetry hypotheses; none of our termination proofs requires such a comparison.

\begin{proposition}[Uniformly finite predicate lattices]\label{prop:uniform-powerset}
Let $X$ be a nominal $G$-set and~$S$ a context. The $S$-supported elements of $\Pfs(X)$ form $\PS(X)$, and $|\PS(X)|=2^{m}$ when $G_{(S)}\backslash X$ has $m$ elements. Hence $\Pfs(X)$ is uniformly finite if and only if $G_{(S)}\backslash X$ is finite for every context $S$. The same holds for $\Pred_{\mathrm{fs}}(X,L)$ with $L$ finite and $|L|\geq2$, with $|\Pred_S(X,L)|=|L|^m$.
\end{proposition}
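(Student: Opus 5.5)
The plan has three steps: identify the $S$-supported elements, count them via Proposition~\ref{prop:orbit-algebra}, and then read off uniform finiteness from Definition~\ref{def:uniformly-finite}.

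\textbf{Step 1: the $S$-supported elements of $\Pfs(X)$.} The action on $\Pfs(X)$ is $gU=\{gx:x\in U\}$. So $U\in\Pfs(X)$ is supported by $S$ exactly when $gU=U$ for all $g\in G_{(S)}$, that is, when $U\in\PS(X)$. Conversely, every $U\in\PS(X)$ is finitely supported, by $S$ itself, so $\PS(X)\subseteq\Pfs(X)$. Hence $\Pfs(X)^{G_{(S)}}=\PS(X)$.

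\textbf{Step 2: counting.} The isomorphism $\PS(X)\cong\mathcal P(G_{(S)}\backslash X)$ of Proposition~\ref{prop:orbit-algebra} gives $|\PS(X)|=2^m$ when there are $m$ orbits. For the equivalence, I will argue in both directions.
\begin{itemize}
\item If every $G_{(S)}\backslash X$ is finite, then every context level $\Pfs(X)^{G_{(S)}}=\PS(X)$ is finite, so $\Pfs(X)$ is uniformly finite.
\item If some $G_{(S)}\backslash X$ is infinite, then $\mathcal P(G_{(S)}\backslash X)$ is infinite, so $\PS(X)$ is infinite. Hence $\Pfs(X)$ is not uniformly finite.
\end{itemize}
No choice is needed in either direction: the second only uses that the powerset of an infinite set is infinite, for instance via the injection $O\mapsto\{O\}$.

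\textbf{Step 3: the $L$-valued case.} The action is $(gp)(x)=p(g^{-1}x)$. So $p$ is $G_{(S)}$-fixed iff $p(g^{-1}x)=p(x)$ for all $g\in G_{(S)}$ and all $x$; since $G_{(S)}$ is a group, this is the same as $p\in\Pred_S(X,L)$. Also $\Pred_S(X,L)\subseteq\Pred_{\mathrm{fs}}(X,L)$ by definition. Proposition~\ref{thm:context-function-lattice} then gives $\Pred_S(X,L)\cong L^{G_{(S)}\backslash X}$, of cardinality $|L|^m$. The equivalence follows as in Step 2: with $L$ finite, finitely many orbits give a finite context level. If the orbit set is infinite, the hypothesis $|L|\geq2$ lets me choose two distinct elements $0\neq1$ of $L$; then the characteristic maps of the infinitely many single orbits, valued in $\{0,1\}$, are pairwise distinct, so the level is infinite. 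Finiteness of $L$ is used only in the forward direction.

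\textbf{Main obstacle.} There is no real difficulty here. The only point needing care is matching the definitions: the $G_{(S)}$-fixed elements of the $G$-set $\Pfs(X)$ must coincide with $\PS(X)$, and in particular the $S$-invariant subsets must already lie in $\Pfs(X)$ rather than only in $\mathcal P(X)$. Once that is checked, the counting is immediate from the orbit isomorphisms.
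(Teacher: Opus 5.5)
Your proof is correct and follows the paper's argument. Like the paper, you identify the context level at $S$ (the $G_{(S)}$-fixed subsets, respectively predicates) with $\PS(X)$, respectively $\Pred_S(X,L)$, and count via the orbit isomorphisms of Propositions~\ref{prop:orbit-algebra} and~\ref{thm:context-function-lattice}; you also make explicit the points the paper leaves implicit, namely that $\PS(X)\subseteq\Pfs(X)$, that $|L|\geq2$ yields the converse direction, and that finiteness of $L$ is needed only for the forward direction.
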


\begin{proof}
An element of $\Pfs(X)$ is $S$-supported exactly when it lies in $\PS(X)$, and $\PS(X)\cong\mathcal P(G_{(S)}\backslash X)$ by Proposition~\ref{prop:orbit-algebra}. Likewise $\Pred_S(X,L)\cong L^{G_{(S)}\backslash X}$ by Proposition~\ref{thm:context-function-lattice}. These sets are finite exactly when the orbit set is finite.
\end{proof}

For the equality symmetry, uniform finiteness is strictly weaker than orbit-finiteness. For example, $\Pfs(\A)$ consists of the finite and cofinite subsets of $\A$, and it has infinitely many $G$-orbits, one for each size of a finite set and one for each size of a finite complement; yet only $2^{|S|+1}$ subsets of $\A$ are supported by $S$, so $\Pfs(\A)$ is uniformly finite; the second constituent fails, since least supports of finite subsets are unbounded. The same holds for $\Pfs(\A^n)$ for $n\geq1$, since $G_{(S)}$ has finitely many orbits on $\A^n$ (Lemma~\ref{lem:local-oligomorphic} below); for the equality symmetry, this and related powersets of relations are treated in \cite{AlexandruCiobanu2021Relations}. By contrast, the set $\A^*$ of finite words is neither orbit-finite nor uniformly finite, because all words over~$S$ are $S$-supported.

The hypothesis of finitely many context orbits holds for every context as soon as the data symmetry is oligomorphic, that is, $G$ has finitely many orbits on $\A^n$ for every $n$; this is the case for the equality, ordered, and random-graph symmetries.

\begin{lemma}[Oligomorphic symmetries have finitely many context orbits]\label{lem:local-oligomorphic}
Let $G$ be oligomorphic and let $S$ be a context with $|S|=k$. Then $|G_{(S)}\backslash\A^n|\leq|G\backslash\A^{k+n}|$ for every $n$. Consequently $\A$ is uniformly finite, and for every $G$-invariant $X\subseteq\A^n$ the lattice $\Pfs(X)$ is uniformly finite.
\end{lemma}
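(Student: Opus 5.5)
Fix an enumeration $\bar s=(s_1,\ldots,s_k)$ of $S$. I will compare $G_{(S)}$-orbits on $\A^n$ with $G$-orbits on $\A^{k+n}$ through the map
\[
\iota:\A^n\to\A^{k+n},\qquad \bar a\longmapsto \bar s\bar a .
\]
The claim is that $\iota$ induces an injection $G_{(S)}\backslash\A^n\to G\backslash\A^{k+n}$. It is well defined: if $\bar b=g\bar a$ with $g\in G_{(S)}$, then $g\bar s=\bar s$, so $g(\bar s\bar a)=\bar s\bar b$. It is injective: if $g(\bar s\bar a)=\bar s\bar b$ for some $g\in G$, comparing the first $k$ coordinates gives $gs_i=s_i$ for all $i$. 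Hence $g\in G_{(S)}$, and the last $n$ coordinates give $g\bar a=\bar b$. This yields $|G_{(S)}\backslash\A^n|\le|G\backslash\A^{k+n}|$, which is finite by oligomorphicity. Repetitions in the tuples cause no trouble, since orbits are taken on all of $\A^{k+n}$.

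For uniform finiteness of $\A$, take $n=1$. The $S$-supported atoms are exactly the fixed points of $G_{(S)}$ on $\A$, each forming a singleton orbit. So $|\A^{G_{(S)}}|\le|G_{(S)}\backslash\A|<\infty$ for every context $S$. This is Definition~\ref{def:uniformly-finite}.

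For $X\subseteq\A^n$ that is $G$-invariant, hence $G_{(S)}$-invariant, the set $X$ is a union of $G_{(S)}$-orbits of $\A^n$. Therefore $|G_{(S)}\backslash X|\le|G_{(S)}\backslash\A^n|<\infty$ for every context $S$. Note that $X$ is a nominal $G$-set, since a tuple is supported by its set of entries. Proposition~\ref{prop:uniform-powerset} then gives that $\Pfs(X)$ is uniformly finite, with $|\PS(X)|=2^m$ where $m=|G_{(S)}\backslash X|$.

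The only step needing care is the injectivity argument: membership in $G_{(S)}$ has to be recovered from equality of the prefix coordinates. This works because $G_{(S)}$ is by definition the pointwise stabilizer of $S$, i.e.\ the stabilizer of the tuple $\bar s$. Everything else is direct bookkeeping.
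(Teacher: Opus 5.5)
Your proof is correct and follows the paper's argument essentially step for step. Both use the injection $G_{(S)}\backslash\A^n\to G\backslash\A^{k+n}$ induced by $\bar a\mapsto\bar s\bar a$, identify the $S$-supported atoms with singleton $G_{(S)}$-orbits, and treat $X$ as a union of $G_{(S)}$-orbits before applying Proposition~\ref{prop:uniform-powerset}. Your explicit checks of well-definedness and of $X$ being a nominal $G$-set fill in details the paper leaves implicit.
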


\begin{proof}
Let $\bar s$ enumerate $S$ and send $\bar a\in\A^n$ to the $G$-orbit of $(\bar s,\bar a)\in\A^{k+n}$. If $(\bar s,\bar a)$ and $(\bar s,\bar b)$ lie in the same $G$-orbit, then $g(\bar s,\bar a)=(\bar s,\bar b)$ for some $g\in G$, so $g\in G_{(S)}$ and $g\bar a=\bar b$. Hence the map is injective on $G_{(S)}$-orbits, which gives the bound. The $S$-supported atoms are the singleton $G_{(S)}$-orbits on $\A$, so there are finitely many of them. A $G$-invariant $X\subseteq\A^n$ is a union of $G_{(S)}$-orbits of $\A^n$, so $G_{(S)}\backslash X$ is finite, and Proposition~\ref{prop:uniform-powerset} applies.
\end{proof}

This is the standard fact that pointwise stabilizers of finite sets in an oligomorphic group are again oligomorphic \cite{Cameron1990}; the condition that every context stabilizer has finitely many orbits is the local oligomorphicity of \cite{Ciobanu2026Profile}.

\begin{proposition}[Terminating fixed-point iteration]\label{thm:terminating-iteration}
Let $P$ be a $G$-set with a $G$-equivariant partial order. Suppose that $P$ has a least element $\bot$ and that the set $P^{G_{(S)}}$ is finite. Then every $S$-supported monotone map $f:P\to P$ has a least fixed point, which equals $f^n(\bot)$ for some $n<|P^{G_{(S)}}|$ and is supported by $S$. Dually, if $P$ has a greatest element $\top$, then $f$ has a greatest fixed point $f^n(\top)$ with $n<|P^{G_{(S)}}|$. In particular, this holds for every $S$ whenever $P$ is uniformly finite.
\end{proposition}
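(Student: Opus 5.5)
The plan is to run the Kleene iteration from $\bot$ inside the finite set $P^{G_{(S)}}$ and use finiteness to force stabilization.

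First, $\bot$ is $G$-fixed. Each $g\in G$ is an order automorphism of $P$, because the order is equivariant and $g^{-1}$ also preserves it. Hence $g\bot$ is again a least element, so $g\bot=\bot$. By Proposition~\ref{prop:criterion}, $f$ maps $S$-supported elements to $S$-supported elements. By induction, every iterate $f^k(\bot)$ therefore lies in $P^{G_{(S)}}$.

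Next, monotonicity gives an ascending chain. Since $\bot\le f(\bot)$, induction yields $f^k(\bot)\le f^{k+1}(\bot)$ for all $k$. Put $N=|P^{G_{(S)}}|$. The $N+1$ elements $f^0(\bot),\dots,f^N(\bot)$ all lie in a set of size $N$, so two of them coincide: $f^i(\bot)=f^j(\bot)$ with $i<j\le N$. By antisymmetry the chain is constant between $i$ and $j$, so $f^i(\bot)=f^{i+1}(\bot)$. The first index $n$ at which the chain stops strictly increasing therefore satisfies $n<N$, and $f^n(\bot)$ is a fixed point.

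Leastness is the usual argument. If $q$ is any fixed point of $f$ in $P$, not necessarily $S$-supported, then $\bot\le q$, and monotonicity gives $f^k(\bot)\le f^k(q)=q$ for all $k$. Thus $f^n(\bot)$ is the least fixed point in all of $P$, and it is $S$-supported by the first step.

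The greatest fixed point is obtained dually, starting from $\top$, which is $G$-fixed for the same reason, and using a descending chain. The final sentence of the statement is immediate, since uniform finiteness says $P^{G_{(S)}}$ is finite for every context $S$.

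The only step needing care is the bound $n<N$ as opposed to $n\le N$. This follows from the pigeonhole argument above: a strictly increasing chain $f^0(\bot)<\dots<f^n(\bot)$ consists of $n+1$ distinct elements of $P^{G_{(S)}}$, so $n+1\le N$.
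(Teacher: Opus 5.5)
Your proof is correct and follows the paper's argument. You show that $\bot$ (and $\top$) is $G$-fixed, use Proposition~\ref{prop:criterion} to keep the Kleene chain inside the finite set $P^{G_{(S)}}$, force stabilization at some $n<|P^{G_{(S)}}|$ by pigeonhole, and get leastness among all fixed points by the usual induction; your explicit pigeonhole-plus-antisymmetry justification of the strict bound just spells out a step the paper states in one line.
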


\begin{proof}
Every $g\in G$ is an order automorphism of $P$, so $g\bot$ is again a least element and hence $g\bot=\bot$. If $f^k(\bot)$ is $S$-supported, then $f^{k+1}(\bot)$ is $S$-supported by Proposition~\ref{prop:criterion}. Thus the ascending chain $\bot\leq f(\bot)\leq f^2(\bot)\leq\cdots$ lies in the finite set $P^{G_{(S)}}$, so $f^n(\bot)=f^{n+1}(\bot)$ for some $n<|P^{G_{(S)}}|$. If $q=f(q)$, then induction gives $f^k(\bot)\leq q$ for all $k$, so $f^n(\bot)$ is the least fixed point. The greatest fixed point is treated dually.
\end{proof}

For the equality symmetry and uniformly finite $P$, Proposition~\ref{thm:terminating-iteration} is \cite[Theorem~5.9 and Proposition~6.6]{AlexandruCiobanu2020}; see also \cite{AlexandruCiobanu2020Carpathian}. For context lattices it yields an explicit bound.

\begin{corollary}[Iteration bound for context lattices]\label{cor:iteration-bound}
Let $X$ be a nominal $G$-set and let $F:\Pfs(X)\to\Pfs(X)$ be monotone and supported by $S$. If $G_{(S)}\backslash X$ has $m<\infty$ elements, then
\[
\lfp(F)=F^m(\varnothing),\qquad \gfp(F)=F^m(X),
\]
both in $\PS(X)$, and they coincide with the fixed points of Proposition~\ref{prop:fs-fixed-points}. More generally, for $F:\Pred_{\mathrm{fs}}(X,L)\to\Pred_{\mathrm{fs}}(X,L)$ monotone and $S$-supported, with $L$ finite and every chain in $L$ having at most $h+1$ elements, the least and greatest fixed points are reached after at most $mh$ iterations from the bottom and top predicates.
\end{corollary}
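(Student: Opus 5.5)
The plan is to iterate $F$ from the bottom inside the finite context lattice and count the strict steps of the resulting chain. The key point is that the chain of iterates is strictly increasing until it stabilizes, and a strictly increasing chain in $\PS(X)\cong\mathcal P(G_{(S)}\backslash X)$ has length at most $m$. This is sharper than the crude bound $n<|P^{G_{(S)}}|=2^m$ of Proposition~\ref{thm:terminating-iteration}.

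First, by Proposition~\ref{prop:criterion}, $F$ maps $S$-supported elements to $S$-supported elements. Since $\varnothing$ and $X$ are equivariant, every iterate $F^k(\varnothing)$ and $F^k(X)$ lies in $\PS(X)$. Monotonicity gives $\varnothing\subseteq F(\varnothing)\subseteq F^2(\varnothing)\subseteq\cdots$ by the usual induction. If $F^{k}(\varnothing)=F^{k+1}(\varnothing)$ for some $k$, then the chain is constant from $k$ on. So before stabilization each step is strict.

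A strict inclusion $U\subsetneq V$ of $S$-supported sets adds at least one full $G_{(S)}$-orbit, because by Proposition~\ref{prop:orbit-algebra} both sets are unions of orbits. Hence the number of orbits contained in $F^k(\varnothing)$ strictly increases at each strict step. It starts at $0$ and is bounded by $m$, so stabilization occurs at some $k\le m$, and therefore $F^m(\varnothing)=F^{m+1}(\varnothing)$. Any fixed point $q$ in $\Pfs(X)$ satisfies $F^k(\varnothing)\subseteq q$ by induction, so $F^m(\varnothing)$ is the least fixed point in $\Pfs(X)$. By the uniqueness of least elements it then coincides with the fixed point given by Proposition~\ref{prop:fs-fixed-points}, and it lies in $\PS(X)$. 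The greatest fixed point is dual: along the descending chain from $X$, each strict step removes an orbit.

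For the $L$-valued statement, I would use $\Pred_S(X,L)\cong L^{G_{(S)}\backslash X}$ from Proposition~\ref{thm:context-function-lattice}. The iterates from the bottom predicate are again $S$-supported and form an ascending chain, which is strictly increasing until it stabilizes. I would attach to a predicate $p$ the rank $\sum_{O}r(p(O))$, where $r(\ell)$ is the length of the longest chain in $L$ below $\ell$. Then $r(\ell)\le h$, and $r$ is strictly monotone, since $\ell<\ell'$ implies $r(\ell)<r(\ell')$. A strict step $p<p'$ changes at least one orbit value strictly and none downward, so the rank goes up by at least one. The rank starts at $0$ (bottom everywhere) and is at most $mh$, so at most $mh$ strict steps occur. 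Leastness among finitely supported fixed points follows exactly as in the Boolean case. The top case is dual, using the corank $h-\text{(height above)}$.

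The main obstacle, mild as it is, is the rank argument in the lattice-valued case. One has to check that $r$ is well defined and bounded by $h$ under the chain hypothesis, and that strict monotonicity of $r$ holds; this works because $L$ is finite. Beyond that, the only other thing to do is to make sure that equality with the fixed points of Proposition~\ref{prop:fs-fixed-points} is justified by leastness over all of $\Pfs(X)$, and not only over $\PS(X)$.
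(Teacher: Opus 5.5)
Your proof is correct and follows the paper's argument. The iterates stay in $\PS(X)\cong\mathcal P(G_{(S)}\backslash X)$ (respectively $L^m$), where strict chains have at most $m+1$ (respectively $mh+1$) elements, and leastness and greatestness over all of $\Pfs(X)$ come from the induction of Proposition~\ref{thm:terminating-iteration}. Your orbit count and rank function just make explicit the chain-length bounds that the paper asserts without proof.
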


\begin{proof}
The iterates from $\varnothing$ and from $X$ lie in $\PS(X)\cong\mathcal P(G_{(S)}\backslash X)$, in which a strictly increasing or strictly decreasing chain has at most $m+1$ elements. The fixed points so obtained are the least and greatest ones by the argument of Proposition~\ref{thm:terminating-iteration}, and the least and greatest fixed points are individually unique. In $\Pred_S(X,L)\cong L^m$ a strict chain has at most $mh+1$ elements.
\end{proof}

Corollary~\ref{cor:iteration-bound} does not require orbit-finiteness of $\Pfs(X)$. Its hypothesis does imply orbit-finiteness of $X$, because each $G$-orbit is a union of $G_{(S)}$-orbits. For example, for $X=\A$ with equality atoms and $|S|=k$, iteration stabilizes after at most $k+1$ steps although $\Pfs(\A)$ is orbit-infinite. It is a semantic statement. An algorithm needs, in addition, an effective presentation of the $G_{(S)}$-orbits and of $F$ on them, which must be supplied in addition to the finite cell presentations of Section~\ref{sec:structured}. Homogeneity alone does not provide algorithms for computing those presentations. The same counting of $S$-supported relations underlies the termination argument in the proof of \cite[Theorem~5.1]{KlinLelyk2019} for orbit-finite models; Corollary~\ref{cor:iteration-bound} isolates it for an arbitrary finitely supported transformer and an arbitrary data symmetry.

\section{Supported transition systems and determinization}\label{sec:automata}

This section adapts the subset construction and its support bounds from automata over data symmetries \cite{BojanczykKlinLasota2014} to automata that are finitely supported but not necessarily orbit-finite. Without orbit-finiteness the constructions exist but need not yield finite algorithms; the effective case is treated in Proposition~\ref{prop:effective-det}. In terms of Section~\ref{sec:principles}, the support bounds are instances of Property~I, and the effective criterion adds a finite orbit hypothesis in the spirit of Property~III.

\begin{definition}[FSS nondeterministic automaton]
A finitely supported nondeterministic automaton is a tuple
\[
\mathcal A=(Q,\Sigma,\Delta,I,F),
\]
where $Q$ and $\Sigma$ are nominal $G$-sets, $\Delta\subseteq Q\times\Sigma\times Q$ is a finitely supported transition relation, and $I,F\subseteq Q$ are finitely supported sets of initial and accepting states. The automaton is \emph{deterministic} if $I=\{q_0\}$ and $\Delta$ is the graph of a total function $\delta:Q\times\Sigma\to Q$; we then write $(Q,\Sigma,\delta,q_0,F)$. For $H\leq G$, a deterministic automaton is \emph{$H$-equivariant} if $\delta(hq,ha)=h\delta(q,a)$, $hq_0=q_0$, and $hF=F$ for all $h\in H$, $q\in Q$, and $a\in\Sigma$. For $H=G_{(S)}$ this means that $\delta$, $q_0$, and $F$ are $S$-supported. In the $H$-equivariant case $Q$ may, more generally, be a nominal $H$-set, as for the quotient automata of Section~\ref{sec:quotients}.
\end{definition}

The diagonal action extends to words $\Sigma^*$. Let $L(\mathcal A)\subseteq\Sigma^*$ be the recognized language.

\begin{proposition}[Support of the accepted language]\label{prop:language-support}
If $S_\Delta,S_I,S_F$ support $\Delta,I,F$, respectively, then
\[
S_\mathcal A=S_\Delta\cup S_I\cup S_F
\]
supports $L(\mathcal A)$.
\end{proposition}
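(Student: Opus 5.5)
The plan is to verify directly that every $g\in G_{(S_\mathcal A)}$ maps $L(\mathcal A)$ into itself. Applying the same argument to $g^{-1}$, which also lies in $G_{(S_\mathcal A)}$, then gives $gL(\mathcal A)=L(\mathcal A)$. Since $S_\Delta$, $S_I$ and $S_F$ are all contained in $S_\mathcal A$, such a $g$ satisfies $g\Delta=\Delta$, $gI=I$ and $gF=F$. This uses only the definition of support, in the form of Proposition~\ref{prop:symmetry-refinement}: enlarging a support shrinks the stabilizer.

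The key step is transport of accepting runs. Let $w=a_1\cdots a_n\in L(\mathcal A)$, witnessed by a run $q_0,q_1,\ldots,q_n$ with $q_0\in I$, $(q_{i-1},a_i,q_i)\in\Delta$ for each $i$, and $q_n\in F$. Apply $g$ componentwise to obtain $gq_0,\ldots,gq_n$. Then $gq_0\in gI=I$ and $gq_n\in gF=F$. Because the action on $Q\times\Sigma\times Q$ is diagonal, $(gq_{i-1},ga_i,gq_i)=g(q_{i-1},a_i,q_i)\in g\Delta=\Delta$. The transported sequence is therefore an accepting run on $gw=ga_1\cdots ga_n$, since the action on words is componentwise. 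Hence $gw\in L(\mathcal A)$. The empty word is the case $n=0$: here $I\cap F\neq\varnothing$, and this condition is preserved in the same way.

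There is no real obstacle in this argument. The only point needing care is that support has to be shown as invariance $gL=L$, not merely one inclusion; applying the inclusion to $g^{-1}$ settles this. One could also present the result through Property~I, writing $L(\mathcal A)$ as a union over $n$ of projections of the $n$-step run sets definable from $\Delta,I,F$, and appealing to the finite support principle. The direct run-transport argument is shorter and makes the explicit bound $S_\mathcal A$ visible, so that is the one I would write.
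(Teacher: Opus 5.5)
Your proposal is correct and matches the paper's proof: an accepting run transported by $g\in G_{(S_\mathcal A)}$ is again an accepting run, since $g\Delta=\Delta$, $gI=I$ and $gF=F$, and applying $g^{-1}$ gives the reverse inclusion. One small point: the inclusion $G_{(S_\mathcal A)}\subseteq G_{(S_\Delta)}\cap G_{(S_I)}\cap G_{(S_F)}$ follows directly from the definition of pointwise stabilizers, so citing Proposition~\ref{prop:symmetry-refinement} for it is unnecessary, though it does no harm.
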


\begin{proof}
For $g\in G_{(S_\mathcal A)}$, applying $g$ to every state and label of an accepting run yields an accepting run over the renamed word because $g$ preserves $\Delta$, $I$, and~$F$. Applying $g^{-1}$ gives the converse.
\end{proof}

\subsection{Supported-powerset determinization}

For $U\in\Pfs(Q)$ and $a\in\Sigma$, define
\[
\delta_{\mathcal A}(U,a)
 =\{q'\in Q:\exists q\in U\ (q,a,q')\in\Delta\},
\]
and put
\[
F^{\det}=\{U\in\Pfs(Q):U\cap F\neq\varnothing\}.
\]

\begin{proposition}[FSS determinization]\label{thm:determinization}
Every FSS nondeterministic automaton~$\mathcal A$ has a deterministic FSS automaton
\[
\det(\mathcal A)=
(\Pfs(Q),\Sigma,\delta_{\mathcal A},I,F^{\det})
\]
with initial state $I\in\Pfs(Q)$, recognizing the same language. The state space $\Pfs(Q)$ is a nominal $G$-set under direct image. The transition map is supported by every support of $\Delta$, and $F^{\det}$ is supported by every support of $F$.
\end{proposition}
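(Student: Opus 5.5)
We verify, in order, that the state space is a nominal $G$-set, that $\delta_{\mathcal A}$ is a well-defined total map into it with the stated support, that $F^{\det}$ has the stated support, and that $L(\det(\mathcal A))=L(\mathcal A)$.

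\emph{State space.} For $U\in\Pfs(Q)$ supported by $T$ and $g\in G$, the set $gU$ is supported by $gT$. Indeed, if $k\in G_{(gT)}$, then $g^{-1}kg\in G_{(T)}$, so $kgU=g(g^{-1}kg)U=gU$. Hence direct image restricts to an action of $G$ on $\Pfs(Q)$, and by definition every element of $\Pfs(Q)$ is finitely supported. The initial state $I$ lies in $\Pfs(Q)$ by hypothesis.

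\emph{Transition map.} This is the step that needs care, since the target must stay inside the possibly incomplete $\Pfs(Q)$. Fix $U$ supported by $T$ and $a\in\Sigma$ supported by $T'$, and let $S_\Delta$ support $\Delta$. For $k\in G_{(S_\Delta\cup T\cup T')}$, we have $(q,a,q')\in\Delta$ with $q\in U$ if and only if $(kq,ka,kq')=(kq,a,kq')\in\Delta$ with $kq\in kU=U$. Therefore $k\,\delta_{\mathcal A}(U,a)=\delta_{\mathcal A}(U,a)$, so $\delta_{\mathcal A}(U,a)\in\Pfs(Q)$ and $\delta_{\mathcal A}$ is total. The same computation with an arbitrary $g\in G_{(S_\Delta)}$ in place of $k$ gives
\[
\delta_{\mathcal A}(gU,ga)=g\,\delta_{\mathcal A}(U,a),
\]
because $q'$ is reachable from $gU$ by $ga$ if and only if $g^{-1}q'$ is reachable from $U$ by $a$, using $g^{-1}\Delta=\Delta$. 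By Proposition~\ref{prop:criterion}, $\delta_{\mathcal A}$ is therefore supported by $S_\Delta$. Alternatively, $\delta_{\mathcal A}(U,a)$ is the union of the $S_\Delta\cup T\cup T'$-supported family indexed by $U$, and Proposition~\ref{prop:internal-completeness} applies; the direct check above is simpler.

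\emph{Accepting states.} If $S_F$ supports $F$ and $g\in G_{(S_F)}$, then $gU\cap F=g(U\cap F)$. This set is nonempty exactly when $U\cap F$ is, so $gF^{\det}=F^{\det}$.

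\emph{Language equality.} Extend $\delta_{\mathcal A}$ to words in the usual way. By induction on $w$, the state $\delta_{\mathcal A}(I,w)$ is exactly the set of states reachable from $I$ along a run of $\mathcal A$ over $w$; each inductive step simply unfolds the definition of $\delta_{\mathcal A}$. Every intermediate set is in $\Pfs(Q)$ by the totality argument above. Hence $w$ is accepted by $\det(\mathcal A)$ if and only if this reachable set meets $F$, if and only if $\mathcal A$ has an accepting run on $w$.
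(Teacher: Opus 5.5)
Your proof is correct and follows essentially the same route as the paper: equivariance $\delta_{\mathcal A}(gU,ga)=g\,\delta_{\mathcal A}(U,a)$ for $g\in G_{(S_\Delta)}$ combined with Proposition~\ref{prop:criterion}, the analogous intersection argument for $F^{\det}$, and the standard subset-construction induction for language equality. The only difference is bookkeeping. You check directly, using $G_{(S_\Delta\cup T\cup T')}$, that $\delta_{\mathcal A}$ maps into $\Pfs(Q)$, and you also verify that direct image preserves finite support. The paper instead proves equivariance on all of $\mathcal P(Q)\times\Sigma$ and reads off closure of $\Pfs(Q)$ from the second clause of Proposition~\ref{prop:criterion}.
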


\begin{proof}
First regard the displayed transition formula as a map on $\mathcal P(Q)\times\Sigma$. Let $S_\Delta$ support $\Delta$ and $g\in G_{(S_\Delta)}$. For $U\subseteq Q$ and $a\in\Sigma$,
\begin{align*}
q'\in\delta_{\mathcal A}(gU,ga)
&\Longleftrightarrow
\exists q\in U\ (gq,ga,q')\in\Delta\\
&\Longleftrightarrow
\exists q\in U\ (q,a,g^{-1}q')\in\Delta\\
&\Longleftrightarrow
g^{-1}q'\in\delta_{\mathcal A}(U,a).
\end{align*}
Hence $\delta_{\mathcal A}(gU,ga)=g\delta_{\mathcal A}(U,a)$. Proposition~\ref{prop:criterion} shows that the transition map is $S_\Delta$-supported and therefore maps supported inputs to supported subset states. Thus it restricts to the asserted map on $\Pfs(Q)\times\Sigma$. The same argument with intersections proves the assertion about $F^{\det}$.

The standard subset-construction induction shows that, after reading $w$, the deterministic state is exactly the set of states reachable in $\mathcal A$ after $w$. Acceptance is therefore preserved.
\end{proof}

\begin{corollary}[Support propagation along runs]\label{cor:reachable-support}
Let $w=a_1\cdots a_n$. Choose finite supports~$T_i$ of~$a_i$. The subset state reached from $I$ in $\det(\mathcal A)$ is supported~by
\[
S_I\cup S_\Delta\cup\bigcup_{i=1}^{n}T_i.
\]
\end{corollary}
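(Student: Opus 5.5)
The plan is an induction on the length $n$ of $w$, using the support transfer of Proposition~\ref{prop:criterion} at each step. Put $U_0=I$ and $U_k=\delta_{\mathcal A}(U_{k-1},a_k)$ for $1\leq k\leq n$, so that $U_n$ is the state reached after reading $w$. The claim to prove by induction is that $U_k$ is supported by
\[
R_k=S_I\cup S_\Delta\cup\bigcup_{i=1}^{k}T_i .
\]

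For the base case, $U_0=I$ is supported by $S_I$, and hence by the larger set $R_0=S_I\cup S_\Delta$. Enlarging a support is harmless, because $G_{(R)}\subseteq G_{(S)}$ whenever $S\subseteq R$.

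For the inductive step, Proposition~\ref{thm:determinization} gives that the transition map $\delta_{\mathcal A}:\Pfs(Q)\times\Sigma\to\Pfs(Q)$ is supported by $S_\Delta$. The pair $(U_{k-1},a_k)$ carries the componentwise action. By the inductive hypothesis $U_{k-1}$ is fixed by $G_{(R_{k-1})}$, and $a_k$ is fixed by $G_{(T_k)}$. Hence the pair is fixed by $G_{(R_{k-1}\cup T_k)}$, that is, it is supported by $R_{k-1}\cup T_k$. The second clause of Proposition~\ref{prop:criterion} then shows that $U_k=\delta_{\mathcal A}(U_{k-1},a_k)$ is supported by $S_\Delta\cup R_{k-1}\cup T_k=R_k$. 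Taking $k=n$ gives the statement. Equivalently, one can view $U_n$ as the value of the term $\delta(\cdots\delta(\delta(I,a_1),a_2)\cdots,a_n)$ and apply Proposition~\ref{thm:term-support}, treating $\delta_{\mathcal A}$ as a binary operation on the two-sorted structure.

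There is no real obstacle. The only point needing care is that supports of a pair combine by union under the product action, and this follows immediately from the definition of the componentwise action.
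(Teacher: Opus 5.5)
Your proof is correct and follows the same route as the paper: induction along the deterministic run, applying Proposition~\ref{prop:criterion} to the $S_\Delta$-supported transition map at each step, with supports of the pair $(U_{k-1},a_k)$ combining by union. The paper states this in one line, and your write-up makes explicit the inductive invariant $R_k$ and the enlargement of $S_I$ to $R_0$ in the base case.
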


\begin{proof}
Apply Proposition~\ref{prop:criterion} inductively along the deterministic run.
\end{proof}

In particular, if $S$ supports $I$, $\Delta$, and every input letter, the reached subset state lies in~$\mathcal P_S(Q)$.

\begin{remark}[Semantic closure versus orbit-finiteness]
Proposition~\ref{thm:determinization} does not assert that $\Pfs(Q)$ or the reachable determinized state space is orbit-finite. The construction separates two questions: determinization exists semantically in the supported universe, while finite symbolic representation requires an additional orbit or support bound. Indeed, nondeterministic register automata are more expressive than deterministic ones \cite{KaminskiFrancez1994}, so orbit-finite determinization fails in general. The same distinction appears for supported function spaces, which exist generally but need not be orbit-finite \cite{BojanczykNguyenStefanski2024}.
\end{remark}

\begin{proposition}[Effective local determinization criterion]\label{prop:effective-det}
Let $S$ support $\Delta$,~$I$, and $F$, and put $H=G_{(S)}$. Assume that:
\begin{enumerate}[label=(\roman*)]
\item the reachable part of $\det(\mathcal A)$ has finitely many effectively represented $H$-orbits;
\item the relevant $H$-orbits of reachable state--input pairs $(U,a)\in\Pfs(Q)\times\Sigma$ admit a finite effective enumeration, and the orbit of each successor $\delta_{\mathcal A}(U,a)$ is effectively computable;
\item acceptance is decidable on reachable state-orbit representatives.
\end{enumerate}
Then language emptiness and reachability of state orbits reduce to finite graph search on the \emph{local orbit graph}, the finite graph whose vertices are the reachable $H$-orbits of subset states and whose edges are defined in the proof.
\end{proposition}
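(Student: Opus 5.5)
The plan is to build the local orbit graph explicitly and then show that it faithfully reflects reachability in $\det(\mathcal A)$. The key fact is that $\delta_{\mathcal A}$, $I$ and $F^{\det}$ are all $H$-equivariant. By Proposition~\ref{thm:determinization}, $\delta_{\mathcal A}$ is supported by $S_\Delta\subseteq S$, and $F^{\det}$ is supported by $S$. Since $S$ supports $I$, the initial state is $H$-fixed. Proposition~\ref{prop:criterion} then gives $\delta_{\mathcal A}(hU,ha)=h\,\delta_{\mathcal A}(U,a)$ for all $h\in H$, and $hF^{\det}=F^{\det}$.

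\textbf{Construction of the graph.} The vertices are the finitely many reachable $H$-orbits $HU$ supplied by~(i). For each orbit of reachable state--input pairs $H(U,a)$ enumerated in~(ii), we add an edge from $HU$ to $H\delta_{\mathcal A}(U,a)$.

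\textbf{Well-definedness.} This follows from equivariance: replacing $(U,a)$ by $(hU,ha)$ replaces the successor by $h\delta_{\mathcal A}(U,a)$, which lies in the same orbit. Hence the edge set is finite and computable. We mark a vertex accepting if its representative lies in $F^{\det}$, which is decidable by~(iii). Because $F^{\det}$ is $H$-invariant, this marking does not depend on the representative chosen.

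\textbf{Soundness and completeness.} We show that an orbit $O$ is reachable in the graph from $HI=\{I\}$ if and only if some state in $O$ is reachable in $\det(\mathcal A)$. We prove this by induction on path and run length.

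For the forward direction, suppose a run reaches $U$ and then reads $a$. Then $(U,a)$ is a reachable pair, so its orbit is among those enumerated, and the corresponding edge leads from $HU$ to $H\delta_{\mathcal A}(U,a)$.

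For the converse, take an edge $HU\to H\delta_{\mathcal A}(U,a)$ with $(U,a)$ reachable. By induction some $U'\in HU$ is reachable, say $U'=hU$. Then $(U',ha)$ is reachable, and it has successor $h\delta_{\mathcal A}(U,a)$, which lies in the target orbit.

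Emptiness of $L(\mathcal A)$ then amounts to the absence of an accepting reachable vertex. This uses that $L(\mathcal A)=L(\det(\mathcal A))$ by Proposition~\ref{thm:determinization}. Emptiness is therefore decided by breadth-first search on the finite graph.

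\textbf{Main obstacle.} The delicate point is the interpretation of ``relevant'' and ``reachable'' pairs in~(ii). The enumeration must cover every input $a$, including letters carrying fresh atoms outside $S$. Otherwise the forward direction can fail. I would read~(ii) as providing, for each reachable state orbit, representatives of all $H$-orbits of pairs $(U,a)$ with $U$ in that orbit. This is exactly what the induction uses. Under that reading, the converse step needs only equivariance and no extra finiteness.
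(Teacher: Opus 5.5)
Your proposal is correct and follows essentially the same route as the paper's proof. Both build the quotient graph on the reachable $H$-orbits, with an edge for each enumerated pair orbit. Both project concrete runs to quotient paths, lift quotient paths using local equivariance $hU\xrightarrow{ha}h\,\delta_{\mathcal A}(U,a)$, and use the $H$-invariance of $F^{\det}$ to make acceptance constant on orbits. Your explicit check that an edge does not depend on the chosen pair representative, and your reading of (ii) as covering every pair orbit over each reachable state orbit (including letters with fresh atoms), spell out what the paper's projection step uses tacitly.
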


\begin{proof}
The hypotheses define a finite quotient graph whose vertices are the reachable $H$-orbits of subset states. An edge from the orbit of $U$ to the orbit of $U'$ is present when some represented joint orbit contains a pair $(U,a)$ with $\delta_{\mathcal A}(U,a)=U'$. Every concrete run projects to a quotient path.

Conversely, consider a quotient path and suppose that the current concrete state lies in the orbit represented by the source of its next edge. If that edge is represented by $U\xrightarrow{a}U'$,
and the current state is $hU$, local equivariance gives
$hU\xrightarrow{ha}hU'$.
For a subsequent edge, first choose a group element carrying its selected source representative to the concrete state already reached, and apply the same argument. Induction therefore lifts every quotient path. Since the accepting subset is $H$-invariant, acceptance is constant on state orbits, and finite graph search is sound and complete for emptiness and reachable-orbit analysis.
\end{proof}

\section{Supported quotients and context-relative minimization}\label{sec:quotients}

A supported equivalence relation need not be preserved by all of $G$. Its quotient is naturally a system for the stabilizer of its support. Operationally, this means that an abstraction may depend on the finite interface fixed by the analysis. The statements of this section use Property~I.

\begin{proposition}[Local quotient theorem]\label{prop:local-quotient}
Let $R\subseteq X\times X$ be an equivalence relation supported by $S$. Then $G_{(S)}$ acts on $X/R$ by $g[x]_R=[gx]_R$.
If $f:X\to Y$ is an $S$-supported map compatible with equivalence relations $R$ on~$X$ and~$T$ on $Y$, both supported by $S$, then~$f$ induces a $G_{(S)}$-equivariant map $X/R\to Y/T$.
\end{proposition}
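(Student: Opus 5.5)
The plan is to verify well-definedness directly from the support hypothesis, and then read off equivariance. I take ``compatible'' to mean that $(x,x')\in R$ implies $(f(x),f(x'))\in T$.

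\textbf{Action on the quotient.} $R$ is supported by $S$, so $gR=R$ for every $g\in G_{(S)}$. The action on $X\times X$ is componentwise, so $(x,x')\in R$ implies $(gx,gx')\in gR=R$. Hence the assignment $g[x]_R=[gx]_R$ does not depend on the representative, and it is well defined. The action axioms $e[x]_R=[x]_R$ and $(gh)[x]_R=g(h[x]_R)$ are inherited from the $G$-action on $X$. The same reasoning shows that $G_{(S)}$ acts on $Y/T$.

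\textbf{The induced map.} Define $\bar f([x]_R)=[f(x)]_T$. Compatibility makes this independent of the representative. For $g\in G_{(S)}$, Proposition~\ref{prop:criterion} gives $f(gx)=gf(x)$, because $f$ is $S$-supported. Therefore
\[
\bar f(g[x]_R)=[f(gx)]_T=[gf(x)]_T=g[f(x)]_T=g\bar f([x]_R),
\]
so $\bar f$ is $G_{(S)}$-equivariant.

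\textbf{Main point of care.} The only subtle step is the first one. Well-definedness of the action needs invariance of $R$ under $g$ itself. Invariance under $g^{-1}$ follows as well, because $G_{(S)}$ is a group. No argument about finite order is needed, in line with the paper's remark about arbitrary data symmetries. Nothing requires the quotient to be a nominal $G$-set. The group acting is only $G_{(S)}$, and elements outside it may fail to preserve $R$.
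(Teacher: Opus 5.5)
Your proof is correct and follows the same route as the paper: the quotient action is well defined because $gR=R$, the induced map is well defined by compatibility, and its $G_{(S)}$-equivariance comes from Proposition~\ref{prop:criterion}. The paper also notes something the statement does not require: when $X$ is nominal, any support $T$ of $x$ supports $[x]_R$ for the group $G_{(S)}$, since $(G_{(S)})_{(T)}=G_{(S\cup T)}$, so $X/R$ is a nominal $G_{(S)}$-set.
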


\begin{proof}
If $xRy$ and $g\in G_{(S)}$, then $(gx)R(gy)$ because $gR=R$. Thus the quotient action is well defined. If $X$ is nominal and $T$ supports $x$, then $T$ supports $[x]_R$ for the restricted group, since $(G_{(S)})_{(T)}=G_{(S\cup T)}$. Hence the quotient is a nominal $G_{(S)}$-set. Compatibility of~$f$ gives the induced map, and Proposition~\ref{prop:criterion} gives its equivariance under $G_{(S)}$.
\end{proof}

For a finitely supported language $L\subseteq\Sigma^*$, define the right Nerode equivalence by \ 
$u\Ner_L v \quad\Longleftrightarrow\quad \forall 
w\in\Sigma^*\ (uw\in L\Longleftrightarrow vw\in L)$.

\begin{proposition}[Supported Nerode quotient]\label{thm:nerode}
If $S$ supports $L$, then $S$ supports~$\Ner_L$. The quotient $\Sigma^*/\Ner_L$, with initial state $[\varepsilon]$, transitions $[u]\xrightarrow{a}[ua]$, and accepting states $\{[u]:u\in L\}$, is a deterministic $G_{(S)}$-equivariant automaton recognizing $L$. It is minimal in the sense of Corollary~\ref{cor:nerode-factorization}.
\end{proposition}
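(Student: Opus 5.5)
The proof has four parts: support of $\Ner_L$, the equivariant quotient automaton, correctness, and minimality.

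\emph{Support of $\Ner_L$.} Fix $g\in G_{(S)}$ and suppose $u\Ner_L v$. We must show $gu\Ner_L gv$. Let $w\in\Sigma^*$ and put $w'=g^{-1}w$. The action on words is componentwise, so concatenation is equivariant: $g(uw')=(gu)w$. Since $gL=L$,
\[
(gu)w\in L \iff g(uw')\in L \iff uw'\in L .
\]
By $u\Ner_L v$, the last condition is equivalent to $vw'\in L$, and running the same chain backwards gives $(gv)w\in L$. So $g\Ner_L\subseteq\Ner_L$. Applying this to $g^{-1}$ gives the reverse inclusion, hence $g\Ner_L=\Ner_L$.

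\emph{The quotient is a nominal $G_{(S)}$-set.} Apply Proposition~\ref{prop:local-quotient} to $X=\Sigma^*$ and $R=\Ner_L$. Here $\Sigma^*$ is nominal, because a word is supported by the union of supports of its letters. So $G_{(S)}$ acts on $\Sigma^*/\Ner_L$ by $g[u]=[gu]$, and the quotient is a nominal $G_{(S)}$-set.

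\emph{The transition is well defined.} If $u\Ner_L v$, then $ua\Ner_L va$, taking test suffixes of the form $aw$. So $[u]\xrightarrow{a}[ua]$ does not depend on the representative. Equivalently, the $\emptyset$-supported map $(u,a)\mapsto ua$ is compatible with the relations involved.

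\emph{Acceptance is well defined.} If $u\Ner_L v$, then testing with $w=\varepsilon$ gives $u\in L\iff v\in L$. So the accepting set $\{[u]:u\in L\}$ does not depend on representatives.

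\emph{$G_{(S)}$-equivariance.} For $g\in G_{(S)}$:
\begin{itemize}
\item $\delta(g[u],ga)=[(gu)(ga)]=[g(ua)]=g\,\delta([u],a)$;
\item $g[\varepsilon]=[\varepsilon]$;
\item the accepting set is $g$-invariant because $gL=L$.
\end{itemize}

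\emph{Recognition.} An induction on $w$ shows that the run from $[\varepsilon]$ on $w$ ends in $[w]$. This state is accepting exactly when $w\in L$. Note that if $[w]=[u]$ for some $u\in L$, then $w\Ner_L u$ and the $\varepsilon$-test give $w\in L$.

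\emph{Minimality.} This refers to Corollary~\ref{cor:nerode-factorization}, which is stated later. I read it as the claim that every $G_{(S)}$-equivariant deterministic automaton recognizing $L$ factors onto this one from its reachable part. The plan is to send a reachable state $\delta^*(q_0,u)$ to $[u]$.
\begin{itemize}
\item \emph{Well defined:} if two words reach the same state, they have the same future behaviour, so they are Nerode equivalent.
\item \emph{Surjective and transition-preserving:} this follows directly from the definitions.
\item \emph{Equivariant:} $\delta^*(q_0,gu)=g\,\delta^*(q_0,u)$ by equivariance of $\delta$ and $hq_0=q_0$, so the map commutes with the $G_{(S)}$-actions.
\end{itemize}

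\emph{Main obstacle.} The only genuinely subtle step is the first one, the support of $\Ner_L$. There the universal quantification over suffixes must be transported by substituting $w=g\,w'$. This substitution needs $w\mapsto gw$ to be a bijection of $\Sigma^*$, which holds because $g$ acts as a permutation. The remaining steps are routine.
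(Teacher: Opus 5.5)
Your proposal is correct and takes essentially the same route as the paper. The key step is the same substitution $(gu)w=g\bigl(u(g^{-1}w)\bigr)$ together with $gL=L$; after it, well-definedness, $G_{(S)}$-equivariance, recognition, and minimality via the map $\delta^*(q_0,u)\mapsto[u]$ of Corollary~\ref{cor:nerode-factorization} are routine. You make explicit points the paper leaves implicit, namely the reverse inclusion via $g^{-1}$ and the right-congruence $u\Ner_L v\Rightarrow ua\Ner_L va$ behind well-defined transitions; your minimality sketch omits only that corollary's uniqueness clause, which follows by the same induction on $|u|$.
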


\begin{proof}
Let $g\in G_{(S)}$ and $u\Ner_L v$. For every $w$,
$(gu)w=g\bigl(u(g^{-1}w)\bigr)$. Since $gL=L$,
\[ (gu)w\in L \Longleftrightarrow u(g^{-1}w)\in L \Longleftrightarrow v(g^{-1}w)\in L \Longleftrightarrow (gv)w\in L.
\]
Hence $gu\Ner_L gv$. The transition $[u]\xrightarrow{a}[ua]$ and the accepting quotient states are therefore well defined and $G_{(S)}$-equivariant. That the quotient recognizes $L$ is the classical argument.
\end{proof}

For $S=\varnothing$ this is the Nerode construction for $G$-equivariant automata underlying the Myhill--Nerode theorem of \cite{BojanczykKlinLasota2014}; by Remark~\ref{rem:context-symmetry}, a general context gives the same construction for the data symmetry $(\A,G_{(S)})$.

\begin{corollary}[Canonical factorization]\label{cor:nerode-factorization}
Let $H=G_{(S)}$, and let $\mathcal D=(Q,\Sigma,\delta,q_0,F)$ be a deterministic $H$-equivariant automaton recognizing an $S$-supported language $L$, in which every state is reachable from $q_0$. Write $\delta^*(q,u)$ for the state reached from $q$ after reading $u$, and $[u]\cdot a=[ua]$. Then
\[
\varphi:Q\to\Sigma^*/{\Ner_L},\qquad \varphi(\delta^*(q_0,u))=[u]_{\Ner_L},
\]
is a well-defined surjective $H$-equivariant map with $\varphi(q_0)=[\varepsilon]$, $\varphi(\delta(q,a))=\varphi(q)\cdot a$, and $q\in F\iff\varphi(q)\in\{[u]:u\in L\}$. It is the unique map satisfying the first two equations.
\end{corollary}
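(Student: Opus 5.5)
We prove well-definedness, the three equations, equivariance, surjectivity and uniqueness in turn, following the classical Myhill--Nerode factorization and tracking the action of $H$.

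\emph{Well-definedness.} Since every state is reachable, each $q\in Q$ has the form $\delta^*(q_0,u)$. Suppose $\delta^*(q_0,u)=\delta^*(q_0,v)$. Then for every $w$ we have $\delta^*(q_0,uw)=\delta^*(q_0,vw)$, because $\delta^*$ is computed letter by letter. Since $\mathcal D$ recognizes $L$, it follows that $uw\in L\iff vw\in L$, so $u\Ner_L v$. Thus $\varphi$ is well defined.

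\emph{The equations and acceptance.} Surjectivity is immediate, since $[u]=\varphi(\delta^*(q_0,u))$. Taking $u=\varepsilon$ gives $\varphi(q_0)=[\varepsilon]$. For $q=\delta^*(q_0,u)$ we have $\delta(q,a)=\delta^*(q_0,ua)$, so $\varphi(\delta(q,a))=[ua]=\varphi(q)\cdot a$. The operation $[u]\cdot a$ is well defined because $\Ner_L$ is a right congruence (Proposition~\ref{thm:nerode}). For acceptance, $q=\delta^*(q_0,u)\in F$ iff $u\in L$. Moreover $[u]$ lies in $\{[v]:v\in L\}$ iff $u\in L$, since $\Ner_L$ refines membership in $L$ (take $w=\varepsilon$).

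\emph{Equivariance.} Let $h\in H$ and $q=\delta^*(q_0,u)$. By induction on $u$, using $\delta(hq,ha)=h\delta(q,a)$ and $hq_0=q_0$, we get $h\delta^*(q_0,u)=\delta^*(q_0,hu)$. Hence $\varphi(hq)=[hu]$. By Proposition~\ref{thm:nerode} the action of $H$ on the quotient is $h[u]=[hu]$, well defined by Proposition~\ref{prop:local-quotient} because $S$ supports $\Ner_L$. So $\varphi(hq)=h\varphi(q)$.

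\emph{Uniqueness.} Let $\psi$ satisfy $\psi(q_0)=[\varepsilon]$ and $\psi(\delta(q,a))=\psi(q)\cdot a$. Induction on $u$ gives $\psi(\delta^*(q_0,u))=[\varepsilon]\cdot u=[u]$. Since every state is reachable, $\psi=\varphi$.

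The only delicate point is the equivariance step. It requires the $H$-action on the quotient to be well defined, which is where the hypothesis that $S$ supports $L$ enters, via Proposition~\ref{thm:nerode}. It also requires the induction on words using the diagonal action. Everything else is the classical argument.
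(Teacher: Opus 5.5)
Your proposal is correct and follows the paper's proof essentially step for step. Well-definedness comes from the Nerode condition via $\delta^*(q,uw)=\delta^*(\delta^*(q,u),w)$, and the structural equations and surjectivity come from reachability. Equivariance comes from the induction $\delta^*(q_0,hu)=h\delta^*(q_0,u)$ together with the quotient action made well defined by Propositions~\ref{prop:local-quotient} and~\ref{thm:nerode}, and uniqueness comes from induction on $u$. You also make explicit two small points the paper leaves implicit: $[u]\cdot a$ is well defined because $\Ner_L$ is a right congruence, and $[u]\in\{[v]:v\in L\}$ holds exactly when $u\in L$ (take $w=\varepsilon$).
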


\begin{proof}
Recall that $\delta^*(q,uw)=\delta^*(\delta^*(q,u),w)$.
\emph{Well-definedness.}  If $\delta^*(q_0,u)=\delta^*(q_0,v)$, then for every $w\in\Sigma^*$,
$uw\in L\iff\delta^*(\delta^*(q_0,u),w)\in F\iff\delta^*(\delta^*(q_0,v),w)\in F\iff vw\in L$,
so $u\Ner_L v$. Since every state is reachable, $\varphi$ is defined on all of $Q$.
\emph{Structure.}  Clearly $\varphi(q_0)=[\varepsilon]$. If $q=\delta^*(q_0,u)$, then $\delta(q,a)=\delta^*(q_0,ua)$, so $\varphi(\delta(q,a))=[ua]=\varphi(q)\cdot a$; moreover $q\in F$ if and only if $u\in L$. Surjectivity holds because $[u]=\varphi(\delta^*(q_0,u))$.
\emph{Equivariance.}  For $h\in H$, induction on the length of $u$, using $hq_0=q_0$ and $\delta(hq,ha)=h\delta(q,a)$, gives $\delta^*(q_0,hu)=h\delta^*(q_0,u)$. Hence $\varphi(hq)=\varphi(\delta^*(q_0,hu))=[hu]=h[u]=h\varphi(q)$, the action on classes being well defined by Propositions~\ref{prop:local-quotient} and~\ref{thm:nerode}.
\emph{Uniqueness.}  If $\psi(q_0)=[\varepsilon]$ and $\psi(\delta(q,a))=\psi(q)\cdot a$, induction on the length of $u$ gives $\psi(\delta^*(q_0,u))=[u]$, so $\psi=\varphi$ by reachability.
\end{proof}

\section{Support-aware operational semantics and concurrency}\label{sec:concurrency}

Finite-context dependency analysis applies not only to already constructed transition relations, but also to rule-generated operational semantics. This is important for process calculi and programming languages, where transitions are normally defined by inference rules rather than by an explicit relation. Rule-generated relations are handled by Property~I, and bisimilarity by Properties~II and~III.
Only finitary rules with positive premises are considered in this section. Formats with negative premises require an additional stratification or well-supported-proof semantics; the same support argument applies only after the corresponding derivation notion has been fixed.

Let $J$ be a nominal $G$-set of judgements. A finitary inference rule is a pair $(P,c)$ with finite premise set $P\subseteq J$ and conclusion $c\in J$. The action is componentwise, $g(P,c)=(gP,gc)$. For a rule set $\mathcal R$, write $\operatorname{Der}(\mathcal R)\subseteq J$ for the least set containing the conclusions of all rule instances whose premises already belong to it. The following is the fixed-context form of \cite[Theorem~3.6]{Pitts2006}, which is stated there for the equality symmetry; the proof carries over verbatim to an arbitrary data symmetry.

\begin{proposition}[Support of rule-generated semantics]\label{thm:sos-support}
If the inference system~$\mathcal R$ is supported by a finite context $S$, then the derivability set $\operatorname{Der}(\mathcal R)$ is supported by $S$. In particular, if $J$ consists of transition judgements $p\xrightarrow{\ell}q$, the least transition relation generated by an $S$-supported SOS specification is $S$-supported.
\end{proposition}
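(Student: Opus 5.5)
The plan is to write $\operatorname{Der}(\mathcal R)$ as the least fixed point of the one-step rule operator and then apply localization. On the full powerset $\mathcal P(J)$ define
\[
\Phi_{\mathcal R}(U)=\{c\in J:\ \exists P\ \bigl((P,c)\in\mathcal R\text{ and }P\subseteq U\bigr)\}.
\]
This operator is monotone. By the definition of derivability, $\operatorname{Der}(\mathcal R)$ is the least $U$ with $\Phi_{\mathcal R}(U)\subseteq U$. By Knaster--Tarski on the complete lattice $\mathcal P(J)$, it is therefore $\lfp(\Phi_{\mathcal R})$. This formulation on the ambient powerset means no internal completeness is needed; the argument does not pass through $\Pfs(J)$.

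\emph{Step 1: $\Phi_{\mathcal R}$ is $G_{(S)}$-equivariant on $\mathcal P(J)$.} Let $g\in G_{(S)}$, so $g\mathcal R=\mathcal R$. If $c\in\Phi_{\mathcal R}(gU)$, pick $(P,c)\in\mathcal R$ with $P\subseteq gU$. Then $(g^{-1}P,g^{-1}c)\in g^{-1}\mathcal R=\mathcal R$ and $g^{-1}P\subseteq U$. Hence $g^{-1}c\in\Phi_{\mathcal R}(U)$, that is, $c\in g\Phi_{\mathcal R}(U)$. The reverse inclusion is the same argument with $g$ and $g^{-1}$ interchanged. So $\Phi_{\mathcal R}(gU)=g\Phi_{\mathcal R}(U)$. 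This uses only that $G_{(S)}$ is a group and that the action on rules is componentwise; it needs no finiteness of premises and no finitary permutations.

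\emph{Step 2: localization.} The group $H=G_{(S)}$ acts on $\mathcal P(J)$ by direct image, and this action is by complete-lattice automorphisms. Lemma~\ref{thm:fixed-point-localization} then gives that $\lfp(\Phi_{\mathcal R})$ is $H$-fixed, so $\operatorname{Der}(\mathcal R)\in\PS(J)$, which is the claim. Alternatively, one can argue directly from the definition: the derivability set is the intersection of all rule-closed sets, and $g$ permutes the rule-closed sets by Step~1, so the intersection is invariant.

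\emph{Step 3: transition judgements.} The SOS case is the specialization in which $J$ is a set of triples $(p,\ell,q)$ with the diagonal action. The least transition relation generated by the specification is then exactly $\operatorname{Der}(\mathcal R)$.

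The only point requiring care, which I expect to be the main obstacle though it is mild, is matching the paper's informal description of $\operatorname{Der}(\mathcal R)$ (``least set containing conclusions of rule instances whose premises already belong to it'') with the prefixed-point characterization. The description could also be read as the union of finite-stage iterates $\Phi_{\mathcal R}^n(\varnothing)$. For finitary rules the two readings agree, since $\Phi_{\mathcal R}$ preserves directed unions when premise sets are finite. Either way the resulting set is $H$-invariant: each iterate is $H$-invariant by Step~1 and induction, and unions of invariant sets are invariant. So the result holds under both readings.
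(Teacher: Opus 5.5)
Your proof is correct, but it takes a different route from the paper's. The paper works directly with derivations. For $g\in G_{(S)}$, $g$ maps rule instances of $\mathcal R$ to rule instances, so induction on derivation height gives $g\operatorname{Der}(\mathcal R)\subseteq\operatorname{Der}(\mathcal R)$, and applying $g^{-1}$ gives equality. You instead encode the rules as the one-step operator $\Phi_{\mathcal R}$ on the complete lattice $\mathcal P(J)$, prove $\Phi_{\mathcal R}(gU)=g\Phi_{\mathcal R}(U)$, and invoke Lemma~\ref{thm:fixed-point-localization}. Your closing remark on the iterates $\Phi_{\mathcal R}^n(\varnothing)$ is the paper's height induction rewritten in fixed-point form.

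The paper's argument is shorter and entirely elementary. It does, however, treat derivations as well-founded objects carrying a height, which are natural-number heights here because the rules are finitary. Your argument needs no finitariness. It also places inductive definitions in the same localization scheme the paper later uses for bisimilarity, reachability and the $\mu$-calculus. Finally, it gives slightly more than the statement. $\operatorname{Der}(\mathcal R)$ is also the least fixed point of the restriction of $\Phi_{\mathcal R}$ to $\PS(J)$. Since $\Phi_{\mathcal R}$ restricts to an $S$-supported monotone map on $\Pfs(J)$, Corollary~\ref{cor:iteration-bound} shows that at most $|G_{(S)}\backslash J|$ rounds of rule application suffice whenever this orbit set is finite.
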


\begin{proof}
Let $g\in G_{(S)}$. Since $g\mathcal R=\mathcal R$, applying $g$ to a rule instance again yields a rule instance of $\mathcal R$. Induction on derivation height shows that $j\in\operatorname{Der}(\mathcal R)$ implies $gj\in\operatorname{Der}(\mathcal R)$. Applying~$g^{-1}$ yields equality.
\end{proof}

Proposition~\ref{thm:sos-support} gives a direct dependency analysis for structural operational semantics: names occurring only outside the support of the rule scheme cannot affect derivability. Standard operational specifications of the $\pi$-calculus and related name-passing calculi are globally equivariant; rule formats for nominal process calculi guarantee this property syntactically \cite{Milner1999,AcetoEtAl2019}. The next result shows that an equivariant operational semantics is closed under finitely supported policy restriction: intersecting its transition relation with an $S$-supported guard produces a semantics equivariant under the local group~$G_{(S)}$.

\begin{proposition}[Finite-context extension of an operational semantics]\label{prop:policy-extension}
Let $T_0\subseteq X\times\Lambda\times X$ be an equivariant transition relation and let $C\subseteq X\times\Lambda\times X$ be an $S$-supported guard, policy, or side condition. Then $T_S=T_0\cap C$ is an $S$-supported transition relation. Thus a globally equivariant process semantics may be refined by finitely many distinguished channels, principals, locations, or resources while retaining equivariance under $G_{(S)}$.
\end{proposition}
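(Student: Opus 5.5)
The plan is to verify directly that $T_S=T_0\cap C$ is invariant under the stabilizer $G_{(S)}$, using only the definition of support for subsets of $X\times\Lambda\times X$ under the componentwise action. By Proposition~\ref{prop:orbit-algebra}, a subset is $S$-supported exactly when it lies in the context lattice $\PS(X\times\Lambda\times X)=\mathcal P(X\times\Lambda\times X)^{G_{(S)}}$, and this lattice is closed under arbitrary intersections. It therefore suffices to place both $T_0$ and $C$ in this lattice.

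For $C$ this is the hypothesis. For $T_0$, equivariance means $gT_0=T_0$ for every $g\in G$, which is support by the empty context. Since $G_{(S)}\subseteq G$, in particular $gT_0=T_0$ for every $g\in G_{(S)}$; equivalently, $\varnothing\subseteq S$ and the monotonicity of supports under enlarging the context give $T_0\in\PS$. The intersection is then $S$-supported.

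If I wanted to avoid citing the lattice closure, the elementwise check is a single line. Since $g$ acts as a bijection on $X\times\Lambda\times X$, it commutes with intersection, so
\[
g(T_0\cap C)=gT_0\cap gC=T_0\cap C
\]
for every $g\in G_{(S)}$.

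The interpretive sentence, that the refined semantics is equivariant under $G_{(S)}$, is then just Remark~\ref{rem:context-symmetry}. An $S$-supported relation is exactly a $G_{(S)}$-fixed element, that is, an equivariant relation for the data symmetry $(\A,G_{(S)})$. If one also wants the induced successor map $x\mapsto\{(\ell,y):(x,\ell,y)\in T_S\}$ to be $S$-supported, that follows from Proposition~\ref{prop:criterion} in the same way as the determinization argument of Proposition~\ref{thm:determinization}.

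I do not expect a real obstacle here. The only point requiring care is that $g$ acting by direct image commutes with intersection, which uses injectivity of the action of each group element. That injectivity is immediate because $g$ is a group element and so acts invertibly.
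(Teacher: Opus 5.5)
Your proof is correct and is essentially the paper's argument. Both $T_0$ and $C$ are $G_{(S)}$-invariant, $T_0$ because equivariance means invariance under all of $G\supseteq G_{(S)}$, so their intersection is too; your check $g(T_0\cap C)=gT_0\cap gC$ just makes explicit the step the paper leaves implicit.
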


\begin{proof}
Both $T_0$ and $C$ are invariant under $G_{(S)}$, hence so is their intersection.
\end{proof}

Strong bisimilarity is the greatest fixed point of the transformer defined below, and is therefore characterized, and reasoned about, coinductively \cite{Sangiorgi2012}. For an $S$-supported labelled transition relation~$T$, define the strong-bisimulation transformer on binary relations by
\[
\mathcal B_T(R)=
\left\{(x,y):
\begin{array}{l}
\forall a\in\Lambda\ \forall x'\,
 \bigl(x\xrightarrow{a}x'\Rightarrow
       \exists y'\,(y\xrightarrow{a}y'\ \wedge\ (x',y')\in R)\bigr),\\[1mm]
\forall a\in\Lambda\ \forall y'\,
 \bigl(y\xrightarrow{a}y'\Rightarrow
       \exists x'\,(x\xrightarrow{a}x'\ \wedge\ (x',y')\in R)\bigr)
\end{array}
\right\}.
\]

\begin{proposition}[Contextual bisimulation]\label{thm:context-bisim}
The transformer $\mathcal B_T$ is monotone and $G_{(S)}$-equivariant on the full lattice $\mathcal P(X\times X)$. Consequently, ordinary strong bisimilarity is supported by $S$ and coincides with the greatest fixed point of $\mathcal B_T$ computed in $\PS(X\times X)$. Thus restricting the computation to the context lattice loses no bisimulation pairs. If $G_{(S)}\backslash(X\times X)$ has $m<\infty$ elements, the greatest-fixed-point iteration from $X\times X$ stabilizes after at most $m$ steps; if, in addition, the induced bisimulation transformer on the orbit powerset is effectively computable, bisimilarity is computable.
\end{proposition}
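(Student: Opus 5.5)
The plan is to establish the two structural facts about $\mathcal B_T$ directly and then obtain every remaining claim from Lemma~\ref{thm:fixed-point-localization} and Corollary~\ref{cor:iteration-bound}.

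\emph{Monotonicity.} If $R\subseteq R'$, every witness $y'$ with $(x',y')\in R$ also satisfies $(x',y')\in R'$, and symmetrically for the second clause. Hence $\mathcal B_T(R)\subseteq\mathcal B_T(R')$ on all of $\mathcal P(X\times X)$.

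\emph{Equivariance.} Fix $g\in G_{(S)}$. Since $T$ is $S$-supported, $gT=T$, so $x\xrightarrow{a}x'$ holds if and only if $gx\xrightarrow{ga}gx'$; here $\Lambda$ carries its $G$-action, trivial labels being a special case. I will show $\mathcal B_T(gR)=g\mathcal B_T(R)$ by verifying $(gx,gy)\in\mathcal B_T(gR)\iff(x,y)\in\mathcal B_T(R)$. Every transition $gx\xrightarrow{b}z$ can be written as $g$ applied to $x\xrightarrow{g^{-1}b}g^{-1}z$. A matching $y\xrightarrow{g^{-1}b}y'$ with $(g^{-1}z,y')\in R$ transports under $g$ to a matching $gy\xrightarrow{b}gy'$ with $(z,gy')\in gR$. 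Applying the same argument with $g^{-1}$ gives the converse. This is the same bookkeeping as in the proof of Proposition~\ref{thm:determinization}, because the quantifiers over $a$, $x'$, $y'$ range over $G_{(S)}$-invariant sets.

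\emph{Consequences.} Strong bisimilarity is $\gfp(\mathcal B_T)$ in the complete lattice $\mathcal P(X\times X)$. With $H=G_{(S)}$ acting by direct image, which is a complete-lattice automorphism, Lemma~\ref{thm:fixed-point-localization} shows that this greatest fixed point is $H$-fixed, i.e.\ $S$-supported. The same lemma shows it equals the greatest fixed point of the restriction to $\mathcal P(X\times X)^{H}=\PS(X\times X)$. So no bisimulation pairs are lost when working in the context lattice.

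If $G_{(S)}\backslash(X\times X)$ has $m$ elements, then $\PS(X\times X)\cong\mathcal P(G_{(S)}\backslash(X\times X))$ by Proposition~\ref{prop:orbit-algebra}. The descending iterates $\mathcal B_T^k(X\times X)$ stay in this lattice by equivariance, since $X\times X$ is invariant. A strictly decreasing chain there has at most $m+1$ elements, so the iteration stabilizes by step $m$. The stable value is the greatest fixed point by the dual of the argument in Proposition~\ref{thm:terminating-iteration}. One could also cite Corollary~\ref{cor:iteration-bound} after restricting $\mathcal B_T$ to $\Pfs(X\times X)$, but the direct argument avoids requiring $X$ to be nominal. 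For computability, the isomorphism transports each iterate to a subset of the finite orbit set. If the induced transformer is computable there, the $m$ iterations are a finite computation, and membership of $(x,y)$ reduces to deciding which orbit contains the pair, which I take as part of the effective presentation.

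The main obstacle is the equivariance step. It needs care with the nested quantifiers and with the action on labels. After that, everything else is a direct invocation of earlier results.
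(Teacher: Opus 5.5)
Your proposal is correct and follows the paper's proof. You establish monotonicity and $G_{(S)}$-equivariance of $\mathcal B_T$ on all of $\mathcal P(X\times X)$, apply Lemma~\ref{thm:fixed-point-localization} to place the greatest fixed point in $\PS(X\times X)$ and identify it with the context fixed point, and then bound the iteration by the orbit count. The only difference is that you run the descending-chain count directly in $\PS(X\times X)\cong\mathcal P(G_{(S)}\backslash(X\times X))$ instead of restricting $\mathcal B_T$ to $\Pfs(X\times X)$ and citing Corollary~\ref{cor:iteration-bound}; this inlines that corollary's proof, and your equivariance check is simply a more explicit version of the paper's one-line transport argument.
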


\begin{proof}
Monotonicity is immediate. Put $H=G_{(S)}$. For every relation $R$ and $g\in H$, transporting transitions and matching transitions by $g$ gives
\[
\mathcal B_T(gR)=g\mathcal B_T(R).
\]
So $\mathcal B_T$ is $H$-equivariant. Apply Lemma~\ref{thm:fixed-point-localization} to the complete lattice $\mathcal P(X\times X)$, on which~$H$ acts by complete-lattice automorphisms. Its greatest fixed point is the ordinary largest strong bisimulation, belongs to $\PS(X\times X)$, and equals the greatest fixed point of the restricted transformer. Since $\mathcal B_T$ maps $V$-supported relations to $(S\cup V)$-supported ones, it restricts to an $S$-supported monotone map on $\Pfs(X\times X)$, and Corollary~\ref{cor:iteration-bound} gives the bound on the number of iterations. Under the stated effectiveness hypothesis, the restricted lattice is the finite powerset of the pair-orbit set and its induced transformer is computable on orbit~cells.
\end{proof}

The statement concerns ordinary labels. A corresponding localization argument applies to a name-binding transition system once its residual object and binding-aware matching functional have been defined and shown to be $G_{(S)}$-equivariant. Established nominal transition-system semantics provide such residual and alpha-compatible constructions for several process-calculus bisimulations \cite{ParrowEtAl2021,AcetoEtAl2019}. We do not reproduce their binding-specific adequacy proofs here; the present statement adds only the finite-context localization step for predicates, policy guards, and behavioural~relations.

\section{Supported predicate-transformer logic}\label{sec:logic}

The preceding sections analyse operational transitions; the present section turns to specifications. When a transition relation and the atomic predicates of a program logic are supported by the same finite context $S$, the induced modal operators are monotone self-maps of the complete Boolean algebra $\mathcal P_S(X)$. Consequently, modal least and greatest fixed points are computed inside that context (Property~II), and under a finite orbit hypothesis within the iteration bound of Property~III.

Let $R\subseteq X\times X$ be supported by $S$. For $U,V\subseteq X$ define

\centerline{$\Post_R(U)=\{y:\exists x\in U\ (x,y)\in R\}$,}

\centerline{$\Pre_R(V)=\{x:\forall y\ ((x,y)\in R\Rightarrow y\in V)\}$.}

\begin{proposition}[Supported modal adjunction]\label{prop:modal-adjunction}
For all $g\in G_{(S)}$ and $U,V\subseteq X$,
\[
\Post_R(gU)=g\Post_R(U),\qquad \Pre_R(gV)=g\Pre_R(V).
\]
Consequently $\Post_R$ and $\Pre_R$ are $S$-supported maps on $\mathcal P(X)$, and they are $G$-equivariant when $R$ is. They map $\PS(X)$ into itself and $T$-supported subsets to $(S\cup T)$-supported subsets, and
$\Post_R(U)\subseteq V\iff U\subseteq\Pre_R(V)$. The same holds for the modalities obtained from the converse relation $R^{-1}$.
\end{proposition}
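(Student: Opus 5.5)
The plan is a direct element chase using only $gR=R$ for $g\in G_{(S)}$, followed by Proposition~\ref{prop:criterion} to turn equivariance into support statements.

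\emph{Equivariance identities.} Fix $g\in G_{(S)}$. Since $R$ is a relation on $X$ with the diagonal action, $(x,y)\in R$ holds iff $(gx,gy)\in R$.
\begin{itemize}
\item For $\Post_R$: $y\in\Post_R(gU)$ iff there is $x\in U$ with $(gx,y)\in R$. This holds iff $(x,g^{-1}y)\in R$, that is, iff $g^{-1}y\in\Post_R(U)$, that is, iff $y\in g\Post_R(U)$.
\item For $\Pre_R$: $x\in\Pre_R(gV)$ iff every $y$ with $(x,y)\in R$ lies in $gV$. Substitute $y=gy'$, which is a bijection of $X$. The condition becomes: every $y'$ with $(g^{-1}x,y')\in R$ lies in $V$. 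This says $g^{-1}x\in\Pre_R(V)$.
\item If $R$ is $G$-equivariant, the same computation runs for all $g\in G$.
\end{itemize}

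\emph{Support consequences.} The group acts on maps $\mathcal P(X)\to\mathcal P(X)$ by conjugation. By the first assertion of Proposition~\ref{prop:criterion}, the identities say exactly that $\Post_R$ and $\Pre_R$ are supported by $S$. The second assertion of Proposition~\ref{prop:criterion} then shows that a $T$-supported input yields an $(S\cup T)$-supported output. Taking $T=S$ shows that both maps preserve $\PS(X)$.

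\emph{Adjunction.} Both sides unfold to the same statement, $\forall x\,\forall y\,\bigl(x\in U\wedge(x,y)\in R\Rightarrow y\in V\bigr)$, so the equivalence $\Post_R(U)\subseteq V\iff U\subseteq\Pre_R(V)$ holds.

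\emph{Converse relation.} $R^{-1}$ is the image of $R$ under the equivariant swap map, so it is again $S$-supported; this can also be checked directly from $(x,y)\in R\iff(gx,gy)\in R$. The whole argument therefore applies verbatim to $R^{-1}$.

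No step is a real obstacle. The only point needing care is the $\Pre_R$ computation, where the universal quantifier must be reindexed along the bijection $y\mapsto gy$. I would write that step explicitly and leave the rest as routine.
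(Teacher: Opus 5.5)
Your proposal is correct and follows the paper's own proof essentially step for step: the same element chase for $\Post_R$ and $\Pre_R$ (with the reindexing $y=gy'$ in the universal clause), the support and $\PS(X)$-preservation claims read off from Proposition~\ref{prop:criterion}, the adjunction as the usual direct-image/universal-preimage one, and the observation that $R^{-1}$ is again $S$-supported. Your explicit unfolding of the adjunction and your remark that $R^{-1}$ is the image of $R$ under the equivariant swap only fill in details the paper leaves implicit.
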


\begin{proof}
Let $g\in G_{(S)}$, so that $gR=R$. Then $y\in\Post_R(gU)$ if and only if $(gx,y)\in R$ for some $x\in U$, if and only if $(x,g^{-1}y)\in R$ for some $x\in U$, if and only if $g^{-1}y\in\Post_R(U)$. Similarly, writing $y=gy'$, we have $x\in\Pre_R(gV)$ if and only if every $y'$ with $(g^{-1}x,y')\in R$ lies in $V$, that is, if and only if $g^{-1}x\in\Pre_R(V)$. The support statements follow from Proposition~\ref{prop:criterion}, and the adjunction is the usual one between direct image and universal preimage along a relation. The converse relation $R^{-1}$ is again $S$-supported.
\end{proof}

Consider the positive modal $\mu$-calculus generated from $S$-supported atomic predicates by Boolean operations, the modalities induced by supported transition relations, and least and greatest fixed-point binders. Bound variables occur positively, as usual. 
Proposition~\ref{thm:mu-context} is the fixed-context form of \cite[Lemma~4.9]{KlinLelyk2019}, where the support of $\mu$-calculus denotations over equality and ordered atoms is bounded by the supports of the formula, the model, and the environment.

\begin{proposition}[Contextual fixed-point semantics]\label{thm:mu-context}
Let every transition relation and atomic valuation occurring in a modal fixed-point formula $\varphi$ be supported by a finite set $S$. For every environment assigning $S$-supported predicates to the free fixed-point variables, the denotation $\llbracket\varphi\rrbracket$ belongs to $\PS(X)$. Least and greatest fixed-point binders are interpreted by the corresponding fixed points in the complete lattice $\PS(X)$.
\end{proposition}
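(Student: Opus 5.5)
The proof is by structural induction on $\varphi$. The induction hypothesis is strengthened to cover every environment $\eta$ that assigns $S$-supported predicates to the free fixed-point variables. It asserts two things. First, $\llbracket\varphi\rrbracket_\eta\in\PS(X)$. Second, for each free variable $Z$ occurring positively, the map $U\mapsto\llbracket\varphi\rrbracket_{\eta[Z\mapsto U]}$ is a monotone $G_{(S)}$-equivariant self-map of the full powerset $\mathcal P(X)$. We take the ambient semantics on $\mathcal P(X)$ as the reference interpretation, so the statement also says the binders may equivalently be read inside $\PS(X)$.

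The base and propositional cases are routine. Atomic valuations lie in $\PS(X)$ by hypothesis, and variables are interpreted by the environment. Boolean operations preserve $\PS(X)$ by Proposition~\ref{prop:orbit-algebra}. Because the fragment is positive, negation is applied only to atoms, so monotonicity in bound variables is preserved. Both equivariance under $G_{(S)}$ and monotonicity are clearly kept by unions and intersections. For the modal cases we use Proposition~\ref{prop:modal-adjunction}. For each $S$-supported relation $R$, the operators $\Post_R$, $\Pre_R$ and their converses are monotone, $G_{(S)}$-equivariant on $\mathcal P(X)$, and map $\PS(X)$ into itself.

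The fixed-point case carries the real content. Consider $\mu Z.\psi$. By induction, $F(U)=\llbracket\psi\rrbracket_{\eta[Z\mapsto U]}$ is monotone on $\mathcal P(X)$. To show it is $G_{(S)}$-equivariant we need $\eta$ to be $G_{(S)}$-fixed pointwise, and this holds because $\eta$ takes values in $\PS(X)$. More precisely, the induction hypothesis must track simultaneous equivariance in all free variables, $\llbracket\psi\rrbracket_{g\eta}=g\llbracket\psi\rrbracket_\eta$, and then specialize. Lemma~\ref{thm:fixed-point-localization} with $H=G_{(S)}$ and $L=\mathcal P(X)$ then yields three facts. The ambient least fixed point lies in $\PS(X)$, $F$ restricts to a monotone map of the complete lattice $\PS(X)$, and $\lfp_{\mathcal P(X)}F=\lfp_{\PS(X)}F|_{\PS(X)}$. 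Greatest fixed points are handled dually.

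It remains to show that $\mu Z.\psi$ is again equivariant and monotone as a function of its own free variables. If $U\subseteq U'$ in the other variables, then the transformers satisfy $F_U\leq F_{U'}$ pointwise, so $\lfp F_U\subseteq\lfp F_{U'}$. Equivariance follows because $g$ conjugates $F_\eta$ to $F_{g\eta}$ and therefore carries its least fixed point to the least fixed point.

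The main obstacle is formulating the induction hypothesis correctly, uniformly over environments and with equivariance relative to environment transport. That formulation is what makes the binder case go through. With it in place, every remaining step is a direct citation of Lemma~\ref{thm:fixed-point-localization}. As a cross-check, Proposition~\ref{prop:fs-fixed-points} would give the same fixed points computed in $\Pfs(X)$.
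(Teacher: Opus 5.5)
Your proof is correct, but its main line is the route the paper mentions only in passing.

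\textbf{The paper's route.} The paper argues entirely inside the context lattice. By induction, a formula evaluated in an environment with values in $\PS(X)$ stays in $\PS(X)$: atoms by hypothesis, Boolean clauses by Proposition~\ref{prop:orbit-algebra}, modal clauses by Proposition~\ref{prop:modal-adjunction}. So a positive body is a monotone self-map of the complete lattice $\PS(X)$, and Knaster--Tarski in $\PS(X)$ interprets the binder. It then adds, without further argument, that Lemma~\ref{thm:fixed-point-localization} shows the ambient fixed point to be $G_{(S)}$-invariant. This route needs only closure of $\PS(X)$ under the operations. By itself, however, it defines a context semantics rather than relating it to the ordinary one.

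\textbf{Your route.} You take the ambient semantics on $\mathcal P(X)$ as the reference and localize at each binder, which proves that the two semantics agree. The price is the environment-transport invariant $\llbracket\psi\rrbracket_{g\eta}=g\llbracket\psi\rrbracket_\eta$, for $g\in G_{(S)}$ and \emph{all} environments with values in $\mathcal P(X)$. It must cover all environments because, under a binder, the body is evaluated at unsupported arguments. The paper's one-sentence appeal to the lemma tacitly requires exactly this, so your argument supplies the missing justification of its alternative.

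\textbf{Two presentational points.}
\begin{enumerate}
\item State the transport invariant as the induction hypothesis from the outset. The version in your first paragraph (only $S$-supported $\eta$, one variable freed) does not survive nested binders. Membership in $\PS(X)$ then follows from the transport version by taking $g\eta=\eta$.
\item Your restriction of negation to atoms loses nothing. Complement is $G_{(S)}$-equivariant, and with the converse modalities available, $\Post_R$ and $\Pre_{R^{-1}}$ (and likewise $\Pre_R$ and $\Post_{R^{-1}}$) are De~Morgan duals.
\end{enumerate}
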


\begin{proof}
Proceed by structural induction. Atomic predicates and variable valuations lie in $\PS(X)$ by assumption. Boolean operations preserve $\PS(X)$ by Proposition~\ref{prop:orbit-algebra}, and the modal clauses preserve it by Proposition~\ref{prop:modal-adjunction}. A positive fixed-point body defines a monotone self-map of $\PS(X)$. The Knaster--Tarski theorem \cite{Tarski1955} supplies its least or greatest fixed point. Equivalently, Lemma~\ref{thm:fixed-point-localization} shows that the ambient fixed point is already $G_{(S)}$-invariant.
\end{proof}

\begin{corollary}[Finite local model checking]\label{cor:finite-model-checking}
If $G_{(S)}\backslash X$ is finite and the induced Boolean and modal operations on the orbit cells are effective, then model checking every formula of Proposition~\ref{thm:mu-context} reduces to ordinary finite fixed-point evaluation over the orbit-cell quotient.
\end{corollary}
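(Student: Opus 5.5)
The plan is to transfer the whole evaluation from $\PS(X)$ to the finite Boolean algebra $\mathcal P(G_{(S)}\backslash X)$ through the isomorphism of Proposition~\ref{prop:orbit-algebra}, and then run the usual finite fixed-point algorithm there. Put $H=G_{(S)}$ and $m=|H\backslash X|<\infty$. Let $\iota:\PS(X)\to\mathcal P(H\backslash X)$ be the map $U\mapsto\{O:O\subseteq U\}$; it is an isomorphism of complete Boolean algebras. By Proposition~\ref{thm:mu-context}, every denotation $\llbracket\varphi\rrbracket$, computed under an $S$-supported environment, lies in $\PS(X)$. It is therefore enough to show that $\iota\llbracket\varphi\rrbracket$ can be computed by structural recursion on $\varphi$ using only finite operations on orbit cells.

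First I define the cell semantics $\llbracket\varphi\rrbracket^{\mathrm{cell}}\subseteq H\backslash X$ by the same clauses as $\llbracket\varphi\rrbracket$, with the following choices. Atomic predicates and environment values are replaced by their images under $\iota$. The Boolean connectives become the finite set operations on $H\backslash X$. Each modality $\Post_R$ or $\Pre_R$ becomes the conjugated map $\iota\circ\Post_R\circ\iota^{-1}$, and similarly for the modalities of $R^{-1}$. Proposition~\ref{prop:modal-adjunction} ensures these maps are well defined, because the modalities preserve $\PS(X)$. The effectiveness hypothesis of the corollary says exactly that these induced operations are computable. A binder $\mu Z.\psi$ or $\nu Z.\psi$ is interpreted by iterating the induced monotone map from $\varnothing$ or from $H\backslash X$.

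Second, I prove $\iota\llbracket\varphi\rrbracket=\llbracket\varphi\rrbracket^{\mathrm{cell}}$ by induction on $\varphi$, simultaneously for all $S$-supported environments. The Boolean and modal cases hold because $\iota$ is a Boolean isomorphism, and the modal operations on cells were defined by conjugation. For a binder, the body defines a monotone self-map $\Phi$ of $\PS(X)$, as shown in the proof of Proposition~\ref{thm:mu-context}. By the induction hypothesis applied to the environments extended with $Z\mapsto U$, we get $\iota\circ\Phi=\Phi^{\mathrm{cell}}\circ\iota$. An order isomorphism carries least and greatest fixed points to least and greatest fixed points. Since $\mathcal P(H\backslash X)$ has height $m$, the Kleene iterates stabilize after at most $m$ steps, exactly as in Corollary~\ref{cor:iteration-bound}. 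Finally, $x\models\varphi$ holds iff the orbit $Hx$ belongs to $\llbracket\varphi\rrbracket^{\mathrm{cell}}$, since denotations are unions of cells.

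The main obstacle is the binder case. The induction hypothesis has to cover every $S$-supported value of the bound variable, not just one fixed environment. The iteration must also never leave the context lattice. I handle this by stating the induction over all environments valued in $\PS(X)$. Then every iterate is $S$-supported, and the hypothesis applies at each step. A secondary point is to make sure that "effective" includes a finite list of the cells and decidable equality of cell sets; with that, stabilization of the iteration can be detected.
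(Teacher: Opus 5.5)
Your proposal is correct and takes the same route as the paper. Both identify $\PS(X)$ with $\mathcal P(G_{(S)}\backslash X)$ via Proposition~\ref{prop:orbit-algebra}, let the Boolean and modal operations act on the finite orbit-cell algebra, use positivity for monotonicity of fixed-point bodies, and bound each iteration by the height $m$ as in Corollary~\ref{cor:iteration-bound}. Your induction over all $\PS(X)$-valued environments, transporting fixed points along the order isomorphism $\iota$, spells out the binder case that the paper's short proof leaves implicit.
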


\begin{proof}
By Proposition~\ref{prop:orbit-algebra}, $\PS(X)$ is the powerset of the finite orbit set. Boolean and modal operations act on this finite Boolean algebra, and positivity makes each fixed-point body monotone, and the usual finite $\mu$-calculus evaluation applies; each fixed-point iteration stabilizes within the bound of Corollary~\ref{cor:iteration-bound}.
\end{proof}

\begin{example}[One-register equality context]
Let $X=\A$ with equality atoms and let $S=\{r\}$. The stabilizer $G_{(S)}$ has two orbits on $X$: $\{r\}$ and $\A\setminus\{r\}$. Every unary $S$-supported predicate is therefore one of four predicates, and every supported modal fixed-point analysis of a unary transition system is evaluated over a four-element Boolean algebra. The infinite data domain is retained semantically, but the declared register context yields a finite abstract state space.
\end{example}

\section{Context-stratified abstract interpretation}\label{sec:ai}

Abstract interpretation relates concrete and abstract semantic domains by monotone maps, commonly a Galois connection, and computes invariants as fixed points \cite{CousotCousot1977}. Earlier FSS work translated invariant lattices, correctness relations, representation functions, Galois connections, and widening/narrowing to finitely supported structures \cite{AlexandruCiobanu2016Abstract}. The fixed-context formulation refines these constructions: every declared context carries a complete concrete predicate lattice and, under the hypotheses below, a complete abstract lattice. Thus fixed-point soundness is established locally even when the union of all supported strata is not~complete.

\subsection{Boolean and quantitative predicate domains}

The concrete domains of the analyses below are the context lattices $\Pred_S(X,L)$ and the finitely supported lattices $\Pred_{\mathrm{fs}}(X,L)$ of Section~\ref{sec:principle-completeness}. By Propositions~\ref{prop:internal-completeness} and~\ref{prop:fs-fixed-points} (Property~II), every monotone $S$-supported transformer has least and greatest fixed points among finitely supported predicates, and they lie in $\Pred_S(X,L)$. Property~III bounds the number of iterations when its finiteness hypotheses hold.

\begin{corollary}[Finite-orbit computation]\label{cor:finite-ai}
Suppose that $G_{(S)}\backslash X$ has $m<\infty$ orbits. Then $\Pred_S(X,L)\cong L^m$. If $L$ is finite and the transformer is effectively given on orbit representatives, its least and greatest fixed points are computable by finitely many iterations, bounded as in Corollary~\ref{cor:iteration-bound}. For Boolean predicates, the lattice has $2^m$ elements.
\end{corollary}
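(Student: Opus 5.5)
The corollary follows by combining the context function lattice with the iteration bound, so the plan is to chain earlier results rather than argue from scratch. First, the isomorphism $\Pred_S(X,L)\cong L^m$ is immediate from Proposition~\ref{thm:context-function-lattice}: an $S$-supported predicate is constant on $G_{(S)}$-orbits, so it is determined by its values on the $m$ orbits, and this identification respects the pointwise order. For $L=\{0,1\}$ the count $2^m$ is then the Boolean case of Proposition~\ref{prop:uniform-powerset}, or equivalently Proposition~\ref{prop:orbit-algebra}.

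For the fixed points I would start from a monotone $S$-supported transformer $F$ on $\Pred_{\mathrm{fs}}(X,L)$. By Proposition~\ref{prop:criterion} it maps $\Pred_S(X,L)$ into itself. The bottom and top predicates are constant, hence $S$-supported, so every iterate from them lies in the finite set $\Pred_S(X,L)\cong L^m$. Proposition~\ref{thm:terminating-iteration} then applies with $P=\Pred_{\mathrm{fs}}(X,L)$, whose $S$-supported part is finite. It shows that the ascending iteration stabilizes at a fixed point that lies below every fixed point, and dually for the descending iteration from the top. Corollary~\ref{cor:iteration-bound} sharpens the number of steps: if every chain in $L$ has at most $h+1$ elements, a strict chain in $L^m$ has at most $mh+1$ elements, so at most $mh$ iterations are needed, and at most $m$ in the Boolean case. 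Proposition~\ref{prop:fs-fixed-points} identifies these iterates with the least and greatest fixed points among all finitely supported predicates, so nothing is lost by computing only at the context.

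The point that needs care is computability, i.e.\ turning the semantic iteration into an algorithm. Under the hypothesis that $F$ is effectively given on orbit representatives, each iterate is represented as a finite tuple in $L^m$, and the next tuple is computed from it using that effective presentation. Because $L$ is finite, equality of consecutive tuples is decidable, so the algorithm can detect stabilization and stop. I expect this to be the only real subtlety: it is exactly what the hypothesis ``effectively given'' must be read to provide, namely that, given the tuple of orbit values of $p$, one can compute the orbit values of $F(p)$. I would state that reading explicitly and not try to derive it from homogeneity, in line with the remark after Corollary~\ref{cor:iteration-bound}.
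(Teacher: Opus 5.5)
Your proposal is correct and follows the same route the paper has in mind. The paper states this corollary without a separate proof, as an immediate consequence of Proposition~\ref{thm:context-function-lattice} (with $2^m$ as the Boolean case) and the iteration bound of Proposition~\ref{thm:terminating-iteration} and Corollary~\ref{cor:iteration-bound}. You also make explicit two points it leaves implicit: that $\Pred_{\mathrm{fs}}(X,L)^{G_{(S)}}=\Pred_S(X,L)$ is finite, and what ``effectively given on orbit representatives'' must supply.
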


\begin{example}[Reachability]
Let $R\subseteq X\times X$ and $I\subseteq X$ be supported by~$S$. The transformer $\Phi(U)=I\cup\Post_R(U)$ is $S$-supported and monotone. Its least fixed point in $\PS(X)$ is the ordinary set of states reachable from $I$: by Lemma~\ref{thm:fixed-point-localization}, the ambient reachability fixed point is already $S$-supported. If $G_{(S)}\backslash X$ is finite, reachability is computed on the finite local orbit lattice.
\end{example}

\subsection{Supported Galois connections and approximations}

\begin{definition}[Finitely supported Galois connection]
Let $C$ and $A$ be $G$-sets. A pair of monotone maps
\[
\alpha:\Pfs(C)\rightleftarrows\Pfs(A):\gamma
\]
is an $S$-supported Galois connection if both maps are supported by $S$ and
\[
\alpha(U)\subseteq V
\quad\Longleftrightarrow\quad
U\subseteq\gamma(V)
\]
for all supported predicates $U,V$.
\end{definition}

\begin{proposition}[Restriction to context lattices]\label{thm:gc-restriction}
An $S$-supported Galois connection restricts to an ordinary Galois connection $\alpha_S:\PS(C)\rightleftarrows\PS(A):\gamma_S$
between complete Boolean algebras. If $G_{(S)}\backslash C$ and $G_{(S)}\backslash A$ are finite, it is represented by finite monotone maps between powersets of finite orbit sets.
\end{proposition}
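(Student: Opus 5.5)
The plan has three steps: check that $\alpha$ and $\gamma$ preserve the context lattices, inherit the Galois property from the ambient one, and then transport everything along the orbit isomorphism of Proposition~\ref{prop:orbit-algebra}.

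\emph{Restriction.} Since $\alpha$ is supported by $S$, Proposition~\ref{prop:criterion} says that for any $U\in\PS(C)$, which is $S$-supported, the image $\alpha(U)$ is supported by $S\cup S=S$. Hence $\alpha(U)\in\PS(A)$. The same argument applies to $\gamma$. So the restrictions
\[
\alpha_S:\PS(C)\to\PS(A),\qquad \gamma_S:\PS(A)\to\PS(C)
\]
are well defined. They are monotone because $\alpha$ and $\gamma$ are. The equivalence $\alpha(U)\subseteq V\iff U\subseteq\gamma(V)$ holds for all finitely supported $U,V$, so in particular for $U\in\PS(C)$ and $V\in\PS(A)$. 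Thus $(\alpha_S,\gamma_S)$ is an ordinary Galois connection. By Proposition~\ref{prop:orbit-algebra}, both $\PS(C)$ and $\PS(A)$ are complete Boolean algebras, which finishes the first assertion. One point needs a check: Proposition~\ref{prop:criterion} is stated for maps between $G$-sets. Here the domain is $\Pfs(C)$ with the direct-image action, and $\PS(C)$ is exactly its set of $S$-supported elements, so the proposition applies directly.

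\emph{Finite representation.} Let $m=|G_{(S)}\backslash C|$ and $n=|G_{(S)}\backslash A|$. The isomorphisms
\[
\iota_C:\PS(C)\cong\mathcal P(G_{(S)}\backslash C),\qquad \iota_A:\PS(A)\cong\mathcal P(G_{(S)}\backslash A)
\]
are order isomorphisms onto powersets of finite sets, with $2^m$ and $2^n$ elements. Put
\[
\hat\alpha=\iota_A\circ\alpha_S\circ\iota_C^{-1},\qquad \hat\gamma=\iota_C\circ\gamma_S\circ\iota_A^{-1}.
\]
These are monotone maps between finite powersets, and conjugating by order isomorphisms preserves the Galois condition. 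I would also note the sharper presentation. Because $\alpha_S$ is a lower adjoint, it preserves all joins in $\PS(C)$. So $\hat\alpha$ is determined by its values on the singleton orbit sets $\{O\}$, that is, by a relation between the $m$ orbits of $C$ and the $n$ orbits of $A$. Then $\hat\gamma$ is recovered as $\hat\gamma(W)=\bigcup\{Z:\hat\alpha(Z)\subseteq W\}$.

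\emph{Main obstacle.} There is no substantial obstacle. The only subtle step is the one flagged above: $\alpha_S$ and $\gamma_S$ land in the context lattices, not merely in $\Pfs$. That comes from $\alpha$ and $\gamma$ sharing the single support $S$ and from Proposition~\ref{prop:criterion} applied to the direct-image action. The claim that the restricted lattices are complete is supplied by Proposition~\ref{prop:orbit-algebra}, not by $\Pfs$, which need not be complete.
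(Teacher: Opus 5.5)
Your proposal is correct and follows the paper's proof. Like the paper, you use Proposition~\ref{prop:criterion} to show that $\alpha$ and $\gamma$ preserve $S$-supported subsets, restrict the adjunction, and invoke Proposition~\ref{prop:orbit-algebra} for completeness and the finite orbit-set representation; your further observation that the lower adjoint preserves unions, so that $\hat\alpha$ is given by a relation between the two finite orbit sets and $\hat\gamma$ is recovered from it, is a correct sharpening that the paper does not state.
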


\begin{proof}
Proposition~\ref{prop:criterion} shows that $\alpha$ and $\gamma$ preserve $S$-supported predicates. The adjunction remains valid after restriction. Proposition~\ref{prop:orbit-algebra} gives the finite representation.
\end{proof}

Let $F:\Pfs(C)\to\Pfs(C)$ be an $S$-supported monotone concrete transformer and put \ $F^\sharp=\alpha\circ F\circ\gamma$.

\begin{proposition}[Contextual abstract-interpretation soundness]\label{prop:ai-soundness}
The transformer $F^\sharp$ is $S$-supported and monotone, and
$\alpha\bigl(\lfp_{\PS(C)}(F)\bigr) \subseteq
\lfp_{\PS(A)}(F^\sharp)$.
If $F$ and $F^\sharp$ extend to monotone $G_{(S)}$-equivariant endomaps of ambient complete predicate lattices, the displayed context fixed points coincide with the corresponding ambient fixed points.
\end{proposition}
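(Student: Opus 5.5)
The plan has three parts: support and monotonicity of $F^\sharp$, the soundness inclusion, and the coincidence with ambient fixed points.

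\emph{Support and monotonicity of $F^\sharp$.} Monotonicity is immediate, since $\alpha$, $F$ and $\gamma$ are monotone. For support, I apply the composition clause of Proposition~\ref{prop:criterion} twice. The maps $\alpha$, $F$ and $\gamma$ are each $S$-supported, so $\alpha\circ F\circ\gamma$ is supported by $S\cup S\cup S=S$. Consequently, Proposition~\ref{prop:criterion} shows that $\gamma$, $F$ and $\alpha$ map $S$-supported predicates to $S$-supported ones. Hence $F$ restricts to a monotone endomap of $\PS(C)$ and $F^\sharp$ to a monotone endomap of $\PS(A)$. Both are complete Boolean algebras by Proposition~\ref{prop:orbit-algebra}, so $\lfp_{\PS(C)}(F)$ and $\lfp_{\PS(A)}(F^\sharp)$ exist by Knaster--Tarski. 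By Proposition~\ref{prop:fs-fixed-points}, they are also the least fixed points among all finitely supported predicates.

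\emph{Soundness inclusion.} I work entirely inside the restricted Galois connection $\alpha_S\dashv\gamma_S$ of Proposition~\ref{thm:gc-restriction}. Put $a=\lfp_{\PS(A)}(F^\sharp)$. By the adjunction, the inclusion $\alpha(\lfp F)\subseteq a$ is equivalent to $\lfp F\subseteq\gamma(a)$. By Park induction in the complete lattice $\PS(C)$, it suffices to show that $\gamma(a)$ is a prefixed point of $F$, that is, $F(\gamma(a))\subseteq\gamma(a)$. Again by the adjunction, this is equivalent to $\alpha(F(\gamma(a)))\subseteq a$. The left-hand side is $F^\sharp(a)=a$, so the inclusion holds. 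The only point needing care is that $\gamma(a)$ lies in $\PS(C)$, which follows from the first step. This makes the Park-induction argument legitimate inside the context lattice.

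\emph{Ambient coincidence.} Suppose $F'$ and $F^{\sharp\prime}$ are monotone $G_{(S)}$-equivariant extensions to $\mathcal P(C)$ and $\mathcal P(A)$. I apply Lemma~\ref{thm:fixed-point-localization} with $H=G_{(S)}$, which acts on these powersets by complete-lattice automorphisms. The ambient least fixed points are then $H$-fixed, so they lie in $\PS(C)$ and $\PS(A)$. They equal the least fixed points of the restrictions, and these restrictions agree with $F$ and $F^\sharp$ on the context lattices. Alternatively, the last clause of Proposition~\ref{prop:fs-fixed-points} gives this directly.

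I expect no real obstacle. The one subtle point is that the Galois connection is only assumed on finitely supported predicates. All adjunction steps must therefore stay inside $\PS(C)$ and $\PS(A)$, which is ensured by the support bounds of the first step.
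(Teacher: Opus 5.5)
Your proposal is correct and follows the paper's proof essentially step for step. Support of $F^\sharp$ comes from composing $S$-supported maps via Proposition~\ref{prop:criterion}. Soundness comes from using the adjunction to turn $F^\sharp(a)=a$ into $F(\gamma(a))\subseteq\gamma(a)$, hence $\lfp F\subseteq\gamma(a)$, and then into $\alpha(\lfp F)\subseteq a$. Ambient coincidence comes from Lemma~\ref{thm:fixed-point-localization}. Your explicit check that $\gamma(a)\in\PS(C)$, which keeps the adjunction and Park induction inside the context lattices, is a detail the paper leaves implicit.
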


\begin{proof}
Support follows from Proposition~\ref{prop:criterion}, and monotonicity from that of~$\alpha,F,\gamma$. Let $a=\lfp(F^\sharp)$ in $\PS(A)$. Since $\alpha F\gamma(a)=F^\sharp(a)=a$,
adjunction gives $F\gamma(a)\subseteq\gamma(a)$. Hence $\lfp(F)\subseteq\gamma(a)$, and another use of adjunction yields $\alpha(\lfp(F))\subseteq a$. Under the stated extension hypothesis, the final statement follows from Lemma~\ref{thm:fixed-point-localization}.
\end{proof}

For the equality symmetry, the Pawlak approximations of a finitely supported equivalence relation form a finitely supported Galois connection on $\Pfs(X)$ \cite[Example~8.1]{AlexandruCiobanu2020}. The following statement records the same fact for an arbitrary data symmetry, together with its restriction to the context lattice.

\begin{proposition}[Supported rough abstraction]\label{prop:rough}
Let $E\subseteq X\times X$ be an equivalence relation supported by $S$. For $U\in\Pfs(X)$ define
\[
\overline E(U)=\{x:[x]_E\cap U\neq\varnothing\},
\qquad
\underline E(U)=\{x:[x]_E\subseteq U\}.
\]
Then $\overline E,\underline E:\Pfs(X)\to\Pfs(X)$ are monotone, supported by $S$, mapping $\PS(X)$ into itself,~and
\[
\overline E(U)\subseteq V
\quad\Longleftrightarrow\quad
U\subseteq\underline E(V).
\]
Thus $\overline E\dashv\underline E$ is an $S$-supported Galois connection, which restricts to the context lattice~$\PS(X)$.
\end{proposition}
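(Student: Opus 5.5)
The plan is to work first on the full powerset $\mathcal P(X)$ and only then restrict to finitely supported predicates. This route gives equivariance, support, monotonicity and the adjunction at once. The key identity is $[gx]_E=g[x]_E$ for $g\in G_{(S)}$, which I will establish at the outset: since $gE=E$, we have $(gx,y)\in E$ iff $(x,g^{-1}y)\in E$. From it I get
\[
\overline E(gU)=\{x:[x]_E\cap gU\neq\varnothing\}=\{x:[g^{-1}x]_E\cap U\neq\varnothing\}=g\overline E(U),
\]
and the same computation with $\subseteq$ in place of nonempty intersection gives $\underline E(gU)=g\underline E(U)$. By Proposition~\ref{prop:criterion}, both maps are then $S$-supported. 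They send a $T$-supported subset to an $(S\cup T)$-supported one, so they restrict to maps $\Pfs(X)\to\Pfs(X)$. Taking $T=S$ shows that they map $\PS(X)$ into itself.

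A shortcut is available through Proposition~\ref{prop:modal-adjunction}. With $R=E$ and $E$ symmetric, $\overline E(U)=\Post_E(U)$ and $\underline E(V)=\Pre_E(V)$, since $y\in[x]_E$ iff $(x,y)\in E$ iff $(y,x)\in E$. That proposition then supplies equivariance, the support bounds, and the adjunction $\Post_E(U)\subseteq V\iff U\subseteq\Pre_E(V)$ directly. I would present this identification as the main argument and keep the orbit computation above as the self-contained check.

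Monotonicity is immediate from the definitions. For the adjunction, suppose first that $\overline E(U)\subseteq V$ and take $u\in U$. For any $x\in[u]_E$ we have $u\in[x]_E\cap U$, so $x\in\overline E(U)\subseteq V$. Hence $[u]_E\subseteq V$, that is, $u\in\underline E(V)$. Conversely, suppose $U\subseteq\underline E(V)$ and take $x$ with some $u\in[x]_E\cap U$. Then $[u]_E=[x]_E\subseteq V$, so $x\in V$. This step uses both symmetry and transitivity of $E$; reflexivity is used to get $x\in[x]_E$.

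The step needing most care, though it is not hard, is making sure the restriction to $\Pfs(X)$ is well defined. The proof must not assume that arbitrary subsets are supported. This is exactly why I prove equivariance on all of $\mathcal P(X)$ and then invoke Proposition~\ref{prop:criterion} for the support transfer. The final clause, that $\overline E\dashv\underline E$ is an $S$-supported Galois connection restricting to the context lattice, then follows from Proposition~\ref{thm:gc-restriction}.
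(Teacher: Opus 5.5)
Your proof is correct, and its core is exactly the paper's argument: the identity $g[x]_E=[gx]_E$ for $g\in G_{(S)}$, equivariance of both approximations, Proposition~\ref{prop:criterion} for the support transfer, and a direct element chase for the adjunction. Your alternative identification $\overline E=\Post_E$, $\underline E=\Pre_E$ is also valid and lets Proposition~\ref{prop:modal-adjunction} supply equivariance, the support bounds, and the adjunction at once. It also shows that the adjunction needs only the symmetry of $E$; your direct chase invokes transitivity and reflexivity as well, which is harmless but unnecessary.
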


\begin{proof}
The $G_{(S)}$-invariance of $E$ transports equivalence classes: $g[x]_E=[gx]_E$ for $g\in G_{(S)}$. Hence $\overline E(gU)=g\overline E(U)$ and $\underline E(gU)=g\underline E(U)$, so both approximations are $S$-supported maps; by Proposition~\ref{prop:criterion} they send $T$-supported subsets to $(S\cup T)$-supported subsets, and in particular preserve $\Pfs(X)$ and $\PS(X)$. For the adjunction, $\overline E(U)\subseteq V$ means that every $E$-class meeting $U$ is contained in $V$, which is equivalent to every point of $U$ having its $E$-class contained in $V$.
\end{proof}

Given a monotone transformer $F$ and an initial abstract element $a_0$, define the widening sequence by \ 
$a_{n+1}=a_n\nabla F(a_n)$.
Given a pre-fixed widening result $b_0$, meaning $F(b_0)\leq b_0$, define the narrowing sequence by \ $b_{n+1}=b_n\triangle F(b_n)$.

\begin{proposition}[Support-preserving widening and narrowing]\label{prop:widening}
Let $\nabla$ and $\triangle$ be widening and narrowing operators supported by $S$. If the initial widening element, the semantic transformer, and the initial narrowing element are supported by $S$, every element of the associated widening and narrowing sequences is supported by $S$.
\end{proposition}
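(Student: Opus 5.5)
The argument is a simultaneous induction on the index $n$, driven by the composition clause of Proposition~\ref{prop:criterion}. We take $\nabla$ and $\triangle$ to be binary operations on the abstract domain, so each is a map $D\times D\to D$ on a $G$-set $D$ with the diagonal action on pairs. Likewise the semantic transformer $F$ is a map $D\to D$. First we record that a pair $(a,b)$ is $S$-supported whenever both components are: for $g\in G_{(S)}$ we have $g(a,b)=(ga,gb)=(a,b)$.

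For the widening sequence, the base case is the hypothesis that $a_0$ is $S$-supported. For the inductive step, assume $a_n$ is $S$-supported. Then $F(a_n)$ is supported by $S\cup S=S$, by the second clause of Proposition~\ref{prop:criterion} applied to the $S$-supported map $F$. Hence the pair $(a_n,F(a_n))$ is $S$-supported. Applying Proposition~\ref{prop:criterion} again, now to the $S$-supported map $\nabla$, shows that $a_{n+1}=a_n\nabla F(a_n)$ is supported by $S$.

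The narrowing sequence is handled by the same induction, starting from the $S$-supported $b_0$ and applying $\triangle$ in place of $\nabla$. The pre-fixed-point condition $F(b_0)\leq b_0$ plays no role in the support argument. It matters only for the soundness of narrowing, which this statement does not assert.

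Nothing here is a real obstacle; the only point needing care is the formulation. One must read \emph{supported by $S$} for $\nabla$ and $\triangle$ as the map-level notion of Proposition~\ref{prop:criterion}, namely $g(a\nabla b)=ga\nabla gb$ for $g\in G_{(S)}$. With that reading, the argument is also just Proposition~\ref{prop:no-fresh-generation} applied to the maps $n\mapsto a_n$ and $n\mapsto b_n$ from $\mathbb N$. Equivalently, the bound is an instance of Proposition~\ref{thm:term-support} for the term $\nabla(x,F(x))$, iterated. It follows that all iterates lie in the context level $D^{G_{(S)}}$, for example in $\PS(A)$ when the abstract domain is a predicate lattice.
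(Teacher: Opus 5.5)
Your proof is correct and is the paper's argument: induction on the length of the sequence, applying Proposition~\ref{prop:criterion} at each use of the transformer and of $\nabla$ or $\triangle$. Your observation that a pair of $S$-supported elements is $S$-supported just makes explicit the diagonal action that the paper leaves implicit. One caution on your closing aside: invoking Proposition~\ref{prop:no-fresh-generation} for $n\mapsto a_n$ presupposes that this map is $S$-supported, which (since $\mathbb N$ carries the trivial action) is equivalent to the conclusion itself, so that remark restates the result rather than giving an independent proof.
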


\begin{proof}
Induct on the sequence length and apply Proposition~\ref{prop:criterion} to every use of the transformer and the approximation operator.
\end{proof}

The conclusion does not itself guarantee termination; that is supplied by the usual widening and narrowing hypotheses \cite{CousotCousot1979}. It guarantees that the analysis does not introduce additional support dependencies; an $S$-supported predicate may still contain states involving atoms outside $S$.

\section{Resource algebra and reversible components}\label{sec:resources}

Multisets represent consumable or repeatable resources, concurrent process populations, chemical reactants, and database bags. The FSS multiset theory of \cite{AlexandruCiobanu2015Multisets} extends finite-alphabet resource algebra to infinite alphabets while retaining finite algebraic support. FSS groups and their homomorphism theorems provide related algebraic models of reversible components \cite{AlexandruCiobanu2014Groups}, \cite[Sections~7.3--7.4]{AlexandruCiobanu2020}. The section uses Property~I throughout.

\subsection{Free finite-resource monoids}

Let $\Sigma$ be a nominal $G$-set and define
\[
\mathcal M_{\mathrm{fin}}(\Sigma)
 =\{m:\Sigma\to\mathbb N:
      \{a:m(a)\neq0\}\text{ is finite}\}.
\]
The action is $(gm)(a)=m(g^{-1}a)$ and addition is pointwise.

\begin{proposition}[Free commutative resource algebra]\label{prop:free-multiset}
$\mathcal M_{\mathrm{fin}}(\Sigma)$ is a commutative monoid in nominal $G$-sets. If $M$ is a commutative monoid in nominal $G$-sets, written additively, with equivariant addition and zero, and $h:\Sigma\to M$ is an $S$-supported map, there is a unique monoid homomorphism
\[
\widehat h:\mathcal M_{\mathrm{fin}}(\Sigma)\to M,
\qquad
\widehat h(m)=\sum_{a\in\Sigma}m(a)h(a),
\]
extending $h$, and $\widehat h$ is supported by $S$.
\end{proposition}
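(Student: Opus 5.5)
The proof has three parts: first show that $\mathcal M_{\mathrm{fin}}(\Sigma)$ is a nominal $G$-set with equivariant monoid structure, then prove the universal property by the classical argument, and finally compute the support of $\widehat h$ with Proposition~\ref{prop:criterion}.

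\emph{Monoid in nominal $G$-sets.} For $m\in\mathcal M_{\mathrm{fin}}(\Sigma)$ let $D=\{a:m(a)\neq0\}$. Then $gm$ has finite domain $gD$, so the action is well defined. For a support, choose finite supports $T_a$ of the finitely many $a\in D$ and put $T=\bigcup_{a\in D}T_a$. If $g\in G_{(T)}$, then $ga=a$ for every $a\in D$, so $g^{-1}$ fixes $D$ pointwise and permutes $\Sigma\setminus D$. Hence $m(g^{-1}a)=m(a)$ for all $a$, and $T$ supports $m$. This is the same argument as Proposition~\ref{prop:finite-store}, with $\mathbb N$ carrying the trivial action. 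Pointwise addition satisfies $g(m+m')=gm+gm'$ directly, and the zero multiset is fixed, so both operations are equivariant. Associativity and commutativity are inherited from $\mathbb N$.

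\emph{Universal property.} Write $\chi_a$ for the multiset with $\chi_a(a)=1$ and value $0$ elsewhere. The displayed sum is a finite sum in $M$ over the domain of $m$, with $n\cdot x$ meaning the $n$-fold sum of $x$. It is well defined because $M$ is commutative, so the order of the summands is irrelevant; no choice of enumeration needs to be supported, since the value does not depend on it. Additivity follows from $(m+m')(a)=m(a)+m'(a)$, and $\widehat h(\chi_a)=h(a)$. For uniqueness, every $m$ is a finite sum $\sum_{a\in D}m(a)\chi_a$, so any homomorphism extending $h$ must agree with $\widehat h$.

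\emph{Support.} By Proposition~\ref{prop:criterion} it suffices to show $\widehat h(gm)=g\,\widehat h(m)$ for $g\in G_{(S)}$. Since addition in $M$ is equivariant, $g$ distributes over finite sums and over integer multiples. Reindexing the sum by $b=g^{-1}a$ and using $h(gb)=g\,h(b)$ gives
\[
\widehat h(gm)=\sum_{a}m(g^{-1}a)\,h(a)=\sum_{b}m(b)\,h(gb)=g\sum_b m(b)\,h(b)=g\,\widehat h(m).
\]
The main obstacle is bookkeeping rather than mathematics. One must make sure that finite sums in $M$ are meaningful without choosing supported orderings, and that the finite support of $m$ uses only finite choice of the supports $T_a$. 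Commutativity of $M$ handles the first point and finite choice in ZF handles the second.
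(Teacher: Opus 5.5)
Your proposal is correct and follows essentially the same route as the paper: the union of supports of the finitely many carrier elements supports a multiset, and the classical free-commutative-monoid argument gives existence and uniqueness. Your reindexing computation $b=g^{-1}a$ with $h(gb)=g\,h(b)$ for $g\in G_{(S)}$, together with Proposition~\ref{prop:criterion}, is exactly the paper's support argument. You simply spell out details the paper leaves implicit, namely the pointwise check via $\chi_a$, the independence of finite sums from enumeration, and the reliance on finite choice only.
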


\begin{proof}
Every multiset has finite carrier. The union of chosen supports of its finitely many carrier elements supports the multiset, so $\mathcal M_{\mathrm{fin}}(\Sigma)$ is a nominal $G$-set. Pointwise addition and the zero multiset are equivariant, and the displayed sum is finite. If $g\in G_{(S)}$, then
\begin{align*}
\widehat h(gm)
&=\sum_a m(g^{-1}a)h(a)
 =\sum_b m(b)h(gb)\\
&=\sum_b m(b)g h(b)
 =g\widehat h(m).
\end{align*}
Thus $S$ supports $\widehat h$. The ordinary free-commutative-monoid proof gives uniqueness.
\end{proof}

The analogous construction with finite words gives the free noncommutative monoid $\Sigma^*$. The alphabet may be infinite and orbit-infinite. Each concrete word or resource multiset is nevertheless finitely generated, and interpretation of generators does not increase the external context support. In this sense the multiset constructions of \cite{AlexandruCiobanu2015Multisets} do not require orbit-finiteness of the alphabet.

\begin{example}[Context-dependent resource accounting]
Let $M$ be a commutative monoid with trivial action, for instance $(\mathbb N,+,0)$ for costs or a join-semilattice $(L,\vee,\bot)$ of security levels, and let $h(a)\in M$ assign a resource meaning to the name $a$. If $h$ depends only on a finite policy context $S$, then the accumulated meaning of every finite multiset is also $S$-supported. Renaming names outside the policy context commutes with evaluation.
\end{example}

\begin{proposition}[Supported multiset rewriting]\label{prop:multiset-rewriting}
Let $\mathcal R\subseteq\mathcal M_{\mathrm{fin}}(\Sigma)\times\mathcal M_{\mathrm{fin}}(\Sigma)$ be a rewrite-rule set supported by $S$. Define
\[
m\longrightarrow_{\mathcal R} n
\quad\Longleftrightarrow\quad
\exists (\ell,r)\in\mathcal R\ \exists c\;
   (m=\ell+c\ \text{and}\ n=r+c).
\]
Then $\longrightarrow_{\mathcal R}$ is an $S$-supported transition relation. If $T$ supports an initial multiset $m_0$, the reachable set from $m_0$ is supported by $S\cup T$. If the reachable multiset states have finitely many effectively represented $G_{(S\cup T)}$-orbits and the induced successor relation on orbit representatives is computable, reachability reduces to finite graph search.
\end{proposition}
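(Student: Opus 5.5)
The proof has three parts, handled in order: the support of the one-step relation, the support of the reachable set, and the effective reduction. For the first, fix $g\in G_{(S)}$ and suppose $m\longrightarrow_{\mathcal R}n$, witnessed by $(\ell,r)\in\mathcal R$ and a context $c$ with $m=\ell+c$ and $n=r+c$. The action on $\mathcal M_{\mathrm{fin}}(\Sigma)$ is by monoid automorphisms, since pointwise addition is equivariant (Proposition~\ref{prop:free-multiset}). Hence $gm=g\ell+gc$ and $gn=gr+gc$. Because $g\mathcal R=\mathcal R$, the pair $(g\ell,gr)$ lies in $\mathcal R$, and so $gm\longrightarrow_{\mathcal R}gn$ with context $gc$. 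Applying the same argument to $g^{-1}$ gives the reverse inclusion. Therefore $g\,{\longrightarrow_{\mathcal R}}={\longrightarrow_{\mathcal R}}$, and $S$ supports the relation regarded as a subset of $\mathcal M_{\mathrm{fin}}(\Sigma)^2$. No restriction on orbit-finiteness of $\Sigma$ is needed here.

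For reachability, I see two equivalent routes and would use the fixed-point one. The singleton $\{m_0\}$ is supported by $T$, and the relation is $S$-supported. So $\Phi(U)=\{m_0\}\cup\Post_{\longrightarrow_{\mathcal R}}(U)$ is a monotone map on $\mathcal P(\mathcal M_{\mathrm{fin}}(\Sigma))$, and by Proposition~\ref{prop:modal-adjunction} it is $G_{(S\cup T)}$-equivariant. Lemma~\ref{thm:fixed-point-localization} then shows that its ambient least fixed point, which is the ordinary reachable set, is $G_{(S\cup T)}$-fixed, that is, $(S\cup T)$-supported. This is the reachability example of Section~\ref{sec:ai} applied with context $S\cup T$. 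Alternatively, one can argue directly: a path $m_0\to\cdots\to m_k$ is carried by any $g\in G_{(S\cup T)}$ to a path from $gm_0=m_0$.

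For the effective claim, put $H=G_{(S\cup T)}$. The reachable set is $H$-invariant and the relation is $H$-equivariant, so I would form the finite quotient graph on the reachable $H$-orbits. Its edges are given by the computable successor relation on representatives, and the path-lifting argument is exactly that of Proposition~\ref{prop:effective-det}. A concrete path projects to a quotient path. Conversely, a quotient path lifts step by step: if the current state is $hm$ and the edge is represented by $m\to n$, equivariance gives $hm\to hn$. Reachability of a target orbit is therefore decided by finite graph search.

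I expect the main subtlety to be the first step, because the rewrite definition quantifies existentially over contexts $c$. The point to make explicit is that $g$ acts bijectively on $\mathcal M_{\mathrm{fin}}(\Sigma)$ and preserves $+$, so the witnesses transport; with that in place the remaining steps are routine instances of the earlier results.
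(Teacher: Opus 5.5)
Your proposal is correct and follows the paper's proof: you transport witnesses $(\ell,r,c)\mapsto(g\ell,gr,gc)$ for the one-step relation, obtain reachability as the least fixed point of $U\mapsto\{m_0\}\cup\Post_{\longrightarrow_{\mathcal R}}(U)$ in the context lattice $\mathcal P_{S\cup T}(\mathcal M_{\mathrm{fin}}(\Sigma))$, and treat the effective claim as the unlabeled instance of Proposition~\ref{prop:effective-det}. Your appeal to Lemma~\ref{thm:fixed-point-localization} makes explicit what the paper leaves implicit, namely that the ordinary (ambient) reachable set coincides with that context fixed point.
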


\begin{proof}
For $g\in G_{(S)}$, applying $g$ to a rewriting witness $(\ell,r,c)$ gives the witness $(g\ell,gr,gc)$ because $g\mathcal R=\mathcal R$ and the monoid operation is equivariant. Thus the transition relation is $S$-supported. Reachability is the least fixed point of $U\mapsto\{m_0\}\cup\Post_{\longrightarrow_{\mathcal R}}(U)$ in the complete context lattice $\mathcal P_{S\cup T}(\mathcal M_{\mathrm{fin}}(\Sigma))$. The final assertion is the unlabeled instance of Proposition~\ref{prop:effective-det}: quotient by the effectively represented local orbits and compute the induced successor graph.
\end{proof}

This covers multiset-rewriting and Petri-net-style resource semantics over infinite alphabets, in the broader tradition of rewriting-based models of concurrency \cite{Meseguer1992}. It does not assert that the concrete reachable set is finite; the finite conclusion concerns the local orbit quotient and requires the stated effectiveness hypothesis.

\begin{proposition}[Local homomorphism theorem]\label{prop:local-hom}
Let $f:K\to L$ be an $S$-supported homomorphism between groups in nominal $G$-sets (groups whose operations are equivariant). Then $\ker f$ is an $S$-supported normal subgroup, $\im f$ is stable under $G_{(S)}$, and
$K/\ker f\cong\im f$ as $G_{(S)}$-groups.
\end{proposition}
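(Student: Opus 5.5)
The plan is to combine the classical first isomorphism theorem with the finite-context criterion of Proposition~\ref{prop:criterion}. First I would record the basic equivariance facts. Let $g\in G_{(S)}$. Since $f$ is $S$-supported, Proposition~\ref{prop:criterion} gives $f(gk)=gf(k)$ for all $k\in K$. Since the identity $e_L$ of $L$ is $G$-fixed (it is the value of the equivariant nullary operation), $gk\in\ker f$ holds iff $gf(k)=e_L$, iff $f(k)=g^{-1}e_L=e_L$. Hence $g\ker f=\ker f$, so $\ker f$ is $S$-supported. Normality and the subgroup property are classical; they do not depend on the action. In the same way, $g f(k)=f(gk)$ shows $g\,\im f\subseteq\im f$, and applying $g^{-1}$ gives equality.

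Next I would build the quotient as a $G_{(S)}$-group. The coset equivalence $R=\{(k,k'):k^{-1}k'\in\ker f\}$ is $S$-supported, because inversion and multiplication are equivariant and $\ker f$ is $S$-supported. Proposition~\ref{prop:local-quotient} then gives a well-defined $G_{(S)}$-action $g[k]=[gk]$ on $K/\ker f$, and shows that the quotient is a nominal $G_{(S)}$-set. The induced multiplication $[k][k']=[kk']$ is $G_{(S)}$-equivariant because multiplication on $K$ is $G$-equivariant. Similarly, $\im f$ with the restricted action of $G_{(S)}$ is a group in nominal $G_{(S)}$-sets: it is closed under the action, and its elements are finitely supported in $L$. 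Proposition~\ref{prop:symmetry-refinement} then covers the restriction of the group.

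Finally, consider the classical isomorphism $\bar f:K/\ker f\to\im f$, $[k]\mapsto f(k)$. It is well defined, bijective, and a homomorphism by the usual argument. It is equivariant because $\bar f(g[k])=\bar f([gk])=f(gk)=gf(k)=g\bar f([k])$ for $g\in G_{(S)}$. Since its inverse is then automatically equivariant, $\bar f$ is an isomorphism of $G_{(S)}$-groups.

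I expect no real obstacle. The only point needing care is that $K/\ker f$ carries merely a $G_{(S)}$-action, not a $G$-action, since $\ker f$ need not be $G$-invariant. This is why the statement is phrased for $G_{(S)}$-groups, and it is exactly what Proposition~\ref{prop:local-quotient} supplies. A second point worth checking explicitly is that the identity of $L$ is $G$-fixed, which the kernel computation relies on.
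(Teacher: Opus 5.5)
Your proposal is correct and follows essentially the same route as the paper: $G_{(S)}$-invariance of $\ker f$ and $\im f$ via Proposition~\ref{prop:criterion}, the quotient action from Proposition~\ref{prop:local-quotient}, and $G_{(S)}$-equivariance of the classical isomorphism $[k]\mapsto f(k)$. You also make explicit two steps the paper leaves implicit: that the identity of $L$ is $G$-fixed, and that the coset relation is $S$-supported, which is needed to invoke Proposition~\ref{prop:local-quotient}.
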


\begin{proof}
For $g\in G_{(S)}$ and $x\in\ker f$, $f(gx)=g f(x)=g e=e$,
so $gx\in\ker f$. The image calculation is analogous. Proposition~\ref{prop:local-quotient} supplies the quotient action, and the ordinary first isomorphism theorem is $G_{(S)}$-equivariant.
\end{proof}

This is the finite-context form of the correspondence and isomorphism theorems established for nominal groups in \cite{AlexandruCiobanu2014Groups}. The same work proves closure of suitable wreath products, suggesting finite-context analogues of cascade decompositions for reversible data automata. Such decomposition theorems require additional orbit-finiteness or uniform-support hypotheses and remain a separate problem.

\section{Structured atoms and finite orbit-cell compilation}\label{sec:structured}

General data symmetries replace equality atoms by structures such as dense order, equivalence relations, or the random graph \cite{BojanczykKlinLasota2014}. The semantic FSS results above require only a group action and finite supports. Model-theoretic hypotheses enter only when a finite compilation is desired; this section describes the finite representations used to make Property~III effective. The general presentation below assumes a chosen orbit transversal; all finite compilation instances use only finitely many representatives. Their effective construction is an additional input to an algorithm. The following elementary representation is a special case of the treatment of nominal sets and equivariant maps in \cite[Sections~8--10]{BojanczykKlinLasota2014}; we state it in the form used for finite compilation.

\begin{proposition}[Orbit representation of supported maps]\label{thm:map-representation}
Let $H=G_{(S)}$ and let $X,Y$ be $G$-sets. After choosing one representative $x_O$ from each $H$-orbit $O\subseteq X$, the $S$-supported maps $f:X\to Y$ are in bijection with families
\[
(y_O)_{O\in H\backslash X},
\qquad
y_O\in Y^{H_{x_O}},
\]
where $H_{x_O}$ is the stabilizer of $x_O$. The correspondence is $y_O=f(x_O)$ and $f(hx_O)=hy_O$.
\end{proposition}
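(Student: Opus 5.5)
The plan is to use the finite-context criterion (Proposition~\ref{prop:criterion}): $S$ supports $f$ exactly when $f(hx)=hf(x)$ for all $h\in H=G_{(S)}$ and $x\in X$. So the claim is the standard description of $H$-equivariant maps out of a disjoint union of transitive $H$-sets. I will define the forward map $\Phi(f)=(f(x_O))_{O}$ and the backward map $\Psi((y_O)_O)=f$ with $f(hx_O)=hy_O$, and then check that each is well defined and that they are mutually inverse.

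\emph{Forward direction.} Let $f$ be $S$-supported and let $k\in H_{x_O}$. Equivariance gives $kf(x_O)=f(kx_O)=f(x_O)$. Hence $y_O=f(x_O)$ lies in $Y^{H_{x_O}}$, and $\Phi$ lands in the stated product of fixed-point sets.

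\emph{Backward direction.} This is the step needing care, namely well-definedness. Every $x\in X$ lies in exactly one $H$-orbit $O$, so $x=hx_O$ for some $h\in H$; no choice beyond the given transversal is used, since $f(x)$ is defined as $hy_O$ for any such $h$. Suppose $hx_O=h'x_O$. Then $h^{-1}h'\in H_{x_O}$, so $h^{-1}h'y_O=y_O$, that is, $hy_O=h'y_O$. Thus the value does not depend on the witness $h$, and $f$ is a genuine function.

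\emph{Equivariance of the constructed map.} Let $k\in H$ and $x=hx_O$. Then $kx=(kh)x_O$, so
\[
f(kx)=khy_O=kf(x).
\]
By Proposition~\ref{prop:criterion}, $f$ is $S$-supported.

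\emph{Mutual inverses.} First take $h=e$: this gives $\Psi(y)(x_O)=y_O$, so $\Phi\Psi=\mathrm{id}$. Conversely, for an $S$-supported $f$, equivariance forces $f(hx_O)=hf(x_O)$. Therefore $f$ is determined by its values on representatives, and $\Psi\Phi=\mathrm{id}$.

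The only potential obstacle is well-definedness in the backward direction, and it is handled exactly by the stabilizer condition. I expect no issue with finite support of the values $y_O$. Those values are automatically in $Y$, and the statement only concerns maps between $G$-sets. If $Y$ is nominal, nothing extra is needed either.
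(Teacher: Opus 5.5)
Your proof is correct and follows the paper's argument essentially step for step. Like the paper, you use Proposition~\ref{prop:criterion} to reduce to $H$-equivariance, show that $f(x_O)$ is $H_{x_O}$-fixed, and use the stabilizer condition to make $f(hx_O)=hy_O$ well defined and equivariant; your explicit check that the two assignments are mutually inverse is a routine step the paper leaves implicit.
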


\begin{proof}
By Proposition~\ref{prop:criterion}, $S$-supported maps are precisely $H$-equivariant maps. If $k\in H_{x_O}$, then
$ky_O=kf(x_O)=f(kx_O)=f(x_O)=y_O$.
Conversely, define $f(hx_O)=hy_O$. If $hx_O=h'x_O$, then $h'^{-1}h\in H_{x_O}$ and therefore $hy_O=h'y_O$. The map is well defined and $H$-equivariant.
\end{proof}

\begin{corollary}[Finite cell presentation]\label{cor:finite-cell}
If $H\backslash X$ is finite and every fixed-point set $Y^{H_{x_O}}$ has an effective finite representation, then all $S$-supported maps $X\to Y$ have a finite cell-table representation. In particular, if $Y$ is finite with trivial action, an $S$-supported map is exactly an ordinary table on the finite orbit set $H\backslash X$.
\end{corollary}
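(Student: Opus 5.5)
The plan is to read the corollary off Proposition~\ref{thm:map-representation}, with finiteness supplied by the hypothesis on $H\backslash X$.

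First, choose one representative $x_O$ in each $H$-orbit $O$. Since $H\backslash X$ is finite, this is a finite family of choices, and it uses only finite choice, as promised in Section~\ref{sec:fss-viewpoint}. Proposition~\ref{thm:map-representation} then puts the $S$-supported maps $f:X\to Y$ in bijection with the families $(y_O)_{O\in H\backslash X}$ satisfying $y_O\in Y^{H_{x_O}}$. The bijection sends $f$ to $y_O=f(x_O)$, and its inverse is given by $f(hx_O)=hy_O$.

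Second, the cell table is the finite list of pairs $(x_O,y_O)$, with each $y_O$ written in the assumed effective finite representation of $Y^{H_{x_O}}$. Because the index set is finite and each entry has a finite representation, the whole table is finite. Injectivity and surjectivity of this presentation are exactly the bijection of Proposition~\ref{thm:map-representation}. So distinct $S$-supported maps receive distinct tables, and every table of admissible entries determines an $S$-supported map.

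Third, consider the special case where $Y$ is finite with trivial action. Then every element of $Y$ is fixed by every group element, so $Y^{H_{x_O}}=Y$ and the stabilizer constraint disappears. The families are therefore arbitrary maps $H\backslash X\to Y$, that is, ordinary tables on the finite orbit set. This can also be checked directly, as in the proof of Proposition~\ref{thm:context-function-lattice}: an $S$-supported map into a trivial-action set is constant on $H$-orbits, so it factors uniquely through $X\to H\backslash X$. The direct argument has the advantage of not depending on the chosen representatives.

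I do not expect a real obstacle. The only delicate point is the meaning of \emph{effective}. The statement presupposes a representation of each $Y^{H_{x_O}}$ and of the orbit representatives, and the table is relative to those data. I would note explicitly that no finitely supported transversal is claimed, only a finite metatheoretic choice.
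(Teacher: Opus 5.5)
Your proposal is correct and follows the paper's route: the paper gives no separate proof of this corollary, treating it as an immediate consequence of Proposition~\ref{thm:map-representation} with finitely many orbit representatives chosen by finite choice, and the trivial-action case reduces to $Y^{H_{x_O}}=Y$. Your additional representative-free check, factoring through the orbit projection $X\to H\backslash X$ as in Proposition~\ref{thm:context-function-lattice}, is a sound complement.
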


The next statement is standard: expanding $\mathfrak A$ by constants for the elements of $S$ yields again an ultrahomogeneous structure in a finite signature, and such a structure has finitely many automorphism orbits on each finite power; see \cite[Chapter~7]{Hodges1993} and \cite{Macpherson2011}. We include the short argument for completeness.

\begin{proposition}[Homogeneous finite-relational atoms]\label{thm:homogeneous-cells}
Let $\mathfrak A$ be an ultrahomogeneous structure in a finite relational signature, with universe $\A$ and $G=\Aut(\mathfrak A)$. For every finite $S\subseteq\A$ and every $n$, the pointwise stabilizer $G_{(S)}$ has finitely many orbits on $\A^n$. Each orbit is defined over $S$ by the complete quantifier-free diagram of a representative tuple.
\end{proposition}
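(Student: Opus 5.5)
The plan is to prove both claims at once by showing that two $n$-tuples lie in the same $G_{(S)}$-orbit exactly when they have the same quantifier-free type over $S$, and then to count those types. Enumerate $S$ as $\bar s=(s_1,\ldots,s_k)$. For $\bar a\in\A^n$, let $\operatorname{qftp}(\bar s,\bar a)$ be its complete quantifier-free diagram. This is the set of atomic and negated atomic formulas in variables $x_1,\ldots,x_{k+n}$, including equalities, that hold of the tuple $(\bar s,\bar a)$.

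First I would establish the easy direction. Every $g\in G_{(S)}$ is an automorphism fixing $\bar s$, so it preserves all relations and equalities. Hence $(\bar s,\bar a)$ and $(\bar s,g\bar a)$ have the same diagram.

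Conversely, suppose $(\bar s,\bar a)$ and $(\bar s,\bar b)$ have the same diagram. Then the map $s_i\mapsto s_i$, $a_j\mapsto b_j$ is well defined and injective, because the equalities agree. It also preserves every relation symbol and its negation. So it is an isomorphism between the finite substructures generated by these tuples, and these are just their underlying sets since the signature is relational. Ultrahomogeneity extends this isomorphism to some $g\in\Aut(\mathfrak A)=G$. This $g$ fixes each $s_i$, so $g\in G_{(S)}$, and $g\bar a=\bar b$. I expect this extension step to be the only one that needs care: one must check that the partial map is well defined, which uses that equality is part of the diagram, and that it fixes $S$ pointwise.

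For finiteness, the signature is finite and relational, so there are only finitely many atomic formulas in $k+n$ variables. Hence there are only finitely many complete diagrams, and by the orbit characterization at most that many $G_{(S)}$-orbits on $\A^n$. An alternative would be to bound $|G\backslash\A^{k+n}|$ in the same way and apply Lemma~\ref{lem:local-oligomorphic}, but the direct count is already enough.

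Definability follows from the characterization. Fix a representative $\bar a$ of an orbit $O$. The conjunction of its diagram with the $x_i$ replaced by the constants $s_i$ is a finite quantifier-free formula $\theta(\bar y)$ with parameters from $S$. We have $\bar b\models\theta$ iff $(\bar s,\bar b)$ has the same diagram as $(\bar s,\bar a)$, iff $\bar b\in O$. So $O$ is defined over $S$ by this diagram.
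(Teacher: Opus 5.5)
Your proposal is correct and follows the paper's proof: $G_{(S)}$-orbits on $\A^n$ are identified with complete quantifier-free diagrams over $S$ via ultrahomogeneity, and finiteness of the relational signature bounds the number of such diagrams. You write out the well-definedness of the partial isomorphism and the defining formula more explicitly than the paper does, but the argument is the same.
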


\begin{proof}
Because the relational signature and $S$ are finite and $n$ is fixed, only finitely many complete equality and relation patterns can occur on an $n$-tuple together with the named parameters from $S$. Tuples in the same $G_{(S)}$-orbit have the same complete quantifier-free diagram. Conversely, if two tuples have the same such diagram over $S$, the map fixing $S$ and sending one tuple to the other is an isomorphism between the induced finite substructures. Homogeneity extends it to an automorphism in $G_{(S)}$.
\end{proof}

\begin{corollary}[Finite compilation of context operations]\label{cor:homogeneous-compilation}
Under the hypotheses of Proposition~\ref{thm:homogeneous-cells}, every $S$-supported predicate on $\A^n$ is a finite union of quantifier-free cells over~$S$. Every $S$-supported operation from $\A^n$ to a finite control set with trivial action is a finite cell-indexed table. Consequently, whenever the state, label, or semantic carriers under consideration are finite disjoint unions of $G_{(S)}$-invariant subsets of finite powers of $\A$, the corresponding supported transitions, predicates, and finite-control semantic operations admit finite cell presentations.
\end{corollary}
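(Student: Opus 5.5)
The plan is to reduce all three assertions to Proposition~\ref{thm:homogeneous-cells} and Proposition~\ref{thm:map-representation} (in its trivial-action form, Corollary~\ref{cor:finite-cell}), working one $G_{(S)}$-invariant piece at a time.

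\emph{First assertion.} Fix an $S$-supported predicate $U\subseteq\A^n$. By Proposition~\ref{prop:orbit-algebra}, $U$ is a union of $G_{(S)}$-orbits on $\A^n$. By Proposition~\ref{thm:homogeneous-cells}, there are only finitely many such orbits. Each of them is exactly the set of tuples realizing one complete quantifier-free diagram over $S$, which is a quantifier-free cell. So $U$ is the finite union of the cells whose orbits it contains.

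\emph{Second assertion.} Let $f:\A^n\to C$ be $S$-supported, with $C$ finite and carrying the trivial action. By Proposition~\ref{prop:criterion}, $f(gx)=f(x)$ for all $g\in G_{(S)}$, so $f$ is constant on each orbit. Hence $f$ factors through the finite orbit set $G_{(S)}\backslash\A^n$, and it is the table indexed by the cells. This is the trivial-action case of Corollary~\ref{cor:finite-cell}.

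\emph{Consequence for disjoint unions.} Let the carrier be $X=\bigsqcup_{i=1}^k X_i$, where each $X_i\subseteq\A^{n_i}$ is $G_{(S)}$-invariant and is tagged by its index. The $G_{(S)}$-orbits on $X$ are then the orbits on the individual $X_i$. Each $X_i$ is a union of cells of $\A^{n_i}$, and only finitely many such cells exist, so $G_{(S)}\backslash X$ is finite and every orbit is a cell. Predicates and finite-control operations on $X$ are then handled by the previous two paragraphs, applied summand by summand.

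\emph{Transition relations.} An $S$-supported transition relation is a subset of $X\times\Lambda\times X$. This product is again a finite disjoint union of invariant subsets of finite powers, namely the pieces $X_i\times\Lambda_j\times X_l\subseteq\A^{n_i+m_j+n_l}$, which are invariant under the diagonal action. By the first assertion applied to each piece, the relation is a finite union of cells.

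The main obstacle is bookkeeping rather than mathematics. I need to check that products of invariant subsets of powers are invariant subsets of the concatenated power, and that the tagging in the disjoint union does not interfere with the group action. Both hold because $G$ acts componentwise and trivially on the tags.
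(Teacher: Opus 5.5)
Your proposal is correct and follows the intended route. The paper gives no separate proof: it reads the corollary off from Proposition~\ref{thm:homogeneous-cells} (finitely many $G_{(S)}$-orbits, each a quantifier-free cell over $S$), Proposition~\ref{prop:orbit-algebra} (supported predicates are unions of orbits), and Corollary~\ref{cor:finite-cell} (trivial-action targets give tables on the orbit set), and your summand-by-summand and product bookkeeping supplies the final clause, with finite control sets fitting as copies of $\A^0$ and transition maps covered through their graphs.
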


This is a finite representability statement, not a uniform algorithm for arbitrary ultrahomogeneous structures. Effective compilation additionally requires procedures to enumerate realizable diagrams over the given parameters and to evaluate the specified predicates, transitions, and transformers on them. For the equality, rational-order, and random-graph examples, the cells themselves have the explicit descriptions below; semantic operations still require an effective specification.

For the equality and dense-order symmetries, the quantifier-free cell description is the support--definability correspondence of \cite[Theorem~23]{Ciobanu2026Synthese}: a relation on $\A^n$ is finitely supported exactly when it is first-order definable over a finite set of parameters, and then it is quantifier-free definable over any finite set supporting it. Proposition~\ref{thm:homogeneous-cells} extends the orbit-counting part of that argument to ultrahomogeneous structures in finite relational signatures.

Three examples make the cell structure explicit.

\begin{example}[Equality atoms]
For $|S|=k$, the $G_{(S)}$-orbits on $\A$ are the $k$ singleton parameter cells and the fresh cell $\A\setminus S$. Thus a unary supported transition into finite control is a table with $k+1$ input cells. On $\A^n$, the cells are equality patterns among the coordinates and the parameters.
\end{example}

\begin{example}[Dense order]
For $(\mathbb Q,<)$ and $S=\{s_1<\cdots<s_k\}$, the unary cells are the $k$ singleton points and the $k+1$ open intervals determined by them. Hence there are $2k+1$ unary cells. Higher-arity cells are finite order types over $S$. The same counts hold for the uncountable homogeneous order $(\mathbb R,<)$, since Proposition~\ref{thm:homogeneous-cells} makes no countability assumption.
\end{example}

\begin{example}[The countable random graph]
Let $\mathfrak R=(\A,E)$ be the countable random graph. Outside a context $S$ of size $k$, the unary orbit of a vertex is determined by its adjacency vector to $S$. Every vector is realized by the extension property, so there are $k+2^k$ unary cells, including the singleton parameter cells. Finite-control transition functions are therefore ordinary tables indexed by adjacency patterns. This gives a direct compilation of graph-register semantics into finite data.
\end{example}

\subsection{Case study: a finite-context authorization monitor}\label{sec:authorization-monitor}

We now carry out a complete finite-context dependency analysis. Let $G$ be the finite-permutation group on the equality atoms $\A$, fix distinct atoms $p,r\in\A$, and put $S=\{p,r\}$. The atoms $p$ and $r$ represent a trusted principal and a protected resource. Let
\[
\Sigma=\{\mathsf{login}(a):a\in\A\}\;\sqcup\;
       \{\mathsf{access}(a):a\in\A\},
\]
with $g\mathsf{login}(a)=\mathsf{login}(ga)$ and
$g\mathsf{access}(a)=\mathsf{access}(ga)$. The control set
\[
Q=\{q_0,q_1,q_{\mathrm{ok}},q_{\mathrm{bad}}\}
\]
has trivial action. Define a total deterministic transition map
$\delta:Q\times\Sigma\to Q$ as follows:
\begin{align*}
\delta(q_0,\mathsf{login}(a))&=
 \begin{cases}q_1,&a=p,\\ q_{\mathrm{bad}},&a\neq p,\end{cases}
&
\delta(q_1,\mathsf{access}(a))&=
 \begin{cases}q_{\mathrm{ok}},&a=r,\\ q_{\mathrm{bad}},&a\neq r,\end{cases}
\end{align*}
all other actions from $q_0$ or $q_1$ lead to $q_{\mathrm{bad}}$, and
$q_{\mathrm{ok}}$, $q_{\mathrm{bad}}$ are absorbing. The initial state is~$q_0$ and the sole accepting state is $q_{\mathrm{ok}}$.

\begin{proposition}[Dependency propagation and finite compilation]\label{thm:authorization-monitor}
For the authorization monitor above:
\begin{enumerate}[label=(\alph*)]
\item $S=\{p,r\}$ supports $\delta$, the accepted language $L_{p,r}$, and its right Nerode equivalence. Moreover, $S$ is the least support of both $\delta$ and $L_{p,r}$.
\item Relative to $G_{(S)}$, the atom set has the three unary cells

\centerline{$\{p\},\qquad \{r\},\qquad \A\setminus\{p,r\}$.}

Hence the tagged alphabet has six cells, and the infinite-alphabet monitor compiles to an ordinary deterministic automaton over the finite alphabet

\centerline{
$\{\mathsf{login},\mathsf{access}\}\times
\{p,r,\mathsf{fresh}\}$.}

\item The four control states are pairwise Nerode-distinct. Thus the displayed monitor is already the context-relative minimal deterministic automaton for $L_{p,r}$.
\end{enumerate}
\end{proposition}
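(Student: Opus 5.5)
For part (a), I will check the finite-context criterion of Proposition~\ref{prop:criterion} directly. Let $g\in G_{(S)}$. Then $gp=p$ and $gr=r$, and for $a\in\A$ we have $ga=p$ exactly when $a=p$, and $ga=r$ exactly when $a=r$. Every case of the definition of $\delta$ tests only the label tag and whether the atom equals $p$ or $r$. Since $Q$ has trivial action, this gives $\delta(gq,g\sigma)=\delta(q,\sigma)=g\delta(q,\sigma)$. The initial state and $F=\{q_{\mathrm{ok}}\}$ are trivially invariant. Proposition~\ref{prop:language-support} then shows that $S$ supports $L_{p,r}$, and Proposition~\ref{thm:nerode} shows that $S$ supports its Nerode equivalence.

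For leastness, it suffices to show that no proper subset of $S$ supports $\delta$ or $L_{p,r}$; since $G$ is the finitary symmetric group, least supports exist. Every support of $L_{p,r}$ is also a support of $\delta$, because $S_\delta$ supports $L$ by Proposition~\ref{prop:language-support}. So I will argue on $L$. The word $\mathsf{login}(p)\mathsf{access}(r)$ lies in $L$. Suppose $p\notin T$ for some finite $T$. Choose a fresh atom $c\notin T\cup\{p,r\}$ and let $g$ be the transposition $(p\,c)$. Then $g\in G_{(T)}$, but $g$ sends the word to $\mathsf{login}(c)\mathsf{access}(r)$, which is not in $L$. Hence $p$ must belong to every support of $L$, and the same transposition argument shows this for $r$. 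This forces any support of $L$ to contain $S$. The same argument applied to $\delta$, using the transition from $q_0$ on $\mathsf{login}(p)$, gives the claim for $\delta$ as well.

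For part (b), I will read off the unary cells from the equality-atoms example, or from Proposition~\ref{thm:homogeneous-cells}. For $k=2$ these are the two singletons $\{p\}$, $\{r\}$ and the fresh cell $\A\setminus\{p,r\}$. The fresh cell is a single orbit, since finitary permutations fixing $p$ and $r$ act transitively on the remaining atoms. Because the action on $\Sigma$ preserves tags, the $G_{(S)}$-orbits on $\Sigma$ are the tag--cell pairs, six in total. By Corollary~\ref{cor:finite-cell}, with $Q$ finite and carrying the trivial action, $\delta(q,-)$ is constant on each cell. So $\delta$ is a table on $Q\times\{\mathsf{login},\mathsf{access}\}\times\{p,r,\mathsf{fresh}\}$. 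A word is accepted exactly when its cell-image is accepted by this finite automaton, because the cell projection commutes with the runs.

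For part (c), I will separate the four states by suffixes. From $q_0$ the suffix $\mathsf{login}(p)\mathsf{access}(r)$ is accepted. From $q_1$ the suffix $\mathsf{access}(r)$ is accepted. From $q_{\mathrm{ok}}$ the empty suffix is accepted. From $q_{\mathrm{bad}}$ no suffix is accepted. Then:
\begin{itemize}
\item $\varepsilon$ distinguishes $q_{\mathrm{ok}}$ from all other states;
\item $\mathsf{access}(r)$ distinguishes $q_1$ from $q_0$ and from $q_{\mathrm{bad}}$;
\item $\mathsf{login}(p)\mathsf{access}(r)$ distinguishes $q_0$ from $q_{\mathrm{bad}}$.
\end{itemize}
Every state is reachable. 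The map $\varphi$ of Corollary~\ref{cor:nerode-factorization} is therefore surjective, and since the states are pairwise distinguishable it is also injective. Hence the monitor is isomorphic to the Nerode quotient, so it is minimal.

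The only step needing care is leastness in (a). It requires an appropriate fresh atom and the observation that transpositions lie in the finitary group $G$. The remaining steps are routine verifications.
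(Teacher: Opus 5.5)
Your proposal is correct and follows the paper's proof essentially step for step. You use the same stabilizer check for $\delta$, propagated to $L_{p,r}$ and $\Ner_{L_{p,r}}$ via Propositions~\ref{prop:language-support} and~\ref{thm:nerode}, the same transposition $(p\,c)$ with $c\notin T\cup\{p,r\}$ for leastness, the equality-atom cell decomposition with Corollary~\ref{cor:finite-cell}, and the same distinguishing suffixes; your appeal to injectivity of $\varphi$ from Corollary~\ref{cor:nerode-factorization} only spells out the minimality step. One small slip: Proposition~\ref{prop:language-support} shows that every support of $\delta$ supports $L_{p,r}$, not the converse you state, but that is exactly the direction needed to transfer leastness from $L_{p,r}$ to $\delta$, and your direct transposition argument for $\delta$ covers it anyway.
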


\begin{proof}
Every $g\in G_{(S)}$ preserves the tests $a=p$ and $a=r$; since $Q$ has trivial action, Proposition~\ref{prop:criterion} gives
$\delta(gq,g\sigma)=g\delta(q,\sigma)=\delta(q,\sigma)$. Hence $S$ supports~$\delta$, and Proposition~\ref{prop:language-support} and Proposition~\ref{thm:nerode} propagate the same support bound to~$L_{p,r}$ and its Nerode equivalence.

To prove minimality of the context, let $T$ be a finite support of $L_{p,r}$ and suppose $p\notin T$. Choose $c\notin T\cup\{p,r\}$ and let $\pi$ exchange $p$ and $c$ while fixing every other atom. Then $\pi\in G_{(T)}$, but

\centerline{
$\mathsf{login}(p)\mathsf{access}(r)\in L_{p,r}, \qquad
\mathsf{login}(c)\mathsf{access}(r)\notin L_{p,r}$,}

\noindent
contradicting $T$-invariance. Thus $p\in T$; the same argument for $r$ shows $r\in T$. The proof for $\delta$ is identical, using its values on the corresponding~actions.

The cell statement is the equality-atom decomposition above, applied to the two tagged copies of $\A$. The transition table is constant on those six cells, so Corollary~\ref{cor:finite-cell} gives the finite compilation. Finally, all four states are reachable. The state $q_{\mathrm{ok}}$ is distinguished by the empty continuation; $q_{\mathrm{bad}}$ has no accepting continuation; $q_1$ accepts the continuation $\mathsf{access}(r)$ whereas $q_0$ does not; and $q_0$ accepts $\mathsf{login}(p)\mathsf{access}(r)$. Hence the four states are pairwise Nerode-distinct.
\end{proof}

\begin{proposition}[Why the finite context cannot be erased]\label{prop:authorization-no-global}
There is no globally $G$-equivariant transition map on the unchanged objects $Q$ and $\Sigma$ that realizes the authorization policy above.
\end{proposition}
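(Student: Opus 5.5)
The plan is a contradiction argument. Suppose $\delta':Q\times\Sigma\to Q$ is globally $G$-equivariant and realizes the policy. Realizing the policy means that $\delta'$ with initial state $q_0$ and accepting set $\{q_{\mathrm{ok}}\}$ accepts $L_{p,r}$, or at least agrees with the monitor on the distinguishing behaviour $\delta'(q_0,\mathsf{login}(p))=q_1$ and $\delta'(q_0,\mathsf{login}(c))=q_{\mathrm{bad}}$ for $c\neq p$. Since $Q$ carries the trivial action, Proposition~\ref{prop:criterion} with $S=\varnothing$ gives $\delta'(q,g\sigma)=\delta'(gq,g\sigma)=g\delta'(q,\sigma)=\delta'(q,\sigma)$ for every $g\in G$. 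Hence $\delta'(q,-)$ is constant on each $G$-orbit of $\Sigma$.

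The orbits of $\Sigma$ under the finite-permutation group are exactly the two tagged copies $\{\mathsf{login}(a):a\in\A\}$ and $\{\mathsf{access}(a):a\in\A\}$. This holds because any two atoms are exchanged by a transposition, which is a finitary permutation. Choose any $c\neq p$ and let $\pi$ be the transposition of $p$ and $c$. Then the equivariance computation gives $\delta'(q_0,\mathsf{login}(c))=\delta'(q_0,\mathsf{login}(p))$. The policy, however, requires these values to be $q_1$ and $q_{\mathrm{bad}}$ respectively, which is a contradiction.

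To make the conclusion robust against the reading in which only the accepted language is prescribed, and the intermediate states are not, I would rephrase the argument at the level of languages. With trivial action on $Q$, global equivariance also forces $gq_0=q_0$ and $g\{q_{\mathrm{ok}}\}=\{q_{\mathrm{ok}}\}$. So by Proposition~\ref{prop:language-support} with all supports empty, the recognized language would be $\varnothing$-supported, that is, $G$-invariant. But Proposition~\ref{thm:authorization-monitor}(a) shows that the least support of $L_{p,r}$ is $\{p,r\}\neq\varnothing$. Concretely, $\pi$ maps $\mathsf{login}(p)\mathsf{access}(r)\in L_{p,r}$ to $\mathsf{login}(c)\mathsf{access}(r)\notin L_{p,r}$.

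The main obstacle is not computational. It lies in pinning down what ``realizes the authorization policy'' means. I would state both readings, transition-level agreement and language recognition with the unchanged initial and accepting states, and dispatch each as above. The language-level version subsumes the other, so the proof would lead with it.
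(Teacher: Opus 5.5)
Your transition-level argument is exactly the paper's proof: because $Q$ carries the trivial action, equivariance under the transposition of $p$ and $c$ forces $\delta'(q_0,\mathsf{login}(c))=\delta'(q_0,\mathsf{login}(p))$, which contradicts the two prescribed values.

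The route you propose to lead with is genuinely different. Every state and every subset of $Q$ is $G$-fixed, so the graph of an equivariant $\delta'$ is $\varnothing$-supported, and so is \emph{every} choice of initial state and accepting set. Proposition~\ref{prop:language-support} then makes the recognized language $G$-invariant, while Proposition~\ref{thm:authorization-monitor}(a) shows that $L_{p,r}$ is not. This gives a stronger, behaviour-level conclusion: no equivariant transition map on the unchanged $Q$ and $\Sigma$ recognizes $L_{p,r}$, whatever initial and accepting states are chosen. It also locates the obstruction in the non-invariance of the language itself.

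The paper's one-line computation uses only the two prescribed login transitions out of $q_0$ and no language-level machinery. For that reason, your remark that the language version ``subsumes'' the other holds only when realizing the policy means full agreement with $\delta$ (or recognition of $L_{p,r}$). It does not hold under your weaker reading, in which only those two transitions are fixed, so keeping both arguments is the right call.

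One small repair: for the concrete witness take $c\notin\{p,r\}$, as the paper does. With $c=r$ the transposition also moves $r$ and sends $\mathsf{login}(p)\mathsf{access}(r)$ to $\mathsf{login}(r)\mathsf{access}(p)$. That word is still outside $L_{p,r}$, but it is not the word you wrote.
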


\begin{proof}
Choose $c\notin\{p,r\}$ and let $\pi$ exchange $p$ and $c$ while fixing $r$. If $\widehat\delta:Q\times\Sigma\to Q$ were globally equivariant and agreed with the policy, then, because $Q$ has trivial action,
\[
\widehat\delta(q_0,\mathsf{login}(c))
 =\widehat\delta(\pi q_0,\pi\mathsf{login}(p))
 =\pi\widehat\delta(q_0,\mathsf{login}(p))
 =q_1.
\]
The policy requires the left-hand side to be $q_{\mathrm{bad}}$, a contradiction.
\end{proof}

Proposition~\ref{prop:authorization-no-global} concerns representation on fixed carriers; it is not an impossibility result for nominal techniques. In the slice presentation of Remark~\ref{rem:context-symmetry}, with $\bar s=(p,r)$, the monitor becomes the globally $G$-equivariant map
\[
(Q\times G\bar s)\times\Sigma\to Q\times G\bar s,\qquad
((q,(a,b)),\sigma)\mapsto(\delta_{a,b}(q,\sigma),(a,b)),
\]
where $\delta_{a,b}$ is the monitor for the policy pair $(a,b)$; alternatively, $p$ and $r$ may be named as constants. The fixed-context presentation keeps $Q$ unchanged and records $(p,r)$ once, as the support of $\delta$.

The six orbit cells are a canonical alphabet partition, not a minimal behavioural alphabet. For this monitor, the transition table depends only on the following three classes:
\[
\{\mathsf{login}(p)\},\qquad
\{\mathsf{access}(r)\},\qquad
\Sigma\setminus\{\mathsf{login}(p),\mathsf{access}(r)\}.
\]
Their transitions from $q_0$ are respectively $q_1,q_{\mathrm{bad}},q_{\mathrm{bad}}$, and from $q_1$ they are $q_{\mathrm{bad}},q_{\mathrm{ok}},q_{\mathrm{bad}}$; both absorbing states remain fixed. Thus the orbit compilation admits a further three-symbol behavioural compression. This does not change the four-state Nerode minimality or the least support $\{p,r\}$.

The case study also separates semantic and algorithmic conclusions. Finite support proves that the transition map, language, and Nerode relation remain within $S$; the equality-atom orbit decomposition independently supplies the six-symbol implementation. No general claim is made that every supported system is orbit-finite.

The case study uses the equality symmetry only for concreteness. Over the ordered symmetry ($\A=\mathbb Q$, $G=\Aut(\mathbb Q,<)$), the same monitor is again supported by $S=\{p,r\}$, but $G_{(S)}$ has five unary cells, namely $\{p\}$, $\{r\}$, and three open intervals, so the tagged alphabet has ten cells, and the transition table is constant across the interval cells. The choice of symmetry matters for order-dependent policies: a monitor that authorizes $\mathsf{access}(a)$ for every $a<r$ is $S$-supported for the ordered symmetry, whereas for the equality symmetry it has no finite support, because $\{a\in\mathbb Q:a<r\}$ is infinite and coinfinite.

The same monitor gives a compact fixed-point analysis. For the underlying transition relation, write
\[
\Pre^{\exists}(U)
 =\{q\in Q:\exists\sigma\in\Sigma\ \exists q'\in U\
                    (q\xrightarrow{\sigma}q')\}.
\]
The backward may-reachability transformer
\[
\Phi_{\mathrm{bad}}(U)=\{q_{\mathrm{bad}}\}\cup\Pre^{\exists}(U)
\]
is supported by $S$, because it is induced by the $S$-supported transition map. Its least fixed point is $\{q_0,q_1,q_{\mathrm{bad}}\}$; thus $q_{\mathrm{ok}}$ is the only control state from which a violation cannot be reached. The result has empty support because~$Q$ carries the trivial action. This illustrates that finite-context dependency analysis provides a sound upper bound and need not return a least support at every semantic stage.

The examples factor the analysis into two independent layers. Finite support establishes semantic closure and propagates the declared context through transitions, predicates, and quotient constructions. Finiteness of the orbit decomposition of $G_{(S)}$ on the relevant carriers (Proposition~\ref{thm:homogeneous-cells}) then converts those supported objects into finite cell representations.

\section{Summary of hypotheses}\label{sec:frontier}

Table~\ref{tab:frontier} summarizes, for each construction, the property of Section~\ref{sec:principles} that it uses, what finite support alone yields, and which further hypotheses are needed for finite or globally equivariant behaviour.

\begin{table}[htbp]
\centering
\caption{The property used by each construction, what finite support yields, and what further hypotheses add.}
\label{tab:frontier}
\footnotesize
\renewcommand{\arraystretch}{0.97}
\begin{tabularx}{\textwidth}{>{\raggedright\arraybackslash}p{.15\textwidth}>{\raggedright\arraybackslash}p{.085\textwidth}XX}
\toprule
Construction & Property & Supported conclusion & Additional condition for finite/global behaviour\\
\midrule
Data symmetry $(\A,G)$ & -- & The subgroup parameter specifies the structure programs may observe; finite supports then localize individual semantic objects. & Passing to a smaller group admits more predicates and operations, but may alter least-support and orbit-finiteness properties.\\
Fresh data & I & A deterministic $S$-supported enumeration outputs only elements fixed by~$G_{(S)}$. & For equality, order, and random-graph atoms these are exactly $S$; nondeterministic generation may select outside $S$.\\
Function space & I & All finitely supported maps form a nominal $G$-set. & The function space need not be orbit-finite \cite{BojanczykNguyenStefanski2024}.\\
Finite stores & I & Finite environments and register files carry explicit support bounds. & Canonical minimal contexts require a least-support property.\\
Finite-context policies & I & Policy parameters are recorded as supports of transitions and specifications without enlarging the control object. & A globally equivariant presentation on the unchanged control object may be impossible; a standard globally equivariant encoding requires contextualization or representation enrichment.\\
SOS semantics & I & An $S$-supported inference system generates an $S$-supported transition relation. & Finite-state verification requires a finite effective local quotient.\\
NFA determinization & I & The subset construction exists on $\Pfs(Q)$ and preserves language. & Orbit-finite determinization requires an orbit bound on reachable subsets.\\
Bisimulation and quotients & I, II, III & Contextual bisimilarity is a greatest fixed point; an $S$-supported congruence yields a quotient for $G_{(S)}$. & A global $G$-action requires equivariance; finite algorithms require effective orbit cells.\\
Predicate and fixed-point semantics & II, III & Finitely supported predicates are complete for finitely supported families; monotone supported transformers have fixed points agreeing with the context fixed points and those of compatible monotone stabilizer-equivariant ambient extensions. & Boolean termination: finitely many carrier orbits under $G_{(S)}$; quantitative termination also needs finite height. Algorithms require effective operations on cells.\\
Abstract interpretation & I, II & Supported Galois connections, widening, and narrowing preserve the declared context. & Termination and precision still require the usual abstract-domain hypotheses.\\
Words, multisets, and rewriting & I & Universal properties and multiset rewriting extend to arbitrary nominal alphabets. & Orbit-finite reachability or algebraic decomposition requires further restrictions.\\
\bottomrule
\end{tabularx}
\end{table}

Global equivariance is the empty-context case of finite-context supportedness, and finite-context supportedness does not imply orbit-finite representability. Conversely, orbit-finiteness of the carriers does not make a given transition, predicate, or semantic transformer finitely supported. Finite support is also not automatic: an ordinary set-theoretic construction may introduce an unsupported subset or choice function, and unrestricted choice, global orderings, and arbitrary products need not preserve finite support \cite{AlexandruCiobanu2015Multisets}, \cite[Theorem~3.1]{AlexandruCiobanu2020}. Each statement above therefore records its support hypotheses separately from the orbit hypotheses used for effective computation.

\section{Research directions}\label{sec:directions}

The three structural properties suggest the following research directions; each requires a more precise formulation before a novelty or openness claim can be made. We list them by property, together with the results of this paper on which they would build.

\paragraph{Property I: computing and bounding contexts}
\begin{enumerate}[label=(I.\arabic*)]
\item \emph{Support inference.}  The hierarchical construction of supports (Section~\ref{sec:fss-viewpoint}) is algorithmic in spirit. Develop a type system or proof-assistant tactic that infers, for each program phrase, a finite context supporting its denotation, with the bounds of Propositions~\ref{prop:criterion}, \ref{thm:term-support}, and~\ref{prop:finite-store} as typing rules, and determine when the inferred context is least for data symmetries with least supports.
\item \emph{Growth of contexts.}  Corollary~\ref{cor:reachable-support} and Proposition~\ref{prop:no-fresh-generation} bound support growth along deterministic evaluations by the supports of the inputs and operations. Quantify this growth for classes of systems (for instance, bounded-register or bounded-context systems) and relate it to the size of compiled representations in Section~\ref{sec:structured}.
\end{enumerate}

\paragraph{Property II: internal domain theory over data symmetries}
\begin{enumerate}[label=(II.\arabic*)]
\item \emph{Internally complete semantic domains.}  Proposition~\ref{prop:internal-completeness} covers powersets and lattice-valued predicates. Determine which other domains of program semantics (stream, trace, and failure domains, powerdomains, and domains of probabilistic predicates) are complete for finitely supported families over an arbitrary data symmetry, and whether internal and ambient fixed points coincide there as in Proposition~\ref{prop:fs-fixed-points}. For the equality symmetry, the Tarski, Bourbaki--Witt, and Tarski--Kantorovitch theorems of \cite[Chapters~5 and~6]{AlexandruCiobanu2020} are the natural starting point.
\item \emph{Choice-free selection.}  Where semantics or synthesis requires a selection that is not finitely supported, determine how much must be added to the context for a finitely supported selection to exist, and when the support-indexed description operator of \cite{Ciobanu2026PhilMath} suffices.
\end{enumerate}

\paragraph{Property III: finiteness and complexity}
\begin{enumerate}[label=(III.\arabic*)]
\item \emph{Beyond orbit-finiteness.}  Corollary~\ref{cor:iteration-bound} needs finitely many context orbits on the carrier, while the predicate lattice may be orbit-infinite, as $\Pfs(\A)$ illustrates. Characterize the nominal $G$-sets with finite context levels but unbounded least supports, in the sense of the finite-dependence profile \cite{Ciobanu2026Profile}, that arise in verification, and determine which algorithms of \cite{BojanczykKlinLasota2014,KlinLelyk2019} extend to them.
\item \emph{Complexity in the number of context orbits.}  The iteration bound $m$ of Corollary~\ref{cor:iteration-bound} is the number of $G_{(S)}$-orbits, which for equality atoms grows with the Bell numbers in the arity. Find symbolic representations of context lattices that avoid enumerating orbits, and establish lower bounds in terms of $|S|$ and the arity.
\item \emph{Infinite-height quantitative lattices.}  For $L=[0,1]$, Property~II gives fixed points but Property~III does not give termination. Under suitable continuity or contractivity hypotheses, determine convergence rates of iteration in $\Pred_S(X,[0,1])$ and whether widening operators can be chosen supported by the context of the analysis, extending Proposition~\ref{prop:ai-soundness} and the widening results of Section~\ref{sec:ai}.
\end{enumerate}

\section{Related work}

\paragraph{Nominal sets, FSS, and the finite support principle}
Nominal sets and their use for names, binding, and $\alpha$-equivalence are presented in \cite{Pitts2013}. The finite support principle, that uniquely specified constructions from finitely supported data using equivariant operations remain finitely supported, is stated in \cite[Theorem~3.5]{Pitts2006}; the $S$-finite support principle, together with the hierarchical construction of supports used in our proofs, is formulated within set theory with atoms in \cite[Section~1.3]{AlexandruCiobanu2020}. That monograph studies which classical theorems remain valid when all objects involved, over an arbitrary infinite set of atoms, are required to be finitely supported: choice principles and the generalized continuum hypothesis fail \cite[Theorem~3.1]{AlexandruCiobanu2020}, several notions of infinity separate \cite[Chapter~9]{AlexandruCiobanu2020}, and Tarski, Bourbaki--Witt, and Tarski--Kantorovitch fixed-point theorems hold for invariant complete lattices and posets, with specific fixed-point results for sets without infinite uniformly supported subsets \cite[Chapters~5 and~6]{AlexandruCiobanu2020}, \cite{AlexandruCiobanu2019Uniform,AlexandruCiobanu2020Carpathian}.

Forms of infinity and finiteness for finitely supported sets are compared in \cite{AlexandruCiobanu2022Infinity,AlexandruCiobanu2024Finite}, and for equality atoms, the finite-dependence profile of \cite{Ciobanu2026Profile} decomposes orbit-finiteness into finite context levels and a uniform bound on least supports. The symmetry group as a calibration parameter of the finitely supported universe, the support--definability correspondence, and the semantic--syntactic gap for uncountable atom sets are developed in \cite{Ciobanu2026Synthese}; the graded passage from global equivariance to invariance under the stabilizer of a finite context, with its Galois connection and the choice-free support-indexed description operator, is developed in \cite{Ciobanu2026PhilMath}. Properties~II and~III of Section~\ref{sec:principles} extend the FSS completeness and uniform-finiteness results to arbitrary data symmetries, and the FSS constructions used in Sections~\ref{sec:ai} and~\ref{sec:resources} originate in this line of work.

Most closure statements in the present paper are instances of the finite support principle, and Proposition~\ref{thm:sos-support} is the fixed-context form of the theorem on finitely supported inductive definitions \cite[Theorem~3.6]{Pitts2006}. The category whose morphisms are all finitely supported maps is studied in \cite{Crole2021}; for a fixed context, Remark~\ref{rem:context-symmetry} identifies supported maps on the given carriers with $G_{(S)}$-equivariant maps. Allowing arbitrary nominal $G_{(S)}$-carriers gives the stated equivalence with a slice of nominal $G$-sets.

\paragraph{Data symmetries and automata}
Data symmetries, nominal $G$-sets, automata over them, the corresponding Myhill--Nerode theorem, least supports, and finite representations of orbit-finite sets are developed in \cite{BojanczykKlinLasota2014}. Sections~\ref{sec:automata}, \ref{sec:quotients}, and~\ref{sec:structured} specialize parts of that theory to a fixed context, and Proposition~\ref{thm:map-representation} is a special case of \cite[Sections~8--10]{BojanczykKlinLasota2014}. The difference in emphasis is that the semantic statements here do not assume orbit-finiteness, which is reintroduced only for effective computation. Register automata \cite{KaminskiFrancez1994} are closely related to orbit-finite nominal automata over the equality symmetry \cite{BojanczykKlinLasota2014}; automata and logics on data words and trees are surveyed in \cite{Segoufin2006}; and symbolic automata \cite{VeanesBjorner2012} obtain finiteness from decidable alphabet theories rather than from symmetry. A programming language for hereditarily orbit-finite data is described in \cite{BojanczykEtAl2012}, and \cite{BojanczykNguyenStefanski2024} identifies when orbit-finiteness is preserved by function spaces. Supported sets \cite{Wissmann2023} give an alternative finite representation layer for nominal sets and automata.

\paragraph{Logics, operational semantics, and bisimulation}
For the $\mu$-calculus with atoms, \cite[Lemma~4.9]{KlinLelyk2019} bounds the support of denotations by the supports of the formula, the model, and the environment; model checking over orbit-finite models is decidable \cite[Theorem~5.1]{KlinLelyk2019}, and orbit-finite parity games are solved through finite orbit quotients \cite[Lemma~7.3]{KlinLelyk2019}. Proposition~\ref{thm:mu-context} and Corollary~\ref{cor:finite-model-checking} are fixed-context forms of these facts for an arbitrary data symmetry. Nominal transition systems and their modal logics, including binding actions and finitely supported infinite conjunctions, are treated in \cite{ParrowEtAl2021}. Rule formats for nominal process calculi \cite{AcetoEtAl2019} guarantee equivariance of an operational semantics syntactically; Propositions~\ref{thm:sos-support} and~\ref{prop:policy-extension} instead record the semantic consequence of an $S$-supported rule set or side condition. Rational fixed points of endofunctors on nominal sets \cite{MiliusSchroderWissmann2016} and the general theory of bisimulation and coinduction \cite{Sangiorgi2012} provide the coalgebraic and fixed-point background for Section~\ref{sec:concurrency}.

\paragraph{Algebra and program analysis}
Universal algebra over nominal sets, with equivariant operations, is studied in \cite{KurzPetrisan2010}. Within FSS, nominal groups and their homomorphism theorems \cite{AlexandruCiobanu2014Groups}, multisets over infinite alphabets \cite{AlexandruCiobanu2015Multisets}, fuzzy sets \cite{AlexandruCiobanu2018Fuzzy}, and abstract interpretation with invariant correctness relations, representation functions, Galois connections, and widening/narrowing \cite{AlexandruCiobanu2016Abstract} were developed earlier. Sections~\ref{sec:ai} and~\ref{sec:resources} restate these constructions for an arbitrary data symmetry and at a fixed context, where the relevant predicate lattices are complete in the classical sense, and Proposition~\ref{prop:fs-fixed-points} records when internal, context, and ambient fixed points coincide. Abstract interpretation itself follows \cite{CousotCousot1977,CousotCousot1979}, and rewriting-based models of concurrency follow \cite{Meseguer1992}.

\section{Conclusion}

The FSS theory of Alexandru and Ciobanu supplies the central mathematical viewpoint of this paper: semantic constructions must remain finitely supported, completeness is relative to supported families, and termination depends on the appropriate notion of finiteness. We develop these structural properties over arbitrary data symmetries and arbitrary infinite atom sets. Transfer gives explicit bounds on the contexts of composite constructions. Internal completeness yields fixed points among supported predicates and identifies them with context fixed points and those of compatible ambient extensions. Uniform finiteness gives finite convergence; for Boolean predicates the number of context orbits on the carrier bounds the iteration length, even when the predicate lattice is orbit-infinite.

Restriction to $G_{(S)}$ connects this FSS development with nominal semantics at a fixed context. Keeping $S$ explicit also exposes how dependencies propagate between constructions and separates semantic existence from the additional hypotheses required for finite, effective representations.

The applications show these properties at work: determinization and Nerode quotients without orbit-finiteness; supported rule systems and bisimilarity with explicit iteration bounds; modal, quantitative, and abstract-interpretation semantics at a fixed context; resource algebras over orbit-infinite alphabets; and finite cell tables for homogeneous structures, illustrated by an authorization monitor whose two-atom policy context propagates to its language, its Nerode equivalence, and its six-letter compiled alphabet. Section~\ref{sec:directions} lists problems for each property; among them, support inference for program phrases, internal domain theory over data symmetries, and algorithms for sets with finite context levels but unbounded supports appear to be the most directly useful for verification.

\section*{Acknowledgements}
The author thanks Andrei Alexandru, his former PhD student, for his contributions to the development of the theory of finitely supported structures.

\end{document}